\newtheorem{theorem}{Theorem}[section]
\newtheorem{lemma}[theorem]{Lemma}
\newtheorem{corollary}[theorem]{Corollary}
\newtheorem{proposition}[theorem]{Proposition}
\newtheorem{definition}[theorem]{Definition}
\newtheorem{hypotheses}[theorem]{Hypotheses}
\newcommand{\abs}[1]{\left| #1 \right|}
\newcommand{\norm}[1]{\left\lVert #1 \right\rVert}
\newcommand{\ip}[2]{\left\langle #1 , \ #2 \right\rangle }
\newcommand{\dd}[2]{\frac{\mathrm{d} #1}{\mathrm{d} #2}}
\newcommand{\pd}[2]{\frac{\partial #1}{\partial #2}}
\newcommand{\dif}{\hspace{1pt} \mathrm{d}}
\newcommand{\df}{\mathrm{d}}
\DeclareMathOperator{\Diag}{Diag}
\DeclareMathOperator{\sech}{sech}
\let\Im\relax
\DeclareMathOperator{\Im}{Im}
\newcommand{\Rbb}{\mathbb{R}}
\newcommand{\Tbb}{\mathbb{T}}
\newcommand{\Zbb}{\mathbb{Z}}
\newcommand{\Bbf}{\bm{B}}
\newcommand{\Fbf}{\bm{F}}
\newcommand{\Jbf}{\bm{J}}
\newcommand{\bbf}{\bm{b}}
\newcommand{\ebf}{\bm{e}}
\newcommand{\hbf}{\bm{h}}
\newcommand{\nbf}{\bm{n}}
\renewcommand{\qbf}{\bm{q}}
\newcommand{\rbf}{\bm{r}}
\newcommand{\tbf}{\bm{t}}
\newcommand{\vbf}{\bm{v}}
\newcommand{\xbf}{\bm{x}}
\newcommand{\Asf}{\mathsf{A}}
\newcommand{\Bsf}{\mathsf{B}}
\newcommand{\Csf}{\mathsf{C}}
\newcommand{\Dsf}{\mathsf{D}}
\newcommand{\Ccal}{\mathcal{C}}
\newcommand{\Fcal}{\mathcal{F}}
\newcommand{\Ocal}{\mathcal{O}}
\newcommand{\xibf}{\bm{\xi}}
\newcommand{\I}{\mathrm{i}}
\newcommand{\phiic}{\phi^{(\text{IC})}}
\newcommand{\ac}[1]{a^{(#1)}}
\newcommand{\Nrho}{N_\rho}
\newcommand{\Ns}{N_s}
\newcommand{\edit}[2]{#2}
\shorttitle{Magnetic NAE: Ill-Posedness and Regularization}
\title{The High-Order Magnetic Near-Axis Expansion: Ill-Posedness and Regularization}
\author{Maximilian Ruth\aff{1}\corresp{\email{maximilian.ruth@austin.utexas.edu}}, https://orcid.org/0000-0003-1179-2472, Rogerio Jorge\aff{2}, https://orcid.org/0000-0003-2941-6571 \and David Bindel\aff{3}, https://orcid.org/0000-0002-8733-5799}
\affiliation{\aff{1}Center for Applied Mathematics, Cornell University,
Ithaca, NY 14853, USA
\aff{2}Department of Physics, University of Wisconsin-Madison, WI 53706, USA
\aff{3}Department of Computer Science, Cornell University,
Ithaca, NY 14853, USA}
\begin{document}

\maketitle

\begin{abstract}
When analyzing stellarator configurations, it is common to perform an asymptotic expansion about the magnetic axis. 
This so-called near-axis expansion is convenient for the same reason asymptotic expansions often are, namely, it reduces the dimension of the problem. 
This leads to convenient and quickly computed expressions of physical quantities, such as quasisymmetry and stability criteria, which can be used to gain further insight. 
However, it has been repeatedly found that the expansion diverges at high orders \edit{}{in the distance from axis}, limiting the physics the expansion can describe. 
In this paper, we show that the near-axis expansion diverges in vacuum due to ill-posedness and that it can be regularized to improve its convergence. 
Then, using realistic stellarator coil sets, we \edit{show that the near-axis expansion can converge to ninth order in the magnetic field, giving accurate high-order corrections to the computation of flux surfaces}{demonstrate numerical convergence of the vacuum magnetic field and flux surfaces to the true values as the order increases}.
We numerically find that the regularization improves the solutions of the near-axis expansion under perturbation, and we demonstrate that the radius of convergence of the vacuum near-axis expansion is correlated with the distance from the axis to the coils.
\end{abstract}


\section{Introduction}
The design of stellarators is a computationally intensive task.
The most basic problem in stellarator design -- that of computing the magnetic field -- requires solving the steady-state magnetohydrostatics equations (MHS).
These equations are difficult to solve for reasons familiar to many problems in physics: they are nonlinear and three-dimensional.
Popular MHS equilibrium solvers include VMEC \citep{Hirshman1983}, DESC \citep{dudt2020}, and SPEC \citep{Hudson2012}, all of which take on the order of seconds to minutes to compute a single equilibrium.
Beyond equilibrium solving, there are potentially many other stellarator objectives that are expensive to compute, with plasma stability metrics being a major example.
When optimizing for stellarators, the costs of equilibrium and objective solving can limit the speed of the overall design process.
This, in combination with the high dimensionality of specifying 3D fields, motivates a need for simpler alternatives.

Recently, the \textit{near-axis expansion} \citep{mercier_equilibrium_1964,solovev_plasma_1970} has gained traction as an alternative to full 3D MHS solvers.
The near-axis expansion works by asymptotically expanding all of the relevant plasma variables (such as magnetic coordinates, pressure, rotational transform, and plasma current) in the distance from the magnetic axis, which is assumed to be small relative to a characteristic magnetic scale length.
The resulting equations are a hierarchy of one-dimensional ODEs, which can be solved orders of magnitude faster than 3D equilibria.
This allows for one to quickly find large numbers of optimized stellarators \citep{landreman-mapping-2022,giuliani_direct_2024}, something that was previously unavailable to the stellarator community.

In addition to the speed, the near-axis expansion has other benefits.
For instance, in \citet{garren1991} it was shown that quasisymmetry imposes more constraints than free parameters in the expansion, leading to the conjecture that non-axisymmetric but perfectly quasisymmetric stellarators cannot exist. 
Many objectives have been defined and computed for the near-axis expansion, including quasisymmetry \citep{Landreman2019}, quasi-isodynamicity \citep{mata_direct_2022}, isoprominence \citep{burby2023}, and Mercier and magnetic-well conditions for stability \citep{landreman_well_2020,kim2021}.
There is evidence that other higher-order effects such as ballooning and linear gyrokinetic stability could be investigated as well \citep{jorge-turbulence-2020}.
The near-axis expansion has also been combined with a type of quadratic flux minimizing surfaces and coil optimization to create free-field optimized QA equilibria \citep{giuliani_direct_2024}.
In sum, the connection between easily expressed objectives, a relatively low-dimensional equilibrium description, and fast computation has led to the increased use \edit{axis expansion}{of the near-axis expansion}.

However, the near-axis expansion is not without drawbacks.
The primary drawback is fundamental: the expansion has limited accuracy far from the axis.
For instance, in the ``far-axis'' regime, there can be large errors in the magnetic shear and magnetic surfaces can self-intersect \citep{landreman-mapping-2022}.
The paper by \citet{jorge-turbulence-2020} also indicates that higher-order terms may be needed for stability; such as magnetic curvature terms.
Unfortunately, attempts to use higher-order terms have resulted in divergent asymptotic series, limiting the accuracy to small plasma volumes.
Most series go to first, second, or sometimes third order in the distance from the axis in the relevant quantities, with any more terms typically reducing accuracy rather than improving it.
Therefore, if we want to include more physics objectives over larger volumes in the near-axis expansion, we must overcome the issue of series divergence.

Unfortunately, the issue of divergence is confounded by many of the assumptions that can be incorporated into the near-axis expansion. 
The most extreme case is that of QA stellarators, where it has been shown that the system of equations for QA is overdetermined beyond third order in the expansion. 
Obviously, unless one relaxes the problem \citep[e.g.~via anisotropic pressure;][]{rodriguez2021}, one cannot generally ask for a convergent QA near-axis expansion in such a circumstance.
In the simpler case of non-quasisymmetric stellarators with smooth pressure gradients and nested surfaces, it is still unknown whether there are non-axisymmetric solutions to MHS \citep{grad1967,constantin2021}. 
Recent work has found that perturbing for small force \citep{constantin_quasisymmetric_2021} or non-flat metrics \citep{cardona_asymmetry_2024} allow for integrable solutions, but currently, there is no guarantee of solutions of MHS, let alone convergent asymptotic expansions.

So, to begin the task of building convergent numerical methods for the near-axis expansion, we focus on a problem we know is solvable: Laplace's equation for a vacuum magnetic potentials following \citet{jorge-arbitrary-order-2020}.
This can be solved in direct (Mercier) coordinates \citep{mercier_equilibrium_1964} with no assumption of nested surfaces.
Additionally, because solutions of Laplace's equation are real analytic, there exist near-axis expansions of the equation that converge within a neighborhood of the axis.
Despite these guarantees, even the near-axis expansion of Laplace's equation diverges.

In this paper, we show that the vacuum near-axis expansion diverges for a reason: \edit{namely that }{}Laplace's equation as a near-axis expansion is ill-posed (\S\ref{sec:ill-posedness}, following background in \S\ref{sec:background}).
To address this issue, we introduce a small regularization term to Laplace's equation and expand to find a regularized near-axis expansion.
We do this by including a viscosity term to Laplace's equation that damps the highly oscillatory unstable modes responsible for the ill-posedness.
By appropriately bounding the input of the near-axis expansion in a Sobolev norm, we prove that this term results in a uniformly converging near-axis expansion within a neighborhood of the axis.

Following the theory, we describe a pseudo-spectral method for finding solutions to the near-axis expansion to arbitrary order in \S\ref{sec:numerical-method}.
In \S\ref{sec:examples}, we use the numerical method to show two examples of high-order near-axis expansions: the rotating ellipse and Landreman-Paul \citep{landreman-paul}.
We find that the near-axis expansion magnetic field, rotational transform, and magnetic surfaces can converge accurately near the axis for unperturbed initial data.
The region of convergence is observed to be dictated by the distance from the magnetic axis to the coils.
Then, by perturbing the on-axis inputs, we show that the regularized expansion obeys Laplace's more accurately farther from the axis.
Finally, we conclude in \S\ref{sec:conclusion}. 

\section{Background}
\label{sec:background}
In this section, we introduce the near-axis expansion for the vacuum field equilibrium problem. 
This presentation follows closely with \citet{jorge-arbitrary-order-2020}. 
We begin with a discussion of the geometry of the near-axis problem (\S\ref{subsec:NA-Geometry}), introduce the near-axis expansion (\S\ref{subsec:nae}), define the magnetic field problem (\S\ref{subsec:Vacuum-Fields}), and finally discuss finding straight field-line magnetic coordinates (\S\ref{subsec:Magnetic-Coordinates}).
For a more full discussion of the near-axis expansion to all orders, including with pressure gradients, we recommend the papers by \citet{jorge-arbitrary-order-2020} and \citet{sengupta_stellarator_2024}.
For ease of reference, we have summarized the equations in the background in Box \eqref{eq:nae-box} for the magnetic field and Box \eqref{eq:fieldline-box} for magnetic coordinates.

\subsection{Near-Axis Geometry}
\label{subsec:NA-Geometry}
We define a magnetic axis as a $C^\infty$ closed curve $\rbf_0 : \Tbb \to \Rbb^3$ with $\rbf_0' \neq 0$ and a nonzero tangent magnetic field (see \S\ref{subsec:Vacuum-Fields} for details about the field). 
We define a near-axis domain about the axis with radius $\sigma$ as
\begin{equation*}
    \Omega_\sigma = \left\{\rbf \in \Rbb^3 \mid  \forall s \in \Tbb, \ \norm{\rbf - \rbf_0(s)} < \sigma \right\}.
\end{equation*}
We note that the assumption that the axis is infinitely differentiable is necessary for the near-axis expansion to formally go to arbitrary order, and we will eventually reduce the required regularity for the inputs of the regularized expansion, summarized in Box \eqref{eq:reg-box}.

\begin{figure}
    \centering
    \includegraphics[width=0.6\linewidth]{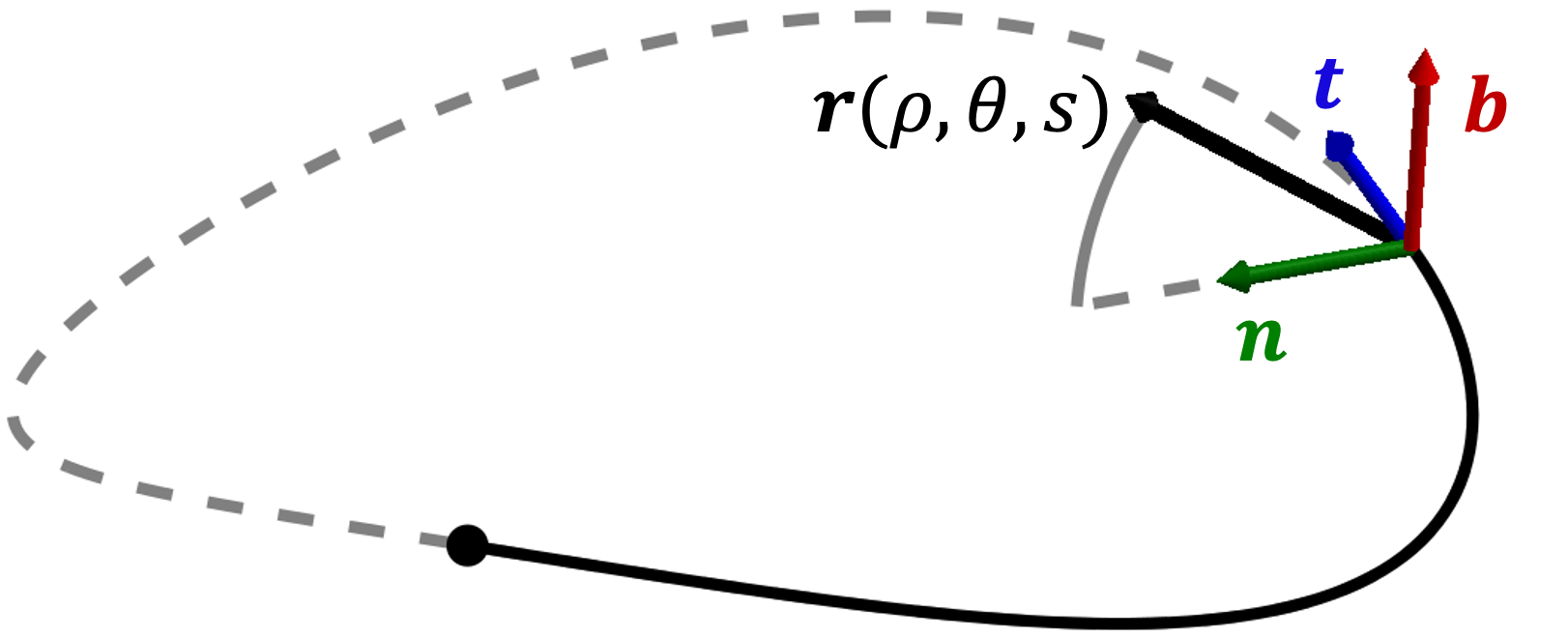}
    \caption{Schematic of the direct near-axis Frenet-Serret coordinate frame.}
    \label{fig:frenet-serret}
\end{figure}
We define a direct coordinate system $\rbf : \Omega_\sigma^0 \to \Omega_\sigma$ where $(\rho,\theta,s) \in \Omega_\sigma^0 = [0,\sigma) \times \Tbb^2$ is the solid torus as (see Fig.~\ref{fig:frenet-serret})
\begin{equation}
\label{eq:general-coord}
    \rbf(\rho, \theta, s) = \rbf_0(s) + Q(s) \begin{pmatrix}
        x \\ y \\ 0
    \end{pmatrix}, \qquad 
        (x, \ y) = (\rho \cos \theta, \  \rho \sin \theta ),
\end{equation}
where $Q$ is an orthonormal basis for the local coordinates at the axis with the tangent vector in the third column, i.e.~$\tbf = \rbf_0' / \ell' = Q\ebf_3$ where we assume $\ell' = \norm{\bm r_0'} > 0$. 
Both the $\xbf = (x,y,s)$ and the $\qbf = (\rho,
\theta,s)$ coordinate frames are useful, as $\xbf$ is non-singular with respect to the coordinate transformation while $\qbf$ diagonalizes the near-axis expansion operator.
To perform calculus in the $\qbf$ basis, we require the induced metric from $\Rbb^3$. 
For this, we first compute the coordinate derivative
\begin{equation}
\label{eq:general-F}
    F = \dd{\rbf}{\qbf} = Q \begin{pmatrix}
        \cos \theta & - \rho \sin \theta & 0\\ 
        \sin \theta & \rho \cos \theta & 0\\
        0 & 0 & \ell' 
    \end{pmatrix} + Q K \xbf \ebf_3^T,
\end{equation}
where $K$ is an antisymmetric matrix determining the derivative of $Q$ in the near-axis basis $Q'(s) = Q K$.
Using this matrix, the metric is computed as 
\begin{equation*}
    g = F^T F.
\end{equation*}

For the numerical examples in this paper, we specifically consider the Frenet-Serret frame, meaning the local basis $Q$ and its derivative are defined by
\begin{equation*}
    Q = \begin{pmatrix}
        | & | & | \\
        \nbf & \bbf & \tbf \\
        | & | & |
    \end{pmatrix}, \qquad K = \ell' \begin{pmatrix}
        0 & -\tau & \kappa \\ 
        \tau & 0 & 0 \\
        -\kappa & 0 & 0
    \end{pmatrix},
\end{equation*}
where $\kappa = \norm{\tbf'}/\ell'$ is the axis curvature, $\nbf = \norm{\tbf'}/(\kappa \ell')$ is the normal vector, $\bbf = \tbf \times \nbf$ is the binormal vector, and $\tau = -\norm{\bbf'}/\ell'$ is the axis torsion.
Alternative forms of the curvature and torsion are
\begin{equation*}
    \kappa = \frac{\norm{\rbf_0' \times \rbf_0''}}{(\ell')^3}, \qquad \tau = \frac{(\rbf_0' \times \rbf_0'') \cdot \rbf_0'''}{\norm{\rbf_0' \times \rbf_0''}^2}.
\end{equation*}
For the Frenet-Serret coordinate system to be well-defined and non-singular on $\Omega_\sigma^0$, we require $\sigma^{-1} > \kappa > 0$.
In particular, no straight segments are allowed in the Frenet-Serret frame, disallowing quasi-isodynamic (QI) stellarators \citep{mata_direct_2022}. 
An alternative choice for axis coordinates that allow for straight segments is Bishop's coordinates \citep{bishop1975,duignan_normal_2021}.

Replacing the Frenet-Serret basis into \eqref{eq:general-F}, we obtain
\begin{equation*}
    F = Q \begin{pmatrix}
        \cos \theta & - \rho \sin \theta & -\ell' \tau \rho \sin \theta \\ 
        \sin \theta & \rho \cos \theta  & \ell' \tau \rho \cos \theta \\
        0 & 0 & h_s
    \end{pmatrix},\qquad 
    g = \begin{pmatrix}
        1 & 0 & 0 \\
        0 & \rho^2 & \ell' \tau \rho^2 \\
        0 & \ell' \tau \rho^2 & (\ell')^2\tau^2 \rho^2 + h_s^2
    \end{pmatrix},
\end{equation*}
where 
\begin{equation*}
    h_s(\rho,\theta,s) = \ell' (1 - \kappa \rho \cos \theta).
\end{equation*}
The local volume ratio is given by
\begin{equation*}
    \sqrt g = \det F = \rho h_s,
\end{equation*}
and the inverse metric is 
\begin{equation*}
    g^{-1} = \begin{pmatrix}
        1 & 0 & 0 \\
        0 & (\ell')^2 \tau^2 h_s^{-2} + \rho^{-2} & -\ell' \tau h_s^{-2} \\
        0 & -\ell' \tau h_s^{-2} & h_s^{-2}
    \end{pmatrix}.
\end{equation*}

To find metrics associated with the more general coordinate system \eqref{eq:general-coord}, we consider transformations of the form $(\rho,\theta,s) \mapsto (\rho,\omega(\theta,s),s)$ where $\omega(\theta,s) =\theta + \lambda(s)$. 
This transformation rotates the orthonormal frame, changing the metric to
\begin{equation*}
        g = \begin{pmatrix}
        1 & 0 & 0 \\
        0 & \rho^2 & \ell' T \rho^2 \\
        0 & \ell' T \rho^2 & (\ell')^2T^2 \rho^2 + h_s^2(\rho,\omega-\lambda,s),
    \end{pmatrix}, \qquad T = \tau - \frac{\lambda'}{\ell'}.
\end{equation*}
In this way, the general set of near-axis frames can be represented by a simple replacement of $\tau$ with $T$.
A special case of this transformation is when $T=0$, yielding \citep{mercier_equilibrium_1964}
\begin{equation*}
    \lambda = \int_{0}^s \tau(s) \ell'(s) \dif s.
\end{equation*}
In this coordinate frame, the metric becomes diagonal: $g = \Diag(1, \rho^2, h_s^2(\rho,\omega-\lambda,s))$. 
The fact that $g$ is diagonalized is convenient for theoretical manipulations, but variables expressed in terms of $\omega$ are multivalued for curves with non-integer total torsion.
This unfortunate consequence is important for numerical methods, as Fourier series can not be used in $s$, and additional consistency requirements are necessary. 
This is part of the motivation for using the Frenet-Serret frame for the numerical examples herein.

\subsection{The Near-Axis Expansion}
\label{subsec:nae}
Now, we consider expansions of functions $A \in C^\infty(\Omega_\sigma^0)$ about the magnetic axis.
We formally expand in small distances from the axis $\rho \ll \min \kappa^{-1}$ as
\begin{gather*}
    A(\rho, \theta, s) = \sum_{m=0}^{\infty}A_m(\theta,s) \rho^m, \qquad A_m(\theta, s) = \sum_{n=0}^m A_{mn}(s) e^{(2n-m)\I\theta}, \\
    A_{mn}(s) = \sum_{k\in\Zbb} A_{mnk}e^{\I k s}.
\end{gather*}
This expansion is not guaranteed to converge anywhere for $A\in C^\infty$, but it is asymptotic to $A$ near the axis, i.e.
\begin{equation*}
    \abs{A - A_{<m}} = \Ocal(\rho^{m}),
\end{equation*}
where we define $A_{<m}$ as the partial sum
\begin{equation*}
    A_{<m} = \sum_{n=0}^{m-1} A_n \rho^n.
\end{equation*}
In defining magnetic coordinates, we also find it convenient to expand $A$ in $\bm x$ as
\begin{equation*}
    A(x,y,s) = \sum_{\mu=0}^{\infty} \sum_{\nu = 0}^\mu A_{\mu \nu}(s) x^{\mu-\nu} y^\nu,
\end{equation*}
where we use Latin indices for the $\qbf$ frame and Greek indices for the $\xbf$ frame. 
If we require $A$ to be real-analytic on $\Omega_\sigma^0$, there additionally exists a $\sigma' < \sigma$ so that the the asymptotic series converges uniformly on $\Omega_{\sigma'}^0$ (this does not necessarily extend to all of $\Omega_\sigma^0$). 

Throughout this paper, we attempt to minimize the number of complicated summation formulas resulting from the near-axis expansion (NAE). 
For instance, if we have two functions $A, B\in C^{\infty}$ and we want the $m$th component of the series of $C=AB$, we will write $C_{m} = (AB)_{m}$, rather than
\begin{equation*}
    C_{m} = \sum_{\ell = 0}^m A_{m-\ell} B_\ell.
\end{equation*}
As expressions become increasingly complicated, this notation provides a concise description of the mathematics involved.
In Appendix \ref{app:operations}, we define a number of relevant operations on series that the interested reader can use to expand the expressions within this paper.
In \S\ref{sec:numerical-method}, we discuss how this is similarly convenient for the purposes of programming NAE operations.
Rather than implementing residuals via complicated summation formulas, the operations in Appendix \ref{app:operations} are called, allowing for a simple framework for developing new code.

An important exception to the general rule of condensing notation is in defining any linear operators that must be inverted through the course of an asymptotic expansion. 
Detailed understanding of such operators are necessary for both numerical implementation and analysis on the series.

\subsection{Vacuum Fields}
\label{subsec:Vacuum-Fields}
In steady state, the vacuum magnetic field $\Bbf\in C^\infty(\Omega_\sigma^0)$ satisfies
\begin{equation*}
    \nabla \cdot \Bbf = 0, \qquad \Jbf = \nabla \times \Bbf = 0,
\end{equation*}
where $\Jbf$ is the plasma current density.
The fundamental near-axis assumption is that $\Bbf$ evaluated on the axis is nonzero and parallel to the magnetic axis, i.e.~for some $B_0 \in C^{\infty}(\Tbb)$
\begin{equation*}
    \Bbf(0,\theta,s) = B_0(s) \tbf(s).
\end{equation*}
Off the axis, we express the magnetic field on $\Omega_\sigma^0$ as
\begin{equation}
\label{eq:magnetic-scalar-potential}
    \Bbf = \nabla \phi + \tilde{\Bbf}, \qquad \tilde{\Bbf} = \nabla \left[ \int_{0}^s B_0(s') \ell'(s') \dif s'\right] = B_0(s)\ell'(s) \nabla s,
\end{equation}
where $\phi\in C^{\infty}(\Omega_\sigma^0)$ is the magnetic potential satisfying $\left.\nabla \phi\right|_{\rho=0} = 0$.
Then, taking the divergence of \eqref{eq:magnetic-scalar-potential}, we find the magnetic scalar potential satisfies Poisson's equation
\begin{equation}
\label{eq:poisson}
    \Delta \phi = - \nabla \cdot \tilde{\Bbf}.
\end{equation}

By construction, the field $\Bbf$ in \eqref{eq:magnetic-scalar-potential} is locally the gradient of some function, so it is curl-free. 
This means the contributions from $\tilde{\Bbf}$ in \eqref{eq:magnetic-scalar-potential} can locally be absorbed to recover Laplace's equation for the potential.
However, because $\Omega_\sigma$ is not simply connected and $\bm B \cdot \bm t$ is single-valued on axis, the closed-loop axis integral $\oint \Bbf \cdot \df \bm \ell$ demonstrates that it is not possible for $\Bbf$ to be globally the gradient of a single-valued function $\phi$. 
In contrast, the Poisson's equation formulation contains only single-valued functions, which is convenient both numerically and analytically.

In coordinates, we can write the magnetic field in \eqref{eq:magnetic-scalar-potential} as
\begin{equation*}
    B^i = g^{ij} \pd{\phi}{q^j} + \tilde{B}^i, \qquad \tilde{B}^i = B_0 \ell' g^{ij}\pd{s}{q^j},
\end{equation*}
where we assume summation over repeated indices and $g_{ij}$ and $g^{ij}$ are the components of the metric and inverse metric respectively. 
Then, Poisson's equation in coordinates becomes
\begin{equation}
\label{eq:laplace-coordinate}
    \sqrt{g}^{-1}\pd{}{q^i}\left( \sqrt{g} g^{ij} \pd{\phi}{q^j} \right) = - \sqrt{g}^{-1} \pd{}{q^i}\left(\sqrt{g} \tilde{B}^i \right).
\end{equation}
Multiplying this by $h_s$ and expanding this in the $\bm q$ coordinate system, we have
\begin{multline}
\label{eq:laplace-unexpanded}
    \frac{1}{\rho^2}\left[ \rho \pd{}{\rho}\left(\Asf \rho \pd{\phi}{\rho} \right)
    + \pd{}{\theta}\left(\Bsf \pd{\phi}{\theta}\right)\right] =  \\ - \pd{}{\theta}\left( \Csf \left( \pd{\phi}{s} + B_0 \ell'\right) \right) - \pd{}{s}\left( \Csf \pd{\phi}{\theta} \right) - \pd{}{s} \left(\Dsf \left(\pd{\phi}{s} + B_0 \ell' \right)  \right),
\end{multline}
where 
\begin{align*}
    \Asf &= h_s, & \Bsf &= h_s + \rho^2 h_s^{-1} (\ell')^2 \tau^2 , & \Csf &= -h_s^{-1} \ell' \tau, & \Dsf &= h_s^{-1}.
\end{align*}

From here, we can substitute the asymptotic expansions of $\phi$ and the coefficients $\Asf$, $\Bsf$, $\Csf$, and $\Dsf$ into \eqref{eq:laplace-unexpanded}.
At each order in $\rho$, Poisson's equation becomes
\begin{align}
\label{eq:laplace-near-axis}
    \ell' \left(m^2 + \pd{^2}{\theta^2}\right) \phi_{m} =& - (\nabla \cdot \tilde{B})_{m-2} -(\Delta \phi_{<m})_{m-2}\\
\nonumber
    =&- m(m-1) \Asf_1 \phi_{m-1} - \left[\pd{}{\theta}\left(\Bsf \pd{\phi_{<m}}{\theta} \right)\right]_m \\
\nonumber
    &- \left[ \pd{}{\theta}\left( \Csf \left(\pd{\phi_{<m}}{s} + B_0 \ell' \right) \right) + \pd{}{s}\left( \Csf \pd{\phi_{<m}}{\theta} \right)\right]_{m-2} \\
\nonumber
    &- \left[\pd{}{s} \left(\Dsf \left( \pd{\phi_{<m}}{s} + B_0 \ell'\right)  \right)\right]_{m-2},
\end{align}
where $\Asf_1 = (h_s)_1 = -\ell' \kappa \cos \theta$. The right-hand side of \eqref{eq:laplace-near-axis} does not depend on orders of $\phi$ higher than $\phi_{m-1}$, so inverting the left-hand-side operator gives an iteration for obtaining $\phi_m$ at each order.

However, the operator $\ell' (m^2 + \pd{^2}{\theta^2}) $ is singular, so we must confirm there are no secular terms.
Specifically, we always have an unknown homogeneous solution at $\Ocal(m)$ of the form $\phi_{m0} e^{-i m \theta} + \phi_{mm} e^{i m \theta}$. 
For this, we use Fredholm's alternative, which states that the \eqref{eq:laplace-near-axis} is solvable if the right-hand side is orthogonal to the null space of the adjoint of the operator on the left-hand side. 
Because the operator is self-adjoint, the right-hand side must be orthogonal to $e^{\pm i m \theta}$, i.e.
\begin{multline*}
    \left\langle m(m-1) \Asf_1 \phi_{m-1} + \left[\pd{}{\theta}\left(\Bsf_{>0} * \pd{\phi_{<m}}{\theta} \right)\right]_m \right. + \left[ \pd{}{\theta}\left( \Csf \left( \pd{\phi_{<m}}{s} + B_0 \ell'\right) \right)\right]_{m-2}\\
      + \left.\left[\pd{}{s}\left( \Csf \pd{\phi}{\theta} \right) + \pd{}{s} \left(\Dsf \left( \pd{\phi_{<m}}{s} + B_0 \ell'\right)  \right)\right]_{m-2}, \ e^{\pm i m \theta}\right\rangle=0,
\end{multline*}
where the inner product is defined by
\begin{equation*}
    \ip{f}{g} = \int_{0}^{2\pi} f(\theta)\overline{g}(\theta) \dif \theta.
\end{equation*}
The $m-2$ coefficient of any analytic function is orthogonal to $e^{i m \theta}$, so we can remove the $\Csf$ and $\Dsf$ terms. 
The same argument allows us to remove the torsion terms in $\Bsf$. 
This mean only contributions from $\Asf_1 = \Bsf_1 = - \ell' \kappa \cos \theta$ and $\phi_{m-1}$  survive, so we only need to verify
\begin{align*}
    \ip{m(m-1) \phi_{m-1} \cos \theta + \left[\pd{}{\theta}\left(\pd{\phi_{m-1}}{\theta} \cos \theta \right)\right]_m }{e^{\pm i m \theta}} = 0.
\end{align*}
Using the identity $\cos \theta = (e^{i \theta} + e^{-i \theta})/2$, a quick calculation confirms the above identity holds.

Now, we consider the problem where $\phi$ is unknown.
The fact that the Fredholm condition is automatically satisfied at each order implies that $\phi_{m0}$ and $\phi_{mm}$ are free parameters at each order in the near-axis expansion. 
So, these coefficients are an infinite-dimensional set of initial conditions for the near-axis expansion of Poisson's equation.
Intuitively, one can think of the coefficients $\phi_{m0}$ and $\phi_{mm}$ as specifying the Fourier coefficients of an infinitely thin tube about the magnetic axis.
In this way, the imposition of conditions at each order compensates for the fact that PDEs typically satisfy conditions on co-dimension 1 surfaces, whereas the near-axis expansion is specified on a co-dimension 2 curve. 

In addition to $\phi_{m0}$ and $\phi_{mm}$ as free parameters, we also treat $\rbf_0$ and $B_0$ as inputs to the near-axis problem. 
The requirement that the magnetic field be tangent to the axis with magnitude $B_0$ results in a constraint that $\phi_{00} = \phi_{10} = \phi_{11} = 0$.
In total, the direct vacuum near-axis problem can be written as

\begin{empheq}[box={\fboxsep=6pt\fbox}]{equation}
\label{eq:nae-box}
\begin{aligned}
     &\text{input: }&& \text{axis } \rbf_0\in C^{\infty} \text{, on-axis field } B_0\in C^\infty, \\
    &&&                \text{higher moments } \phi_{m0}, \, \phi_{mm} \in C^\infty \text{ for } m \geq 2, \\
     &\text{assuming: }&& \ell' > 0, \ \kappa > 0, \ B_0 > 0, \ \phi_{00} = \phi_{10} = \phi_{11} = 0, \\
     &\text{solve: }&& \ell' \left(m^2 + \pd{^2}{\theta^2}\right) \phi_{m} = - (\nabla \cdot \tilde{\Bbf})_{m-2} -(\Delta \phi_{<m})_{m-2}, \\
     &\text{output:} && \text{potential }\phi \text{, magnetic field } \Bbf = \nabla \phi + \tilde{\Bbf}.
\end{aligned}
\end{empheq}

\subsection{Straight Field-Line Coordinates: Leading Order}
\label{subsec:Magnetic-Coordinates}
Given a solution magnetic field from Box \ref{eq:nae-box}, we consider the problem of finding straight field-line magnetic coordinates. 
We assume that the magnetic field is locally elliptic about the axis and the rotation number is irrational, so that the leading-order behavior is rotation about the magnetic axis.
This means that both hyperbolic orbits (x-points) and on-axis resonant perturbations are excluded from this work.
In the language of Hamiltonian normal forms, the \edit{leading order Hamiltonian is a Harmonic oscillator}{leading order field-line dynamics are conjugate to a non-resonant harmonic oscillator}; see \citet{burby_integrability_2021} and \citet{duignan_normal_2021} for a more rigorous derivation of magnetic coordinates in the near-axis expansion.
We note that our process of finding coordinates is formal: we make no claims that this problem converges in the limit. 
However, in Sec.~\ref{sec:examples}, we find that this procedure appears to converge well numerically.

To find magnetic coordinates, we attempt to build a conjugacy between magnetic field-line dynamics $\dot{\rbf} = (B^s)^{-1} \Bbf(\rbf)$ and straight field-line dynamics $\dot{\xibf} = (- \iota(\psi) \eta, \, \iota(\psi) \xi, \,  1)$, where $\xibf = (\xi, \, \eta, \, s) \in \Rbb^2 \times \Tbb$ are Cartesian coordinates, $\psi=\xi^2 + \eta^2$ is a flux-like coordinate, and $\iota$ is the rotational transform.
To make the connection with straight field-line coordinates precise, consider the transformation to polar coordinates $(\xi, \eta) \to \sqrt{\psi}(\cos \gamma, \sin \gamma)$. 
Then, the field-line is traced by $(\dot{\psi},\, \dot{\gamma},\, \dot{s}) = (0, \, \iota(\psi), \, 1)$, i.e.~magnetic field lines are straight with slope $\iota$.
However, we use the Cartesian version of magnetic coordinates because it removes the coordinate singularity associated with polar coordinates, simplifying the following steps.

\begin{figure}
    \centering
    \includegraphics[width=1.0\linewidth]{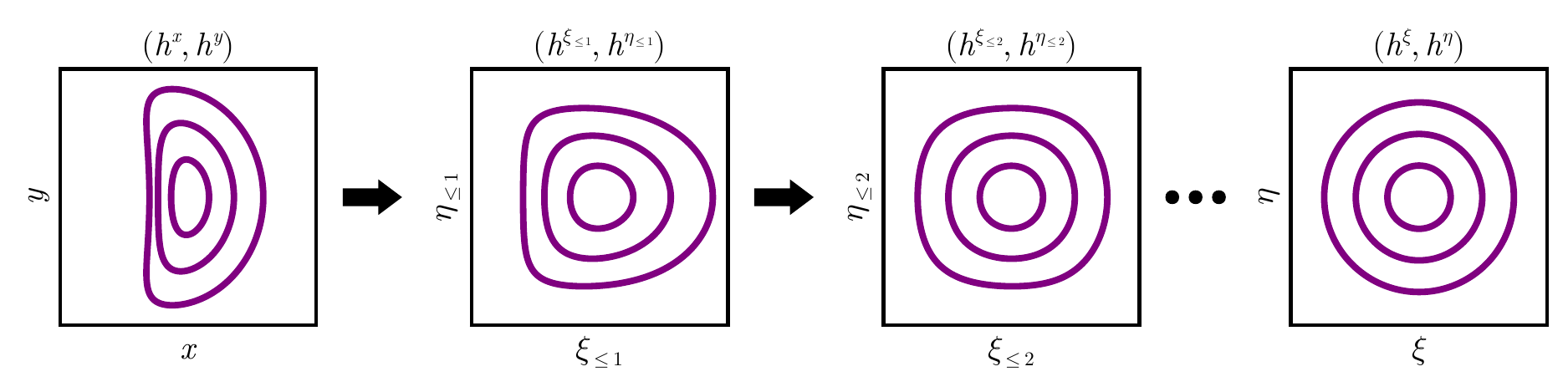}
    \caption{A schematic of the process of finding straight field-line coordinates. 
    On the left, we plot the surfaces of the magnetic field $(h^x,h^y)$ on a cross-section for fixed $s$. 
    Moving one plot to the right, the leading correction transforms to a coordinate frame where the main elliptic component is eliminated.
    Going one further, the next correction accounts for the most prominent triangularity.
    This process continues until, in $(\xi,\eta)$ coordinates, the magnetic surfaces are nested circles. }
    \label{fig:HamNoFo}
\end{figure}
There are two main steps to our process of finding magnetic coordinates: the leading-order problem and the higher-order problems (see Fig.~\ref{fig:HamNoFo} for a sketch of the process).
If we use the notation $\tilde{\xbf} = (x,\, y)$ and $\tilde{\xibf} = (\xi, \, \eta)$ for the out-of-plane coordinates, the leading order transformation takes the form
\begin{equation}
\label{eq:G0}
    \tilde{\xibf}_1 = \pd{\tilde{\xibf}_1}{\tilde{\xbf}}\tilde{\xbf} = G_0 \tilde{\xbf} = \begin{pmatrix}
        \xi_{10} x + \xi_{11} y \\
        \eta_{10} x + \eta_{11} y
    \end{pmatrix}.
\end{equation}
We will find that the problem for $G_0$ is an eigenvalue problem for the on-axis rotation number, as is typical for the linearized dynamics about a fixed point.
In the following section, we will discuss the inductive step to higher orders.

To begin, consider the contravariant form of the Cartesian near-axis magnetic field
\begin{equation*}
\frac{1}{B^s} \bm B = \frac{1}{B^s} \dd{\rbf}{\xbf} \begin{pmatrix}
    B^x \\ B^y \\ B^s
\end{pmatrix} = \frac{B^x}{B^s} \pd{\rbf}{x} + \frac{B^y}{B^s} \pd{\rbf}{y} + \pd{\rbf}{s}.
\end{equation*}
We would like to equate this to the straight field-line dynamics as
\begin{equation*}
    \frac{1}{B^s} \Bbf = \dd{\rbf}{\xibf} \begin{pmatrix}
        - \iota(\psi) \eta \\ \iota(\psi) \xi \\ 1
    \end{pmatrix},
\end{equation*}
where $\iota$ depends smoothly upon the radial label as
\begin{equation*}
    \iota = \iota_0 + \iota_2 \psi + \iota_4\psi^2 + \dots,
\end{equation*}
where we emphasize $\iota_\mu = 0$ for odd $\mu$. 
Multiplying both sides by $\df \xibf / \df \rbf$, we find the Floquet conjugacy problem
\begin{equation}
\label{eq:field-conjugacy}
    \pd{\tilde{\xibf}}{s} = -G(\xibf) \tilde{\hbf}^{\xbf} +\iota(\psi(\xibf)) J \tilde{\xibf},
\end{equation}
where
\begin{equation*}
    G = \pd{\tilde{\xibf}}{\tilde{\xbf}}, \qquad \tilde{\hbf}^{\xbf} = \begin{pmatrix}
        h^x \\ h^y
    \end{pmatrix} = \frac{1}{B^s} \begin{pmatrix}
        B^x \\ B^y
    \end{pmatrix}, \qquad J =  \begin{pmatrix}
        0 & -1 \\ 1 & 0
    \end{pmatrix},
\end{equation*}
where $J$ is known as the symplectic matrix.
Our problem is to solve \eqref{eq:field-conjugacy} for $\xi(x,y)$, $\eta(x,y)$, and $\iota(\psi)$. 

To find the leading-order problem for \eqref{eq:field-conjugacy}, we note the magnetic field $(h^x,h^y)$ is linear at leading order:
\begin{equation*}
    \tilde{\hbf}^{\xbf} = \begin{pmatrix}
        h^x_{10} x + h^x_{11} y \\
        h^y_{10} x + h^y_{11} y
    \end{pmatrix} + \Ocal(\psi) = H_0(s) \tilde{\xbf} + \Ocal(\psi).
\end{equation*}
Substituting this, Equation \eqref{eq:G0}, and $\iota = \iota_0 + \Ocal(\psi)$ into \eqref{eq:field-conjugacy}, we have
\begin{equation}
\label{eq:leading-conjugacy}
    \pd{G_0}{s} + G_0 H_0 = \iota_0 J G_0.
\end{equation}
The leading order problem \eqref{eq:leading-conjugacy} is a Floquet eigenvalue problem for the linearized field-line dynamics about the magnetic axis.
Assuming that the near-axis expansion is elliptic at leading order, the value of $\iota_0$ is real.
Otherwise, $\iota_0$ is not real, meaning ellipticity can be numerically verified for a given input.

There are many equivalent solutions to \eqref{eq:leading-conjugacy}, owing to the symmetries that if $(G_0, \iota_0)$ satisfies \eqref{eq:leading-conjugacy}, then the following are also solutions:
\begin{align}
\label{eq:iota-transformations}
    (R(ns) G_0, n + \iota_0), \qquad
    (J G_0, \iota_0), \qquad
    \left(\begin{pmatrix}
        0 & 1 \\
        1 & 0
    \end{pmatrix} G_0, -\iota_0\right),
\end{align}
where $R$ is a rotation matrix
\begin{equation*}
    R(\theta) = \begin{pmatrix}
        \cos\theta & - \sin\theta \\ \sin\theta & \cos\theta
    \end{pmatrix}.
\end{equation*}
The question of which solution to choose is then a question of practicalities.
\edit{Typically, $\iota$ is chosen so that it agrees with the winding number of the magnetic field about the magnetic axis in real space.
However, o}{For instance, one could choose the rotational transform corresponding to a non-twisting right-handed coordinate frame $\xibf_{1}$. 
Here, ``non-twisting'' means that closed coordinate lines near the axis, implicitly defined as curves $\bm r(x,y,s)$ where $\tilde{\xibf}(x,y,s) \neq 0$ is held constant, can be continuously deformed to a point on $\Rbb^3 \backslash \bm r_0$.
In other words, the coordinates lines do not link with the axis.
In this frame, $\iota_0$ agrees with the intuitive definition of the rotational transform as the limiting ratio of poloidal turns divided by toroidal turns of fieldlines about the axis. O}ther choices may have other benefits, e.g.~there may be an eigenfunction that $G_0$ behaves best numerically. In this paper, we opt for an option that is easy to implement: we take the real solution where $\iota_0$ has the smallest magnitude and $\det G_0$ is positive. 
From here, other equivalent coordinates can easily be found by applying the transformations in \eqref{eq:iota-transformations}.

For the scaling of $G_0$, we choose $\psi$ to be the actual magnetic flux at leading order. 
The formula for the flux is
\begin{equation*}
    \psi = \int_{\rbf(D_\psi, 0)} \Bbf \cdot \tbf \dif A = \int_{\rbf(D_\psi, 0)} B_s \dif A(\rbf),
\end{equation*}
where $D_\psi = \{(\xi,\eta) \mid \xi^2 + \eta^2 < \psi\}$ and
\begin{equation*}
    B_s = g_{sj} B^j = \pd{\phi}{s} + B_0 \ell'.
\end{equation*}
Pulling this back to the $\xibf$ frame, we have
\begin{equation}
\label{eq:flux}
    \psi = \int_{D_\psi}B_s(\xbf(\xibf, s), 0) (\det G)^{-1} \dif A(\xibf).
\end{equation}
Both $B_s$ and $G_0$ are constant in $\xibf$ to leading order, so \eqref{eq:flux} at leading order becomes 
\begin{equation*}
    \int_{D_\psi}B_s(\xbf(\xibf, s), 0) (\det G)^{-1} \dif A(\xibf) = \frac{(B_{s})_0}{\det G_0} \pi \psi + \Ocal(\psi^2).
\end{equation*}
Setting this equal to $\psi$, we find
\begin{equation*}
    \det G_0 = \pi (B_s)_0,
\end{equation*}
where we note that this is only possible when $(B_s)_0$ is chosen to be positive.

\subsection{Straight Field-Line Coordinates: Higher Order}
Now that we have the leading-order behavior, we iterate to go to higher order.
To do so, first define the near-axis expansion of $\tilde{\xibf}$ near the axis as
\begin{equation*}
    \tilde{\xibf} = \sum_{\mu=1}^\infty \tilde{\xibf}_\mu(\tilde{\xbf}, s), \qquad \tilde{\xibf}_\mu = \sum_{\nu = 0}^\mu \tilde{\xibf}_{\mu\nu}(s) x^{\mu-\nu} y^\nu,
\end{equation*}
and define the partial sums as
\begin{equation*}
    \tilde{\xibf}_{\leq \mu} = \sum_{\mu' = 1}^{\mu}\tilde{\xibf}_{\mu'},
\end{equation*}
where at leading order $\tilde{\xibf}_{\leq 1} = \tilde{\xibf}_1 = G_0 \tilde{\xbf}.$

At each order in the iteration, we consider the update to be a function of the previous coordinates, i.e.
\begin{align*}
    \tilde{\xibf}_{\leq \mu+1}(\xi_{\leq \mu}, \eta_{\leq \mu}, s_{\leq \mu}) &= \tilde{\xibf}_{\leq \mu} + \sum_{\nu=0}^\mu \tilde{\xibf}_{\mu \nu}(s_{\leq \mu}) \xi_{\leq \mu}^{\mu-\nu} \eta_{\leq \mu}^\nu, \\
    &= \tilde{\xibf}_{\leq \mu} + \sum_{\nu=0}^\mu \tilde{\xibf}_{\mu \nu} \xi_{1}^{\mu-\nu} \eta_{1}^\nu + \Ocal(\psi^{(\mu+2)/2}).
\end{align*}
We explicitly write the transformed toroidal coordinate $s_{\leq \mu}=s$ so that it is clear that $\pd{}{s}$ and $\pd{}{s_{\leq \mu}}$ are different operators. 
The purpose of performing the update in this way is primarily to make the update step as clear as possible. 

To wit, the magnetic field in the new frame satisfies
\begin{equation}
\label{eq:B-equality}
    \frac{1}{B^s} \Bbf = \dd{\rbf}{\xibf_{\leq \mu}} \begin{pmatrix}
        h^{\xi_{\leq \mu}} \\ h^{\eta_{\leq \mu}} \\ 1
    \end{pmatrix} = \dd{\rbf}{\xibf} \begin{pmatrix}
        -\iota \eta \\ \iota \xi \\ 1
    \end{pmatrix}.
\end{equation}
We can use the first equality to write
\begin{align}
\nonumber
    \tilde{\hbf}^{\xibf_{\leq \mu}}(\xibf_{\leq \mu}) &= \begin{pmatrix}
        h^{\xi_{\leq \mu}} (\xibf_{\leq \mu}) \\
        h^{\eta_{\leq \mu}} (\xibf_{\leq \mu}) 
    \end{pmatrix} \\
    &= \pd{\tilde{\xibf}_{\leq \mu}}{\tilde{\xbf}} \tilde{\hbf}^{\xbf}(\xbf(\xibf_{\leq \mu}))  + \pd{\tilde{\xibf}_{\leq \mu}}{s}, \\
\label{eq:transformed-field}
    &= [\iota(\psi) J \tilde{\xibf}]_{\leq \mu} + \left[ \pd{\tilde{\xibf}_{\leq \mu}}{\tilde{\xbf}}\tilde{\hbf}^{\xbf}(\xbf(\xibf_{\leq \mu}))\right]_{>\mu},
\end{align}
where we use the notation $f_{>\mu} = \sum_{\nu > \mu} f_{\nu}$. To get from the second to the third line, we have used the inductive assumption that we $\xibf_{\leq \mu}$ matches $\xibf$ up to order $\mu$, meaning that $\xibf_{\leq \mu}$ is a straight field-line coordinate system up to order $\mu$. 
In this way, we will find that the update residual depends neatly upon the transformed magnetic field.

It is worth noting that \eqref{eq:transformed-field} has two new operations that have not been introduced so far.
The first is that we are computing $\xbf(\xibf_{\leq \mu})$, i.e.~we are inverting the coordinate transformation.
The second is that we are composing functions with this inversion as $\tilde{\hbf}^{\xbf}(\xbf(\xibf_{\leq \mu}))$.
So, for any function $f(\xbf)$ and coordinate transformation $\xibf(\xbf)$, we can compute the equivalent function in indirect coordinates as $f(\xibf)$ using these two steps.
Moreover, the inverse transformation $\xbf(\xibf)$ can be precomputed for all transformations one wishes to perform of this type.
This gives a framework to move back and forth between direct and indirect near-axis formalisms to high order.
For more details on how these transformations are computed, see Appendices \ref{subsec:composition} and \ref{subsec:series-inversion}

Given the residual field \eqref{eq:transformed-field}, we can use the second equality in \eqref{eq:B-equality} to find the updated equation
\begin{equation}
\label{eq:arbitrary-conjugacy}
    \pd{\tilde{\xibf}}{s_{\leq \mu}} = - \pd{\tilde{\xibf}}{\tilde{\xibf}_{\leq \mu}} \tilde{\hbf}^{\xibf_{\leq \mu}}(\xibf_{\leq \mu}) + \iota(\psi) J \tilde{\xibf}.
\end{equation}
Note that this has the exact same form as \eqref{eq:field-conjugacy}, except we have shifted the underlying coordinates.
Substituting $\tilde{\xibf}_{\leq \mu+1} = \tilde{\xibf}_{\leq \mu} + \tilde{\xibf}_\mu$ into \eqref{eq:arbitrary-conjugacy}, we obtain
\begin{align}
\label{eq:eigenvalue-update}
    \pd{\tilde{\xibf}_{\mu+1}}{s_{\leq \mu}} + \iota_0 \pd{\tilde{\xibf}_{\mu+1}}{\tilde{\xibf}_{\leq \mu}} J \tilde{\xibf}_{\leq \mu} - \iota_0 J \tilde{\xibf}_{\mu+1} &= - \tilde{\hbf}^{\xibf_{\leq \mu}}_{\mu+1} + \iota_\mu \psi^{\mu/2}J \tilde{\xibf}_{\leq \mu}, \\
\nonumber
    &= \Fbf_{\mu+1}.
\end{align}

Because the leading order problem \eqref{eq:leading-conjugacy} is an eigenvalue problem, \eqref{eq:eigenvalue-update} has the form of a higher-order correction to the eigenvalue and eigenfunction.
To see what we might expect, consider the eigenvalue problem $K\bm y=\lambda M \bm y$ where each term is expanded in a small parameter, e.g.~$K=K_0 + K_1 \epsilon + K_2 \epsilon^2 + \dots$ for small $\epsilon$. 
The analogous update equation would be
\begin{equation*}
    (K_0 - \lambda_0 B_0)\bm{y}_{\mu} = \lambda_\mu M_0 \bm{y}_0 + \bm{R}_\mu,    
\end{equation*}
where $\bm{R}_\mu$ contains all of the residual terms.
Assume for simplicity that $\lambda_0$ is an isolated eigenvalue.
Then, there is a single secular term, which can be identified by taking the inner product of the above expression with the leading left eigenvector $\bm{z}_0$ to give $\lambda_\mu = -(\bm{z}_0^T \bm{R}_\mu)/(\bm{z}_0^T M_0 \bm{y}_0)$.
Once this is satisfied, the equation can be solved for $\bm{y}_\mu$, where we typically choose the free component in $\bm{y}_0$ so that the norm is constant. 

To perform the same steps on \eqref{eq:eigenvalue-update}, we first diagonalize the left-hand-side operator by converting to polar coordinates $(\xi_{\leq \mu}, \eta_{\leq \mu}) = R_{\leq \mu} (\cos\Theta_{\leq \mu}, \sin \Theta_{\leq \mu})$ as
\begin{equation*}
    \pd{\tilde{\xibf}_{\mu+1}}{s_{\leq \mu}} + \iota_0 \pd{\tilde{\xibf}_{\mu+1}}{\Theta_{\leq \mu}} - \iota_0 J \tilde{\xibf}_{\mu+1} = \Fbf_{\mu+1},
\end{equation*}
where
\begin{equation*}
    \tilde{\xibf}_{\mu+1} = (R_{\leq \mu})^{\mu+1}  \sum_{n=0}^{\mu} \sum_{\ell\in\Zbb} \tilde{\xibf}_{\mu+1,n\ell} e^{(2n-\mu)\I \theta+ \I \ell s_{\leq \mu}}.
\end{equation*}
After substitution, we find that
\begin{equation*}
    \left((\iota_0 (2n - \mu - 1) + \ell) \I I - \iota_0 J \right)\tilde{\xibf}_{\mu+1,n\ell} = \Fbf_{\mu+1,n\ell},
\end{equation*}
where the $2\times 2$ left-hand-side matrix has the eigenvalues $\lambda_{\mu+1,n\ell 0} = \iota_0 \I (2n - \mu) + \I \ell $ and $\lambda_{\mu+1,n \ell 1} = \iota_0 \I (2n - \mu - 2) + \I \ell$ with corresponding right-eigenvectors $(\vbf_0, \vbf_1) = ([-\I/\sqrt{2}, 1/\sqrt{2}], [\I/\sqrt{2},1/ \sqrt{2}])$. 
So, the resulting updates in the coefficients are
\begin{equation}
\label{eq:xi-coefficient-formula}
    \tilde{\xibf}_{\mu+1,n\ell} = \sum_{k\in\{0,1\}} \frac{1}{\I(\iota_0 (2n - \mu - 2k) + \ell)} \ip{\overline{\vbf}_{k}}{\Fbf_{\mu+1,n\ell}} \vbf_{k},
\end{equation}
where $\overline{\vbf}_k$ are the corresponding left eigenvectors due to $(\iota_0(2n-\mu-1)+\ell)\I I - \iota_0 J$ being skew-adjoint.

There are two cases where the update \eqref{eq:xi-coefficient-formula} fails.
The first case is when $2n - \mu - 2k = 0$ and $\ell = 0$, occurring only when $\mu$ is even. 
This is the standard secularity that indicates that $\iota_\mu$ must be updated, giving the condition that $\ip{\overline{\vbf}_{0}}{\Fbf_{\mu+1,\mu/2,0}} = \ip{\overline{\vbf}_{1}}{\Fbf_{\mu+1,\mu/2+1,0}} = 0$ for single-valued solutions, where we note that these formulas are equivalent for real magnetic fields. 
The resulting formula is
\begin{equation*}
    \iota_\mu = h^{\eta_{\leq \mu}}_{\mu+1,\mu/2,0} - \I h^{\xi_{\leq \mu}}_{\mu+1,\mu/2,0}.
\end{equation*}
The second case of failure is when $\iota_0$ is rational, as then there are other values of $(\mu,n,k,\ell)$ such that the formula \eqref{eq:xi-coefficient-formula} is singular.
This is attributed to the expanding number of $\theta$ modes at each order, where higher-order poloidal perturbations resonate with the axis.
To avoid this, extra resonant terms in the higher-order magnetic field must be introduced to avoid secularity \edit{}{\citep{duignan_normal_2021}} (equivalently, this requires adding terms to the Hamiltonian normal form). 
Here, we assume that $\iota_0$ is irrational so that the iteration is well-defined.

We note that there is still one undetermined part of the problem: what value to choose for $\langle{\overline{\vbf}_{0}},{\tilde{\xibf}_{\mu+1,\mu/2,0}}\rangle$ and its complex conjugate.
Because this is arbitrary, we currently set this coefficient to $0$.
However, other options could be to choose this value to improve the radius of convergence or to match the flux \eqref{eq:flux}. 
In summary, the straight field-line coordinate process is:
\begin{empheq}[box={\fboxsep=8pt\fbox}]{equation}
\label{eq:fieldline-box}
\begin{aligned}
    &\text{input:}&& \text{divergence-free magnetic field }\Bbf, \\
    &\text{assuming:} && \text{elliptic on axis with irrational }\iota_0, \\ 
\\ 
    &\text{leading eigenvalue problem:}&& \pd{G_0}{s} + G_0 H_0 = \iota_0 J G_0,\\ 
    &&& \det G_0 = \pi (B_s)_0, \qquad \abs{\iota_0} \text{ minimized}, \\ 
    & \text{where:} && \tilde{\xibf} = (\xi,\, \eta)^T = G_0(s) \tilde{\xbf} + \Ocal(\psi), \\
    &&& \tilde{\hbf}^{\xbf}(\xbf) = \frac{1}{B^s}\begin{pmatrix}B^x \\ B^y\end{pmatrix} = H_0(s) \tilde{\xbf} + \Ocal(\psi), \\
\\
     &\text{higher order linear solve:}&& \pd{\tilde{\xibf}_{\mu+1}}{s_{\leq \mu}} + \iota_0 \pd{\tilde{\xibf}_{\mu+1}}{\tilde{\xibf}_{\leq \mu}} J \tilde{\xibf}_{\leq \mu} - \iota_0 J \tilde{\xibf}_{\mu+1} = \Fbf_{\mu+1}, \\
     &&& \iota_\mu = h^{\eta_{\leq \mu}}_{\mu+1,\mu/2,0} - \I h^{\xi_{\leq \mu}}_{\mu+1,\mu/2,0},  \qquad (\mu \text{ even})\\
     &\text{where:}&& \Fbf_{\mu+1} = - \tilde{\hbf}^{\xibf_{\leq \mu}}_{\mu+1} + \iota_\mu \psi^{\mu/2}J \tilde{\xibf}_{\leq \mu}, \\
     &&& \tilde{\hbf}^{\xibf_{\leq \mu}}(\xibf_{\leq \mu}) = \pd{\tilde{\xibf}_{\leq \mu}}{\tilde{\xbf}} \tilde{\hbf}^{\xbf}(\xbf(\xibf_{\leq \mu})), \\
\\
    & \text{output:} && \text{rotational transform }\iota, \\
    &&&                 \text{straight field-line coordinates } \xibf.
\end{aligned}
\end{empheq}

\section{Ill-Posedness and Regularization}
\label{sec:ill-posedness}
In this section, we describe how the near-axis problem is ill-posed (\S\ref{subsec:ill-posed}) and how we can regularize the problem (\S\ref{subsec:regularization}). 
In \S\ref{subsec:reg-theorems}, we state how the near-axis expansion of $\phi$ converges under suitable input assumptions (Thm.~\ref{thm:reg-theorem} and Cor.~\ref{cor:convergence}).
Most proofs can be found in Appendix \ref{app:proofs}, where the individual sections are referred to after each statement.

\subsection{Ill-Posedness}
\label{subsec:ill-posed}
We define a problem as ill-posed if it is not well-posed, where the standard definition of a well-posed problem is that
\begin{enumerate}
    \item The solution exists.
    \item The solution is unique.
    \item The solution is continuous in the initial data.
\end{enumerate}
We note that the interpretations of these statements depend what space we require the solution to belong to and over which topology continuity is described in.
For instance, it is straightforward to show existence and uniqueness in the sense of a formal power series:
\begin{proposition}
\label{prop:formal-solvability}
    Consider the near-axis problem in box \eqref{eq:nae-box}, with all inputs in $C^{\infty}$. Then, there exists a unique formal power series solution $\phi_n(s)$ at each order.
\end{proposition}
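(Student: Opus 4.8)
The plan is to argue by strong induction on the order $m$, exploiting the fact that the recurrence in box \eqref{eq:nae-box} is triangular: the right-hand side $-(\nabla\cdot\tilde{\Bbf})_{m-2} - (\Delta\phi_{<m})_{m-2}$ is assembled from the inputs $\rbf_0$, $B_0$ and from the already-computed coefficients $\phi_0,\dots,\phi_{m-1}$ only, so at each order we face a genuinely closed linear problem for $\phi_m$ alone. Since all inputs lie in $C^{\infty}$, and the series coefficients of the metric quantities $\Asf,\Bsf,\Csf,\Dsf$ are smooth (this uses $\ell'>0$ and $\kappa>0$, so that $h_s$ and its reciprocal are smooth), an induction hypothesis that $\phi_0,\dots,\phi_{m-1}$ are smooth in $s$ guarantees that the right-hand side at order $m$ is a smooth function of $(\theta,s)$ whose $\theta$-dependence is a finite trigonometric polynomial of the prescribed form.

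The key step is to diagonalize the left-hand operator $\ell'(m^2+\partial_\theta^2)$ in the $\theta$-Fourier basis. Writing $\phi_m = \sum_{n=0}^m \phi_{mn}(s)e^{(2n-m)\I\theta}$, each mode is an eigenfunction with eigenvalue $\ell'\bigl(m^2-(2n-m)^2\bigr) = 4\ell'\, n(m-n)$. Because $\ell'>0$, this eigenvalue is nonzero precisely for the interior modes $1\le n\le m-1$ and vanishes exactly for $n=0$ and $n=m$. Hence for $1\le n\le m-1$ the coefficient $\phi_{mn}(s)$ is uniquely and explicitly determined by dividing the matching Fourier coefficient of the right-hand side by $4\ell'\, n(m-n)$; smoothness in $s$ is preserved since $1/\ell'$ is smooth and the right-hand side coefficient is smooth.

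For the two singular modes $n\in\{0,m\}$, I would invoke the Fredholm computation already carried out in \S\ref{subsec:Vacuum-Fields}, which shows that the solvability condition---orthogonality of the right-hand side to $e^{\pm\I m\theta}$---holds automatically at every order. Thus no obstruction arises, and the homogeneous coefficients $\phi_{m0}$ and $\phi_{mm}$ are left completely free; these are exactly the prescribed input moments for $m\ge 2$, while for $m\in\{0,1\}$ they are fixed by the stated constraints $\phi_{00}=\phi_{10}=\phi_{11}=0$. Feeding in the input values determines $\phi_m$ uniquely and smoothly, closing the induction and establishing both existence and uniqueness of the formal series coefficients $\phi_m(s)$.

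The main obstacle---or rather the only place where something could fail---is the singular pair of modes $n\in\{0,m\}$: were the Fredholm condition not automatically satisfied, one would hit a secular obstruction at some order and no formal solution would exist. Since the excerpt has already verified this condition via the identity $\cos\theta = (e^{\I\theta}+e^{-\I\theta})/2$, the remaining content is the bookkeeping above, and I expect no further analytic difficulty. The argument is purely algebraic and makes no use of convergence, which is consistent with the word \emph{formal} in the statement.
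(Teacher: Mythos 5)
Your proposal is correct and follows essentially the same route as the paper's (much terser) proof: induction on the order, smoothness of the residual given smooth previous orders, invertibility of $\ell'(m^2+\partial_\theta^2)$ on the nonsingular Fourier modes, the automatically satisfied Fredholm condition verified in \S\ref{subsec:Vacuum-Fields}, and uniqueness from the fact that the kernel modes $\phi_{m0},\phi_{mm}$ are prescribed as inputs. Your explicit eigenvalue computation $\ell'\bigl(m^2-(2n-m)^2\bigr)=4\ell' n(m-n)$ just spells out what the paper summarizes as ``the inverse of $\Delta_\perp$ of $C^\infty$ functions is $C^\infty$.''
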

\begin{proof}
    Simply notice that the residual at each order is $C^\infty$ if every previous order is. Then, because the inverse of $\Delta_\perp$ of $C^{\infty}$ functions is $C^\infty$, we satisfy the Fredholm alternative, and we specify the null space the operator at each order, we have a unique solution.
\end{proof}
Note that this proposition says nothing about the convergence of the power series to a solution off-axis; it only shows that we can find the coefficients of the power series. 
So, formal existence does not necessarily imply good or consistent computational results. 

Another straightforward existence result for harmonic inputs is the following:
\begin{proposition}
\label{prop:harmonic}
    Let $\Bbf = \nabla \phi + \tilde{\Bbf}$ be a valid vacuum magnetic field on $\Omega_\sigma$ with a real-analytic axis $\rbf_0$ and $\sigma>0$. Then the near-axis expansion using the coefficients corresponding to $\phi$ converges uniformly on a smaller domain $\Omega_{\sigma'}$ with $0<\sigma'<\sigma$. 
\end{proposition}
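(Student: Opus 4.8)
The plan is to reduce the claim to the convergence statement for real-analytic functions already recorded in \S\ref{subsec:nae}: if a function is real-analytic on $\Omega_\sigma^0$, then its near-axis expansion converges uniformly on $\Omega_{\sigma'}^0$ for some $0 < \sigma' < \sigma$. Granting that fact, the entire content of the proposition is to show that the potential $\phi$ attached to a genuine vacuum field is itself real-analytic, and then to identify its Taylor data with the near-axis coefficients ``corresponding to $\phi$.''

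First I would prove analyticity of the field and then of the potential. Because $\Bbf$ is a vacuum field, $\nabla\cdot\Bbf = 0$ and $\nabla\times\Bbf = 0$ on $\Omega_\sigma$, so the identity $\Delta\Bbf = \nabla(\nabla\cdot\Bbf) - \nabla\times(\nabla\times\Bbf)$ forces each Cartesian component of $\Bbf$ to be harmonic. Since harmonicity is a local condition and harmonic functions are real-analytic, $\Bbf$ is real-analytic on $\Omega_\sigma$; crucially, this argument never invokes a global scalar potential, so it is unaffected by the fact that $\Omega_\sigma$ is not simply connected. Restricting to the axis, $B_0(s) = \left.\Bbf\cdot\tbf\right|_{\rho=0}$ is real-analytic because $\rbf_0$, and hence $\tbf$ and $\ell'$, are real-analytic, whence $\tilde{\Bbf} = B_0\ell'\nabla s$ is real-analytic. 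Then $\nabla\phi = \Bbf - \tilde{\Bbf}$ is real-analytic, and since $\phi$ is single-valued by the Poisson formulation, a function whose gradient is real-analytic is itself real-analytic; thus $\phi$ is real-analytic on $\Omega_\sigma$.

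Next I would transfer analyticity into the coordinate frame and conclude. With $\rbf_0$ real-analytic, $\ell' > 0$, and $\kappa < \sigma^{-1}$, the map $(x,y,s)\mapsto\rbf(x,y,s)$ of \eqref{eq:general-coord} is real-analytic and non-singular on $\Omega_\sigma^0$ --- which is exactly why the $\xbf$ frame is preferable to $\qbf$ near $\rho = 0$ --- so $\phi$ viewed as a function of $(x,y,s)$ is a composition of real-analytic maps, real-analytic and $\Tbb$-periodic in $s$. Invoking the cited result of \S\ref{subsec:nae} produces the $\sigma' < \sigma$ on which the near-axis expansion converges uniformly, and pulling back to physical space yields convergence on $\Omega_{\sigma'}$. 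By the uniqueness of the formal expansion (Proposition \ref{prop:formal-solvability}), the series assembled from the true Taylor data $\phi_{m0},\phi_{mm}$ of $\phi$ agrees term-by-term with the convergent Taylor expansion of $\phi$, which identifies the expansion built from the coefficients ``corresponding to $\phi$'' with that convergent series. The step I expect to require the most care is securing a \emph{single} radius $\sigma'$ valid along the entire axis: joint analyticity in $(x,y,s)$ gives only a local neighborhood of convergence near each $s$, and one must use compactness of $\Tbb$ to extract a uniform $\sigma'$. This uniformity is precisely what is folded into the real-analytic convergence statement of \S\ref{subsec:nae}, so once that is taken as given the remaining work is just the analyticity chain above.
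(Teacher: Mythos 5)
Your proof is correct, and it shares the same skeleton as the paper's (harmonicity of the Cartesian components of $\Bbf$ $\Rightarrow$ real-analyticity, then identification of the expansion via the uniqueness in Prop.~\ref{prop:formal-solvability}), but the convergence mechanism is genuinely different. The paper never establishes analyticity of $\phi$ as a separate object: it applies a quantitative harmonic-function estimate \citep[Theorem 1.28]{axler_harmonic_2001} directly to the components of $\Bbf$, which yields a uniformly convergent Taylor series in a ball of explicit radius $(\sqrt{2}-1)\sigma$ about \emph{every} axis point, so the uniform $\sigma'$ comes for free with no compactness argument. You instead push analyticity through the chain $\Bbf \to \tilde{\Bbf} \to \nabla\phi \to \phi \to \phi(x,y,s)$ (which requires the analytic inverse function theorem for $\nabla s$ and an analytic, non-singular coordinate map, hypotheses both proofs implicitly share) and then appeal to the statement in \S\ref{subsec:nae} that real-analytic functions on $\Omega_\sigma^0$ have uniformly convergent near-axis expansions on a smaller domain. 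That statement is true, but the paper asserts it without proof, and it is exactly where the hard uniformity work lives --- your own closing remark about needing compactness of $\Tbb$ and re-expansion of the analytic function at nearby axis points is the content that would have to be supplied; the paper's route buys this step quantitatively, while yours buys modularity (the argument applies verbatim to any analytic input, not just harmonic ones) at the cost of resting on an unproved background claim. One small step to make explicit: before invoking uniqueness you should note that the Taylor coefficients of the true $\phi$ \emph{are} a formal solution of the hierarchy in Box \eqref{eq:nae-box} (because $\phi$ solves Poisson's equation, its expansion satisfies the order-$m$ equations term by term); the paper states this explicitly, and without it Prop.~\ref{prop:formal-solvability} cannot be applied to identify the two series.
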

\begin{proof}
    See \S\ref{subsec:prop-harmonic-proof}
\end{proof}
Proposition \ref{prop:harmonic} is a useful result because it says that vacuum fields can, in principle, be written as solutions to the infinite near-axis problem. 
However, this is a difficult theorem to use in practice, as it is difficult to verify \textit{a priori} whether the input data to the near-axis expansion agrees with a solution of Poisson's equation.

So, for a more computational approach, we must define normed spaces of inputs and outputs that agree with notions of convergence.
To intuit what the correct space may be, we observe that the radial direction behaves as a ``time-like'' variable, whereas the $\theta$ and $s$ behave more like spatial variables.
That is, the near-axis PDE can be thought of as propagating surface information off the axis.
This motivates a decision to separate our treatment of these coordinates.
Moreover, we desire convergence in a power series in the radial variable, so we choose to treat it in an analytic manner.
In contrast, \edit{functions in the angles $\theta$ and $s$ result from the solution of PDEs}{the coefficients $\phi_m(\theta,s)$ are obtained by solving a linear PDE at each order}, so we treat them in a Sobolev sense \edit{}{as a function of $\theta$ and $s$}.

To this end, let $\alpha = (\alpha_1,\alpha_2)$ be a multi-index of degree $2$ and $\abs{\alpha} = \sum_j \alpha_j$.
We define the $H^q$ Sobolev norm of functions $f : \Tbb^2 \to \Rbb$ as 
\begin{equation*}
    \norm{f}^2_{H^q} = \sum_{\abs{\alpha} \leq q} \norm{D_\alpha f}^2_{L^2},
\end{equation*}
where 
\begin{equation*}
    D_\alpha f = \pd{^{\abs{\alpha}}f}{\theta^{\alpha_1} \partial s^{\alpha_2}}, \qquad \norm{f}^2_{L^2} = \int_{\Tbb^2}\abs{f}^2 \dif \mu,
\end{equation*}
and $\mu$ is the Lebesgue measure on $\Tbb^2$. 
We additionally define the $C^q$ norm of a $q$-times differentiable function $f : \Tbb^2 \to \Rbb$ as
\begin{equation*}
    \norm{f}_{C^q} = \sum_{\abs{\alpha}\leq q} \sup_{(\theta,s)\in\Tbb^2} \abs{D_\alpha f(\theta,s)}.
\end{equation*}
Then, we define \edit{an appropriate space of solutions as}{the following convenient near-axis function space:}
\begin{definition}
    Let $\sigma>0$ and $W$ be a Banach space on functions on $\Tbb^2$. We define a function $f : \Omega_\sigma^0 \to \Rbb^d$ as $(\sigma,W)$-analytic if $f$ has a convergent near-axis expansion of the form
    \begin{equation}
    \label{eq:analytic-form}
        f(\rho,\theta,s) = \sum_{m=0}^\infty f_m(\theta, s)\rho^m, \qquad
        f_m(\theta,s) = \sum_{n=0}^m f_{mn}(s) e^{(2n-m)\I \theta},
    \end{equation}
    where the norm
    \begin{equation*}
        \norm{f}_{\sigma,W} = \sup_n \left(\sigma^n \norm{f_n}_{W}\right)
    \end{equation*}
    is bounded. 
    Here, convergence of the near-axis expansion is pointwise in $\rho$ and in norm in $(\theta,s)$, i.e.~for all $\rho < \sigma$
    \begin{equation*}
        \lim_{M\to \infty}\norm{f(\rho,\cdot,\cdot) - \sum_{m=0}^M f_m(\cdot,\cdot) \rho^m}_{W} \to 0.
    \end{equation*}
\end{definition}

\edit{If $f$ is $(\sigma,W)$-analytic, then for any $F\geq \norm{f}_{\sigma,W}$,}{Paralleling standard linear regularity theory \citep{evans2010}, we will consider near-axis solutions $\phi$ to belong to a Sobolev $(\sigma,H^q)$-analytic space, while near-axis PDE coefficients belong to differentiable $(\sigma,C^q)$-analytic spaces. 
Because the coefficients of Poisson's equation \eqref{eq:poisson} are functions of the metric, control on the $(\sigma,C^q)$-analytic norm of the coefficients in the Frenet-Serret coordinate system will be entirely determined by the differentiability class of the axis $\bm r_0 \in C^q(\Tbb)$.}

\edit{}{To build some intuition for $(\sigma,W)$-analytic functions, we turn to some straightforward facts about their convergence. Let $f$ be $(\sigma,W)$-analytic. Then,  for any $F\geq \norm{f}_{\sigma,W}$, the definition of the norm $\norm{\cdot}_{\sigma,W}$ tells us that} the coefficients are bounded as
\begin{equation*}
    \norm{f_m}_{W} \leq F \sigma^{-m}.
\end{equation*}
This means that surfaces of $f$ converge geometrically for $\rho^* < \sigma$ in $W$, i.e.
\begin{equation*}
    \norm{\left. f \right|_{\rho=\rho^*}}_{W} \leq F\sum_{m=0}^{\infty} \left(\frac{\rho^*}{\sigma}\right)^{m} = \frac{F}{1-\rho^*/\sigma}.
\end{equation*}
\edit{We will primarily consider the space of Sobolev $(\sigma,H^q)$-analytic functions.}{}
For a more practical statement of pointwise convergence, we have
\begin{proposition}
\label{prop:continuity}
    Let $f$ be $(\sigma,C^q)$-analytic for $q \geq 0$. Then $f$ is \edit{}{continuous and} $q$-times \edit{}{continuously} differentiable in $\Omega_\sigma^0$.
\end{proposition}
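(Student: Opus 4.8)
The plan is to show that the near-axis series, together with each of its partial derivatives of total order at most $q$, converges uniformly on every compact sub-torus $[0,\rho^*]\times\Tbb^2$ with $\rho^*<\sigma$, and then to invoke the standard theorem that a uniformly convergent series of $C^q$ functions whose term-by-term derivatives also converge uniformly has a $C^q$ limit, with differentiation and summation commuting.

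First I would record the coefficient bound. Setting $F=\norm{f}_{\sigma,C^q}$, the definition of the $(\sigma,C^q)$-analytic norm gives $\norm{f_m}_{C^q}\leq F\sigma^{-m}$; in particular each $f_m$ lies in $C^q(\Tbb^2)$ and $\sup_{(\theta,s)}\abs{\partial_\theta^{a}\partial_s^{b}f_m}\leq F\sigma^{-m}$ whenever $a+b\leq q$. Consequently the partial sums $\sum_{m=0}^{M}f_m\rho^m$ are genuinely $C^q$ functions of $(\rho,\theta,s)$ on $\Omega_\sigma^0$, so the term-by-term differentiation machinery applies.

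Next I would estimate a generic derivative. Fix $\rho^*<\sigma$ and a partial derivative $\partial_\rho^{\,j}\partial_\theta^{a}\partial_s^{b}$ of total order $j+a+b\leq q$. Applying it term by term produces $\sum_{m\geq j}\tfrac{m!}{(m-j)!}\rho^{m-j}\,\partial_\theta^{a}\partial_s^{b}f_m$, whose $m$th term I would bound on $[0,\rho^*]\times\Tbb^2$ by
\begin{equation*}
\frac{m!}{(m-j)!}(\rho^*)^{m-j}\norm{f_m}_{C^q}\leq F\,\sigma^{-j}\,m^{j}\left(\frac{\rho^*}{\sigma}\right)^{m-j}.
\end{equation*}
Since $\rho^*/\sigma<1$, the majorant $\sum_m m^{j}(\rho^*/\sigma)^{m-j}$ converges, so the Weierstrass $M$-test yields uniform convergence of this differentiated series on $[0,\rho^*]\times\Tbb^2$. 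Taking $j=a=b=0$ handles the series itself and the case $q=0$ (continuity). Letting $\rho^*\uparrow\sigma$ and applying the differentiation theorem inductively --- adding one derivative at a time --- I would conclude that $f\in C^q(\Omega_\sigma^0)$, the one-sided $\rho$-derivatives at $\rho=0$ being covered because every bound above holds down to $\rho=0$.

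The hard part is conceptual rather than computational: the $C^q$ norm on the coefficients controls only the angular ($\theta,s$) derivatives of each $f_m$ and says nothing about radial differentiation, yet differentiating the ``time-like'' variable $\rho$ generates the combinatorial factors $m!/(m-j)!\sim m^{j}$. The key observation I would lean on is that the strict inequality $\rho^*<\sigma$ supplies geometric decay $(\rho^*/\sigma)^{m}$ that dominates this polynomial growth; radial differentiation therefore cannot destroy uniform convergence on any compact sub-torus, and the only price is that the argument degenerates exactly at the outer sphere $\rho=\sigma$.
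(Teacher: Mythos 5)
There is a genuine gap, and it sits exactly at the axis $\rho=0$. Your argument is internally sound, but it proves smoothness of $f$ as a function of the coordinates $(\rho,\theta,s)$ on $[0,\sigma)\times\Tbb^2$. The proposition, as the paper proves and later uses it, concerns derivatives with respect to the non-singular variables $(x,y,s)=(\rho\cos\theta,\rho\sin\theta,s)$: the paper's proof reduces to showing that $\partial^{n+m+\ell}f/\partial x^n\partial y^m\partial s^\ell$ is a continuous function, and the downstream applications (Corollary \ref{cor:Hq-continuity} feeding the contradiction with Cauchy's estimates in Theorem \ref{thm:illposed}) require continuity/differentiability on three-dimensional neighborhoods of axis points. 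For $\rho>0$ the two notions coincide, since the polar-to-Cartesian map is a local diffeomorphism there, so your Weierstrass $M$-test argument is fine away from the axis; but at $\rho=0$ they genuinely differ. The standard counterexample is $f=\rho=\sqrt{x^2+y^2}$: it is $C^\infty$ in $(\rho,\theta,s)$, yet not even $C^1$ across the axis. Tellingly, your estimates never use the harmonic restriction in \eqref{eq:analytic-form} that $f_m$ contains only the modes $e^{(2n-m)\I\theta}$, $0\leq n\leq m$; they would apply verbatim to $f=\rho$ (whose ``coefficients'' satisfy the same bounds), so they cannot possibly establish the Cartesian statement, which is false without that restriction.

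The missing ingredient is precisely the paper's Lemma \ref{lemma:xy-deriv}: writing $\partial_x=\cos\theta\,\partial_\rho-\rho^{-1}\sin\theta\,\partial_\theta$, one checks that $\partial_x$ and $\partial_y$ map $(\sigma,W)$-analytic functions to $(\sigma',W)$-analytic functions for any $\sigma'<\sigma$ --- here the harmonic structure is what guarantees the apparent $1/\rho$ singularity cancels and the image is again of the analytic form, with the factor $m$ from differentiation absorbed by the radius loss $\sigma\to\sigma'$ (the same mechanism as your $m^j(\rho^*/\sigma)^m$ bound). Combined with the boundedness of $\partial_\theta,\partial_s$ (Lemma \ref{lemma:s-derivatives}), this reduces the proposition to the $q=0$ case, i.e.~continuity of a $(\sigma',C^0)$-analytic function, which the paper then proves by a Lipschitz-in-$\rho$ estimate plus uniform convergence of the surfaces --- essentially your $j=a=b=0$ computation. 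So your machinery is salvageable, but it must be routed through the $x,y$-derivative lemma rather than term-by-term $(\rho,\theta)$-differentiation; as written, the proof establishes a strictly weaker statement and omits the one point where the analytic structure of the expansion actually matters.
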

\begin{proof}
    See \S\ref{subsec:prop-continuity-proof}.
\end{proof}
\begin{corollary}
\label{cor:Hq-continuity}
    Let $f$ be $(\sigma,H^q)$-analytic for $q \geq 2$. Then $f$ is $(\sigma,C^{q-2})$\edit{}{-analytic, continuous,} and $(q-2)$-times \edit{}{continuously} differentiable in $\Omega_\sigma^0$.
\end{corollary}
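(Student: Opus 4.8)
The plan is to deduce the corollary from Proposition~\ref{prop:continuity} by trading regularity measured in the Sobolev norm for regularity measured in the $C^{q-2}$ norm, one near-axis coefficient at a time, via the Sobolev embedding theorem on the torus. Write $F = \norm{f}_{\sigma,H^q}$, so that the definition of the $(\sigma,H^q)$-analytic norm immediately gives the coefficient bound $\norm{f_m}_{H^q} \leq F\sigma^{-m}$ for every $m$.

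First I would invoke the Sobolev embedding $H^q(\Tbb^2) \hookrightarrow C^{q-2}(\Tbb^2)$. Since $\Tbb^2$ is a smooth compact two-manifold, this embedding is continuous precisely when $q-(q-2) = 2 > \dim(\Tbb^2)/2 = 1$, which holds for all $q \geq 2$; let $C$ denote the associated embedding constant, so that $\norm{g}_{C^{q-2}} \leq C \norm{g}_{H^q}$ for every admissible $g$. Applying this to each coefficient yields $\norm{f_m}_{C^{q-2}} \leq C\norm{f_m}_{H^q} \leq CF\sigma^{-m}$, whence $\norm{f}_{\sigma,C^{q-2}} = \sup_m \sigma^m \norm{f_m}_{C^{q-2}} \leq CF < \infty$.

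It then remains to check the convergence requirement built into the definition of $(\sigma,C^{q-2})$-analyticity, namely that the partial sums $\sum_{m=0}^M f_m \rho^m$ converge to $f(\rho,\cdot,\cdot)$ in the $C^{q-2}$ norm for each $\rho < \sigma$. This is immediate from the continuity of the embedding: the hypothesis that $f$ is $(\sigma,H^q)$-analytic already supplies convergence of these partial sums to $f(\rho,\cdot,\cdot)$ in $H^q$, and bounding the $C^{q-2}$ norm of the difference by $C$ times its $H^q$ norm transfers the convergence to $C^{q-2}$ with the same limit. Hence $f$ is $(\sigma,C^{q-2})$-analytic.

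Finally, I would apply Proposition~\ref{prop:continuity} with exponent $q-2 \geq 0$ to the $(\sigma,C^{q-2})$-analytic function $f$, concluding directly that $f$ is continuous and $(q-2)$-times continuously differentiable on $\Omega_\sigma^0$. The only step requiring genuine care is the dimension count in the Sobolev embedding; because $q-(q-2)$ always strictly exceeds $\dim(\Tbb^2)/2$, there is no borderline case in which the embedding into a classical differentiability class could fail, and the remainder of the argument is a routine composition of a standard embedding estimate with the previously established proposition.
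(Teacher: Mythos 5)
Your proof is correct and follows essentially the same route as the paper: the paper's proof likewise consists of invoking the continuous embedding $H^q(\Tbb^2)\hookrightarrow C^{q-2}(\Tbb^2)$ to conclude $f$ is $(\sigma,C^{q-2})$-analytic and then applying Proposition~\ref{prop:continuity}. Your version simply makes explicit the details the paper leaves implicit (the dimension count $2>\dim(\Tbb^2)/2$, the transfer of the coefficient bounds, and the convergence of the partial sums in the $C^{q-2}$ norm), all of which are handled correctly.
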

\begin{proof}
    Because $H^q(\Tbb^2)$ is continuously embedded in $C^{q-2}(\Tbb^2)$, $f$ is also $(\sigma,C^{q-2})$-analytic. Apply Prop.~\ref{prop:continuity}.
\end{proof}
\edit{}{A direct consequence of the preceding statements is that both $(\sigma,C^0)$-analytic and $(\sigma,H^2)$-analytic functions are continuous in 3D, but $(\sigma,H^1)$-analytic functions need not be.}

Now, let's return to the question of ill-posedness. 
To define the norm of the input, let
\begin{equation}
\label{eq:phiIC}
    \phiic = \sum_{m=0}^\infty \rho^m \left(\phi_{m0} e^{-\I m \theta} + \phi_{mm} e^{\I m \theta} \right).
\end{equation} 
This allows us to naturally define the norm of the input functions $\phi_{m0}$ and $\phi_{mm}$ via a single $(\sigma,\edit{C^q}{H^q})$-analytic norm. 
Using this, we prove that the problem is ill-posed in the following sense:
\begin{theorem}
\label{thm:illposed}
    Let $\rbf_0, B_0, \phi_{m0}, \phi_{mm} \in C^{\infty}$ for $2 \leq m < \infty$ with $\ell',\kappa> 0$ and $q_0 \geq 0$. The near-axis solution $\phi$ of \eqref{eq:nae-box} is not continuous to perturbations $\delta \rbf_0, \delta B_0, \delta \phi_{m0}, \delta \phi_{mm} \in C^{\infty}$ under the $C^{4+q_0}(\Tbb)$ norm on $\rbf_0$, the $H^{1+q_0}(\Tbb)$ norm on $B_0$, the $(\sigma_0,H^{2+q_0})$-analytic norm on $\phiic$ with $\sigma_0>0$, and any $(\sigma,H^q)$-analytic norm on the output $\phi$ with $\sigma>0$ and $q\geq 2$, i.e.~the near-axis expansion is ill-posed. 
\end{theorem}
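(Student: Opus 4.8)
The plan is to exploit linearity and exhibit a single amplified Fourier channel whose growth is \emph{factorial} in the order $m$, which defeats the geometric weight $\sigma^m$ for every $\sigma>0$. First I would reduce to perturbing only the initial moments $\phiic$: holding $\rbf_0$ and $B_0$ fixed (so $\delta\rbf_0=\delta B_0=0$), the solution map $\phiic\mapsto\phi$ of \eqref{eq:nae-box} is linear, and joint continuity of the full input-to-output map would imply continuity of this restriction. Hence it suffices to produce a sequence of initial-data perturbations $\delta\phiic$ whose $(\sigma_0,H^{2+q_0})$-analytic norm tends to zero while the $(\sigma,H^q)$-analytic norm of the corresponding $\delta\phi$ does not.

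The natural candidate is the lowest admissible moment: set $\delta\phi_{20}=\epsilon$ with all other free moments zero, i.e.\ $\delta\phiic=\epsilon\,e^{-2\I\theta}\rho^2$ in the notation of \eqref{eq:phiIC}. Since the only nonzero near-axis coefficient of $\delta\phiic$ sits at order $2$ and $e^{-2\I\theta}$ is $s$-independent, the input norm is simply $\epsilon\,\sigma_0^2\,\norm{e^{-2\I\theta}}_{H^{2+q_0}}$, which tends to $0$ as $\epsilon\to0$. With the external source unperturbed, $\delta\phi$ obeys the source-free recursion $\ell'(m^2+\partial_\theta^2)\,\delta\phi_m=-(\Delta\,\delta\phi_{<m})_{m-2}$ seeded by the single datum at order $2$. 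I would then track the channel that stays just off the homogeneous modes, namely the coefficient of $e^{(2-m)\I\theta}$ at order $m$ (the index $n=1$). On this channel the operator eigenvalue $m^2+\partial_\theta^2$ equals $m^2-(m-2)^2=4(m-1)$, while the curvature coupling $-m(m-1)\Asf_1\,\delta\phi_{m-1}$ with $\Asf_1=-\ell'\kappa\cos\theta$ feeds the $n=1$ component at order $m$ from that at order $m-1$ with an explicit prefactor $\tfrac12 m(m-1)\ell'\kappa$. Dividing, the channel is amplified by $\sim m\kappa/8$ at each order, so iterating from order $2$ gives $\norm{\delta\phi_m}_{H^q}\gtrsim \epsilon\,c^m\,m!$ for a constant $c>0$ depending only on $\min\kappa$ and $\ell'$.

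The payoff is immediate: since $\sigma^m c^m m!\to\infty$ for every $\sigma>0$, the output satisfies $\norm{\delta\phi}_{\sigma,H^q}=\sup_m\sigma^m\norm{\delta\phi_m}_{H^q}=\infty$ for each fixed $\epsilon>0$ and every $\sigma$ and $q$. Thus along $\epsilon\to0$ the input perturbation vanishes in norm while the output perturbation has infinite norm, contradicting continuity; equivalently, the linear solution operator is unbounded between the stated spaces. It is worth emphasizing the interpretation: for data coming from a genuine harmonic field the free moments at every order are exactly tuned to cancel this factorial contribution (Prop.~\ref{prop:harmonic}), and ill-posedness is precisely the statement that the cancellation is infinitely sensitive, so that perturbing the single coefficient $\phi_{20}$ reintroduces the uncancelled factorial growth.

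The main obstacle is the factorial lower bound itself. The recursion \eqref{eq:laplace-near-axis} couples every Fourier mode in $\theta$ and, through the $\Csf$ and $\Dsf$ terms, carries $s$-derivatives, so one must certify that these additional terms cannot cancel the curvature-channel amplification. The cleanest route is to isolate the $e^{(2-m)\I\theta}$ component and show that the competing contributions are genuinely lower order in $m$: the $\Bsf$ terms lack the $m(m-1)$ enhancement of the curvature term, and the $s$-derivative terms are divided by the same factor $4(m-1)$ without that enhancement, so a careful induction controls them as a subleading perturbation of the dominant balance $\delta\phi_m\approx(m\kappa/8)\,\delta\phi_{m-1}$. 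Specializing to a circular axis, where $\tau\equiv0$, where $\kappa$ and $\ell'$ are constant, and where $s$-independent data stay $s$-independent so that the $\Csf,\Dsf$ terms vanish identically, reduces the recursion to a constant-coefficient problem in $\theta$ alone and makes the factorial growth completely explicit; this model already exhibits the discontinuity and guides the estimate in the general case.
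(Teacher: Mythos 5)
Your proposed mechanism fails at its central step: the claimed factorial amplification of the $e^{(2-m)\I\theta}$ channel does not occur, because the $\Bsf$ term --- which you dismiss as ``lacking the $m(m-1)$ enhancement'' --- enters at exactly the same order in $m$ and cancels that enhancement. Take your own reduced setting (circular axis, $\tau=0$, $s$-independent data), where the recursion \eqref{eq:laplace-near-axis} becomes exactly
\begin{equation*}
\ell'\left(m^2+\pd{^2}{\theta^2}\right)\delta\phi_m \;=\; m(m-1)\,\ell'\kappa\cos\theta\,\delta\phi_{m-1} \;+\; \pd{}{\theta}\!\left(\ell'\kappa\cos\theta\,\pd{\delta\phi_{m-1}}{\theta}\right).
\end{equation*}
Feeding in $\delta\phi_{m-1}=c_{m-1}e^{(3-m)\I\theta}$, the coefficient of $e^{(2-m)\I\theta}$ on the right-hand side is
\begin{equation*}
\tfrac12\,\ell'\kappa\,c_{m-1}\bigl[m(m-1)-(m-2)(m-3)\bigr]\;=\;\ell'\kappa\,(2m-3)\,c_{m-1},
\end{equation*}
because the second term contributes $-\tfrac12\ell'\kappa(m-2)(m-3)c_{m-1}$, which is also $O(m^2)$: two $\theta$-derivatives act on a mode of wavenumber $\sim m$. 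Dividing by the eigenvalue $\ell'\bigl[m^2-(m-2)^2\bigr]=4\ell'(m-1)$ gives $c_m=\kappa(2m-3)c_{m-1}/(4(m-1))\to \tfrac{\kappa}{2}c_{m-1}$, i.e.\ geometric, not factorial, growth; the off-diagonal couplings generated by $\cos\theta$ obey the same kind of bound. So your perturbation $\delta\phi_{20}=\epsilon$ produces a series whose coefficients grow at most geometrically, hence converge near the axis, and no unboundedness --- let alone a contradiction with continuity --- follows. This cancellation is not accidental: it is precisely the cancellation verified in \S\ref{subsec:Vacuum-Fields} to show the Fredholm condition holds identically for the edge modes $e^{\pm\I m\theta}$, and the adjacent channel you chose inherits it up to an $O(m)$ remainder. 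Your closing interpretation of Prop.~\ref{prop:harmonic} (harmonic data ``tuned to cancel the factorial growth'') is therefore also incorrect: there is no factorial contribution to cancel.

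The deeper issue is that an $s$-independent perturbation cannot detect the instability at all. The ill-posedness of \eqref{eq:nae-box} is of Hadamard type for an elliptic equation viewed as an evolution in $\rho$: a mode of toroidal wavenumber $j$ grows off the axis exponentially in $j\rho$, so inputs of size $j^{-N}$ and frequency $j$ tend to zero in any fixed Sobolev norm while their outputs blow up on any fixed annulus. In the axisymmetric sector the problem degenerates to a two-dimensional elliptic cross-section problem whose cascade is geometric (as computed above), so that sector is essentially benign. The paper's proof takes the high-frequency route: it perturbs $B_0$ by $\Delta B_0^{(j)}=j^{-N}e^{\I js}$ (keeping $\rbf_0$ and $\phiic$ fixed), notes that any $(\sigma,H^2)$-analytic solution would, after adding back $\int_0^s B_0\ell'\dif s'$, be a genuine harmonic function, and then applies Cauchy's estimates for harmonic functions (Theorem \ref{thm:CauchysEstimates}): the $M$-th tangential derivative of the perturbed potential on the axis grows like $j^{M-N}$, which forces the supremum over a fixed ball to diverge as $j\to\infty$, contradicting the continuity guaranteed by Corollary \ref{cor:Hq-continuity}. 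If you wish to keep your strategy of perturbing $\phiic$ instead of $B_0$, you would need data oscillating rapidly in $s$ (e.g.\ $\delta\phi_{mm}\propto j^{-N}e^{\I js}$ at fixed low order $m$) together with an argument of this Cauchy-estimate type that converts on-axis derivative growth into off-axis blow-up; no perturbation supported at toroidal wavenumber zero can exhibit the ill-posedness.
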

\begin{proof}
    See \S\ref{subsec:illposed-proof}
\end{proof}

In other words, Theorem \ref{thm:illposed} tells us that smooth bounded perturbations in the input lead to unbounded deviations in the solution, even if the problem is initially prepared to be convergent.
The characteristic form of the unbounded perturbations are high frequency in $s$, which grow exponentially off-axis according to their wavenumber.
When the near-axis expansion is discretized, this appears to be a poor condition number for the truncated problem.
This motivates us to introduce a term in the near-axis expansion that damps the behavior of the high-frequency modes.

Before we continue, we note there is a strong connection between the ill-posedness of the near-axis expansion and the ill-posedness of coil design.
It is typically the case that magnetic fields with large gradients are difficult to approximate using plasma coils far from the boundary \citep{kappel2024}.
This is because high frequencies in coil design decay quickly towards the surface of the plasma --- the opposite view of the problem of high frequencies growing outward from the axis.
The effect is that it is difficult to match high frequencies on the plasma boundary, and coil design codes also require some form of regularization \citep{landreman_improved_2017}.

\subsection{Regularization}
\label{subsec:regularization}
\edit{
In this section, we introduce a method to regularize the vacuum near-axis expansion. 
The fundamental idea is to dampen the growth of highly oscillatory modes in $s$ while maintaining the fidelity of low-frequency modes. 
To do so, we propose adding a regularizing term to the near-axis problem as}{
We have just seen in Theorem \ref{thm:illposed} that the near-axis expansion described in Box \eqref{eq:nae-box} is ill-posed.
Ill-posedness can potentially cause significant problems for numerical simulations, with the primary one being that discretization refinement --- both in toroidal resolution and in the order of expansion --- will not lead to convergence to a solution.
Instead, the deviation from the correct answer typically increases with refinement.
}

\edit{}{
In the proof of Theorem \ref{thm:illposed} (\S\ref{subsec:illposed-proof}), the problematic perturbations to the input have small amplitude and high poloidal wavenumber, leading to exponential growth off the axis with a rate proportional to the wavenumber.
Here, we would like to damp the problematic high-wavenumber modes while maintaining the fidelity of the low-wavenumber modes. 
We do so by adding a high-order differential operator in the toroidal and poloidal directions, as high-order derivatives affect high wavenumbers significantly more that low wavenumbers.
Specifically, we propose the following version of Poisson's equation \eqref{eq:poisson} with regularization
}
\begin{equation}
\label{eq:regularized-laplace}
    \Delta \phi + \frac{\rho}{\sqrt{g}}\Delta_{\perp}P \phi  =  - \nabla \cdot \tilde{\Bbf}.
\end{equation}
\edit{where $P$ satisfies the following:}{The new term contains the regularizing differential operator $\Delta_\perp P$ that is at least fourth order. 
The operator is composed of a product of the perpendicular Laplacian defined by
\begin{equation*}
    \Delta_{\perp} \phi = \frac{1}{\rho} \pd{}{\rho}\left( \rho \pd{\phi}{\rho}\right) + \frac{1}{\rho^2} \pd{^2\phi}{\theta^2},
\end{equation*}
and the regularizing differential operator $P$ satisfying the following hypotheses:}
\begin{hypotheses}
\label{hypotheses}
    Let $q\geq0$ and $D \geq 1$. We require $P$ be an order $2D$ \edit{}{differential} operator that satisfies the following:
    \begin{enumerate}
        \item $P$ takes the form
        \begin{equation}
        \label{eq:P-form}
            P = \sum_{m=0}^{2D}\sum_{n=0}^{m} P^{(mn)}(s) \pd{^{m+n}}{s^m \partial \theta^n},
        \end{equation}
        where $P^{(mn)}(s)\in C^{q}$.
        \item $P$ is strongly elliptic, i.e.~for some $C>0$ and for all $(x,y)\in\Rbb^2 \backslash \{0\}$
        \begin{equation*}
            \sum_{n=0}^{2D} P^{(2D,n)}(s) x^{2D-n} y^n \geq C (x^2 + y^2)^D.
        \end{equation*}
        \item $P$ is self-adjoint and semi positive-definite, i.e.~for nonzero $f,g \in H^{2D}(\Tbb^2)$
        \begin{equation*}
            \ip{f}{Pg} = \ip{Pf}{g}, \qquad \ip{f}{Pf} \geq 0, \qquad \ip{f}{g} := \int_{\Tbb^2}f(\theta,s) g(\theta,s) \dif \theta \df s
        \end{equation*}
    \end{enumerate}
\end{hypotheses}
We note that in hypothesis (i), the coefficients of $P$ do not depend on $\theta$.
This means poloidal Fourier modes ``block diagonalize'' $P$, i.e.~for $f(s) \in H^{q + 2D}$, we have $P(f(s) e^{\I m \theta}) = g(s) e^{\I m \theta}$ for some $g(s) \in H^{q}$.
This is sufficient for $P$ to map $(\sigma,H^{q+2D})$-analytic functions to $(\sigma,H^{q})$-analytic functions, while $\theta$ dependence would make the operator non-analytic.
It also means that $P$ commutes with $\Delta_{\perp}$, i.e.~$P\Delta_\perp = \Delta_\perp P$ on sufficiently regular functions.
Moreover, the hypotheses (ii) and (iii) are sufficient for $1+P$ to be invertible, which is necessary at each step of the iteration.

In practice, we use the operator
\begin{equation}
\label{eq:P-reg}
    P = \left(-\frac{1}{K^2}\left(\frac{L}{2\pi\ell'} \pd{}{s}\right)^2\right)^{D} + \left(- \frac{1}{K^2} \pd{^2}{\theta^2} \right)^{D}
\end{equation}
where $K > 0$ is the characteristic wavenumber of regularization, $2D\geq 2$ is the algebraic order of the frequency damping, \edit{}{and} $L = \oint \ell' \dif s$ is the length of the magnetic axis\edit{, and the perpendicular Laplacian is}{.}
\edit{\begin{equation*}
    \Delta_{\perp} \phi = \frac{1}{\rho} \pd{}{\rho}\left( \rho \pd{\phi}{\rho}\right) + \frac{1}{\rho^2} \pd{^2\phi}{\theta^2}.
\end{equation*}}{}
We have chosen the specific form of $P$ so that it is easy to implement numerically when the curve is in arclength coordinates (and therefore $\ell'$ is constant, satisfying the smoothness requirement). 
Indeed, any polynomial in the arclength derivative $(\ell')^{-1}\pd{}{s}$ and the poloidal derivative $\pd{}{\theta}$ is simply inverted in Fourier space.
Additionally, the parameter $K$ can be tuned from large (weak damping) to small (strong damping) to adjust the regularization strength.

With regularization, the new iterative form of the near-axis expansion \eqref{eq:regularized-laplace} is 
\begin{equation*}
    \left(m^2 + \pd{^2}{\theta^2}\right)(1 + P) \phi_m = -(\nabla \cdot \tilde{\Bbf} + \Delta \phi_{<m})_{m-2},
\end{equation*}
where we discuss the relevant regularity in Cor.~\ref{cor:convergence}.
If we use the regularization \eqref{eq:P-reg} and $s$ to be a scaled arclength coordinate so that $\ell' = L/2\pi$ is constant, the componentwise version of the iteration is
\begin{multline*}
    (m^2 - (2n-m)^2)\left(1 + \left(\frac{ k}{K}\right)^{2D} + \left(\frac{2n-m}{K}\right)^{2D}\right)\phi_{mnk} = \\
    -(\nabla \cdot \tilde{\Bbf} + \Delta \phi_{<m})_{m-2,n-1,k}
\end{multline*}
for $0<n<m$. 
We see that in the near-axis iteration, the regularization damps the high-order modes by dividing by high-order polynomials in the poloidal and toroidal wavenumbers. 

By construction, the regularization has another benefit: the full problem can be represented as the divergence of a perturbed magnetic field $\Bbf_K$.
If we let $G = \Diag(1, \rho^{-2}, 0)$, we have
\begin{equation}
\label{eq:BK}
    B_K^i = g^{ij} \pd{\phi}{q^j} + \tilde{B}^i + \frac{\rho}{\sqrt{g}} G^{ij} \pd{(P \phi)}{q^j} .
\end{equation}
The fact that there is still an underlying divergence-free field is important for the relation between the near-axis expansion and Hamiltonian mechanics. 
Therefore, the regularized near-axis expansion of the field $\Bbf_K$ can be expressed by the problem
\begin{equation}
\label{eq:fictitious-current}
     \nabla \times \Bbf_K = \bm J_K, \qquad \nabla \cdot \Bbf_K = 0, 
\end{equation}
where $\bm J_K$ is a fictitious regularizing current.
In contrast, the regularization means $\Bbf = \nabla \phi + \tilde{\Bbf}$ is not divergence-free.
To find the flux surfaces, we note that the procedure in Box \ref{eq:fieldline-box} in no way depends on the magnetic field being curl-free.
As such, we perform the same steps for finding flux coordinates for $\Bbf_K$ as with $\Bbf$. 

\subsection{Convergence of the Regularized Expansion}
\label{subsec:reg-theorems}
Consider a PDE of the form
\begin{equation}
\label{eq:arbitrary-PDE}
    \Delta_\perp (1+P)\phi = f + L^{(a)}\phi,
\end{equation}
where $P$ satisfies Hypotheses \ref{hypotheses}, $f$ is $(\sigma_0,H^{q})$-analytic for $q\geq 0$ and $\sigma_0>0$, and 
\begin{multline}
\label{eq:La}
    L^{(a)} = \ac{1}\phi + \pd{\ac{2}}{\rho} \pd{\phi}{\rho} + \frac{1}{\rho^2} \pd{\ac{2}}{\theta}\pd{\phi}{\theta} +  \ac{3} \pd{\phi}{\theta}  \\+ \ac{4} \pd{\phi}{s} + \ac{5} \pd{^2\phi}{\theta^2} + \ac{6} \pd{^2\phi}{\theta \partial s} + \ac{7} \pd{^2 \phi}{s^2},
\end{multline}
and each $a^{(j)}$ is a $(\Sigma, C^{q})$-analytic for $\Sigma > \sigma_0$. 
By a straightforward application of chain rule, we see that \eqref{eq:regularized-laplace} satisfies this form for $\bm r_0 \in H^{4+q}$ and $B_0 \in H^{1+q}$, where $\Sigma \leq \min \kappa^{-1}$ (see proof of Cor.~\ref{cor:convergence} in App.~\ref{subsec:convergence}). 
Equation \eqref{eq:arbitrary-PDE} also encompasses other coordinate and regularization assumptions of the near-axis expansion, including those not defined by Frenet-Serret coordinates.

The implicit near-axis iteration for \eqref{eq:arbitrary-PDE} can be written as
\begin{equation*}
    (\Delta_\perp(1+ P) \phi)_m = \left[f + L^{(a)}\right]_{m}.
\end{equation*}
The form of the iteration automatically respects the Fredholm condition, so it is solvable at each order (see the proof of Theorem \ref{thm:reg-theorem}).
So, to solve at each order of the iteration, we can invert to find that
\begin{equation*}
    \phi_m = \phiic_m + \left( (1+P)^{-1} \Delta_\perp^{+} \left[f + L^{(a)}\phi \right]\right)_{m},
\end{equation*}
where $\phiic$ is $(\sigma_0,H^{q+2})$-analytic of the form \eqref{eq:phiIC} and we define the pseudoinverse $\Delta^+_\perp$ as
\begin{equation}
\label{eq:Delta-plus}
    \Delta_\perp^{+} (\rho^m e^{\I (2n-m) \theta}) = \frac{1}{(m+2)^2 - (2n-m)^2} \rho^{m+2} e^{\I (2n-m) \theta}, \qquad  \text{ for }0\leq n \leq m.
\end{equation}
Given this iteration, we find the following theorem for its convergence:
\begin{theorem}
\label{thm:reg-theorem}
    Let $0<\sigma_0<\Sigma$, $q \geq 1$, and $D\geq 1$. Then, let $f$ be $(\sigma_0,H^q)$-analytic, $a^{(j)}$ for $1\leq j \leq 7$ be $(\Sigma, C^q)$-analytic, $\phiic$ as defined by \eqref{eq:phiIC} be $(\sigma_0,H^{q+2D})$-analytic, and $P$ satisfy Hypotheses \ref{hypotheses}. 
    Then, there is a unique $(\sigma,H^{q+2D})$-analytic near-axis solution $\phi$ of \eqref{eq:arbitrary-PDE} with initial data $\phiic$.
    Moreover, this solution operator is continuous in $a^{(j)}$ and satisfies the bound
    \begin{equation*}
        \norm{\phi}_{\sigma,H^{q+2D}} \leq C \norm{f}_{\sigma_0,H^q} + C \norm{\phiic}_{\sigma_0,H^{q+2D}}.
    \end{equation*}
\end{theorem}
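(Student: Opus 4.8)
The plan is to recast the iteration as a fixed-point problem for an affine map on the Banach space $X$ of $(\sigma,H^{q+2D})$-analytic functions and apply the contraction mapping theorem, with the output radius $\sigma\le\sigma_0$ chosen small. Writing the solved-for iteration of \eqref{eq:arbitrary-PDE} as $\phi = T\phi$ with $T\phi = \phiic + (1+P)^{-1}\Delta_\perp^{+}\bigl[f + L^{(a)}\phi\bigr]$, the sought solution is the fixed point of $T$, and its linear part $A := (1+P)^{-1}\Delta_\perp^{+} L^{(a)}$ governs uniqueness and the quantitative bound via the Neumann series $\phi = (I-A)^{-1}\bigl(\phiic + (1+P)^{-1}\Delta_\perp^{+}f\bigr)$. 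Note that since $\sigma\le\sigma_0<\Sigma$, the data $\phiic$ and $f$ lie in $X$ and $(\sigma,H^q)$ respectively, with $\norm{\cdot}_{\sigma,\,\cdot}\le\norm{\cdot}_{\sigma_0,\,\cdot}$.

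First I would establish three estimates on the $(\sigma,W)$-analytic scale. From \eqref{eq:Delta-plus}, the denominator $(m+2)^2-(2n-m)^2 \ge 4(m+1)$ for $0\le n\le m$, so $\Delta_\perp^{+}$ raises the radial order by two and supplies both a factor $\sigma^2/4$ and a decay $1/(m+1)$, giving $\Delta_\perp^{+}:(\sigma,H^q)\to(\sigma,H^q)$ with norm $\Ocal(\sigma^2)$. Next, Hypotheses \ref{hypotheses}(ii)--(iii) make $1+P$ a self-adjoint, strongly elliptic operator of order $2D$ with $1+P\ge 1$, hence an isomorphism $H^{q+2D}\to H^q$ on $\Tbb^2$; as its coefficients are $\theta$-independent it preserves radial order, so $(1+P)^{-1}:(\sigma,H^q)\to(\sigma,H^{q+2D})$ is bounded by a constant $C_P$ independent of $\sigma$. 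Finally, for the coefficient products I would use a convolution estimate: since each $\ac{j}$ is $(\Sigma,C^q)$-analytic with $\Sigma>\sigma$, multiplying against a radial-analytic factor produces the geometric sum $\sum_j(\sigma/\Sigma)^j$, and because $C^q$ multipliers act boundedly on $H^q$, the derivative bookkeeping is that the two derivatives $L^{(a)}$ destroys, met against the $C^q$ ceiling on the coefficients, land the output in exactly $H^q$, which $(1+P)^{-1}$ then lifts back to $H^{q+2D}$. This is where $2D\ge 2$ is essential: the regularization restores precisely the derivatives $L^{(a)}$ removes, the mechanism absent in the ill-posed problem of Theorem \ref{thm:illposed}.

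Combining these, I would show $\norm{A}_{X\to X}\to 0$ as $\sigma\to 0$. The step I expect to be the main obstacle is the radial-derivative terms $\pd{\ac{2}}{\rho}\pd{\phi}{\rho}$ and $\rho^{-2}\pd{\ac{2}}{\theta}\pd{\phi}{\theta}$ in \eqref{eq:La}: differentiating a $(\sigma,W)$-analytic function in $\rho$ generates coefficient factors $(m+1)$ that formally break the analytic norm. The resolution is to pair these factors against the $1/(m+1)$ decay from $\Delta_\perp^{+}$ and to exploit that every summand of $L^{(a)}$ carries at least one coefficient $\ac{j}$, so the convolution always contributes at least one factor $\sigma/\Sigma$; once the geometric sum over the coefficient index collapses the resulting double polynomial to a single $(m+1)$, that factor is cancelled by $\Delta_\perp^{+}$. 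Each term then contributes $\Ocal(\sigma^2)$ (multiplicative and pure angular-derivative terms) or $\Ocal(\sigma/\Sigma)$ (radial-derivative terms), all vanishing as $\sigma\to 0$.

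With $\norm{A}<1$ for $\sigma$ small, the contraction mapping theorem produces a unique fixed point $\phi\in X$ carrying the prescribed $\phiic$, and the Neumann series yields $\norm{\phi}_{\sigma,H^{q+2D}} \le (1-\norm{A})^{-1}\bigl(\norm{\phiic}_{\sigma_0,H^{q+2D}} + C_P\tfrac{\sigma^2}{4}\norm{f}_{\sigma_0,H^q}\bigr)$, which is the stated bound after relabelling constants. Solvability at each order is automatic: the resonant modes $n=0,m$ annihilating $m^2+\pd{^2}{\theta^2}$ are exactly the free data set by $\phiic$ in \eqref{eq:phiIC}, while the pseudoinverse \eqref{eq:Delta-plus} is nonsingular on the complementary modes, which is the Fredholm statement alluded to in the text. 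Continuity of the solution operator in the $\ac{j}$ then follows since $A$ depends linearly, hence continuously, on the coefficients and $(I-A)^{-1}$ is continuous wherever $\norm{A}<1$.
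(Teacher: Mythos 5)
Your proposal is correct and rests on exactly the same three estimates as the paper's proof: the convolution bound for multiplication by $(\Sigma,C^q)$-analytic coefficients (the paper's Lemma \ref{lemma:La}), the denominator bound $(m+2)^2-(2n-m)^2\ge 4(m+1)$ for $\Delta_\perp^{+}$ whose $1/(m+1)$ decay cancels the $(m+1)$ growth produced by the radial-derivative terms of $L^{(a)}$, and the elliptic isomorphism $(1+P)^{-1}:H^q\to H^{q+2D}$. Where you differ is the outer logic: the paper closes the argument with an explicit order-by-order induction (Lemma \ref{lemma:inductive}, choosing $\Phi$ so that $C_1\sigma\Phi + \lVert\phiic\rVert_{\sigma,H^{q+2D}} + C_2\norm{f}_{\sigma_0,H^q} = (1-\gamma)\Phi$), and then runs a second, separate perturbative induction to obtain continuity in the $\ac{j}$; you instead package everything as an affine fixed-point problem and invoke the Neumann series $(I-A)^{-1}$ with $\norm{A}_{X\to X}=\Ocal(\sigma)$. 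The two are mathematically equivalent here --- the paper's induction is precisely the unrolled contraction argument, exploiting that $\phi_m$ depends only on $\phi_{<m}$ --- but your packaging buys a cleaner continuity statement (continuity of $A\mapsto(I-A)^{-1}$ on the set $\norm{A}<1$ replaces the paper's second induction), while the paper's version keeps the constants explicit and makes the triangular structure of the iteration visible, so no completeness or fixed-point machinery is needed. One place you assert what the paper actually proves: the estimate $\norm{u}_{H^{q+2D}}\le C\norm{(1+P)u}_{H^q}$ is derived in the paper by a compactness/contradiction argument (interior elliptic regularity, Theorem \ref{thm:interior-regularity}, plus Rellich embedding), not merely from positivity and strong ellipticity; since this is standard elliptic theory your citation-level treatment is acceptable in a sketch, but a complete write-up would need to supply that step.
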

\begin{proof}
    See Appendix \ref{sec:reg-proof}.
\end{proof}

This theorem can be translated to the Frenet-Serret problem:

\begin{corollary}
\label{cor:convergence}
    Let $\sigma>0$, $q \geq 1$, and $D\geq 1$. Then, let $\bm r_0 \in C^{4+q}$ in scaled arclength coordinates with $\ell' = L/2\pi > 0$ and $\kappa > 0$, $B_0 \in H^{1+q}$, $\phiic$ be $(\sigma_0,H^{q+2D})$-analytic, and $P$ be defined as in \eqref{eq:P-reg}. 
    Then, for some $\sigma>0$ satisfying $\sigma \leq \sigma_0$ and $\sigma < \min \kappa^{-1}$, the near-axis solution $\phi$ of \eqref{eq:regularized-laplace} is $(\sigma,H^{q+2D})$-analytic, continuous in $\bm r_0$, and satisfies
    \begin{equation*}
        \norm{\phi}_{\sigma,H^{q+2D}} \leq C \norm{B_0}_{H^{1+q}} + C \norm{\phiic}_{\sigma_0,H^{q+2D}}.
    \end{equation*}
    Furthermore, the associated divergence-free field $\Bbf_K$ is well-defined and is $(\sigma',H^q)$-analytic for all $\sigma'<\sigma$.
\end{corollary}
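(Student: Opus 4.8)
The plan is to deduce the corollary from Theorem~\ref{thm:reg-theorem} by showing that the Frenet--Serret regularized Poisson equation \eqref{eq:regularized-laplace} is an instance of the abstract equation \eqref{eq:arbitrary-PDE}, and then verifying that theorem's hypotheses with the stated input regularity. First I would recast \eqref{eq:regularized-laplace} into the normalized form $\Delta_\perp(1+P)\phi = f + L^{(a)}\phi$. Working in scaled arclength so that $\ell'=L/2\pi$ is constant, I multiply \eqref{eq:regularized-laplace} by $h_s$; since $\sqrt g = \rho h_s$, the regularizer prefactor collapses as $\rho h_s/\sqrt g = 1$, producing exactly $\Delta_\perp P\phi$, while $h_s\Delta\phi$ has principal part $\ell'\Delta_\perp\phi$ (visible from \eqref{eq:laplace-unexpanded}, where $\Asf,\Bsf\to\ell'$ on axis). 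Dividing by the constant $\ell'$ then makes $\Delta_\perp\phi$ the identity part; because $\ell'$ is constant it commutes with $\Delta_\perp$, so the regularizer becomes $\Delta_\perp(P/\ell')\phi$ and the rescaled $P/\ell'$ still satisfies Hypotheses~\ref{hypotheses}. Everything left over --- the $\kappa$-dependent corrections hidden in $h_s$, the torsion and $\partial_s$ couplings carried by $\Csf,\Dsf$, and the expansion of the regularizer prefactor --- is moved to the right-hand side, where the $\phi$-independent source $f=-(h_s/\ell')\nabla\cdot\tilde{\Bbf}$ and the first/second-order operator $L^{(a)}\phi$ must be matched term by term to the template \eqref{eq:La}.

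Second, I would verify the input hypotheses. The source $f$ is linear in $B_0$ and carries one $s$-derivative of $B_0$ (through $\nabla\cdot\tilde{\Bbf}$), so $B_0\in H^{1+q}$ makes $f$ $(\sigma_0,H^q)$-analytic with $\norm{f}_{\sigma_0,H^q}\le C\norm{B_0}_{H^{1+q}}$. The coefficients $a^{(j)}$ are built from $\kappa,\tau,h_s^{-1}$ and their first $s$-derivatives; since $\kappa$ costs two axis derivatives and $\tau$ costs three, the appearance of $\tau'$ in the $\partial_s$-couplings is exactly what forces $\bm r_0\in C^{4+q}$ in order to make every $a^{(j)}$ a $(\Sigma,C^q)$-analytic function. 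The radius of these metric coefficients is controlled by the geometric series for $h_s^{-1}=[\ell'(1-\kappa\rho\cos\theta)]^{-1}$, which converges for $\rho<\min\kappa^{-1}$; hence any $\Sigma\le\min\kappa^{-1}$ works, and after replacing $\sigma_0$ by a smaller radius if needed (which only decreases $\norm{\phiic}_{\sigma_0,H^{q+2D}}$) I may assume $\sigma_0<\Sigma$. Finally I check that the concrete operator \eqref{eq:P-reg} meets Hypotheses~\ref{hypotheses}: in scaled arclength it equals $K^{-2D}\big((-\partial_s^2)^D+(-\partial_\theta^2)^D\big)$, a constant-coefficient, $\theta$-independent operator of order $2D$ that is self-adjoint and nonnegative (symbol $K^{-2D}(\xi_s^{2D}+\xi_\theta^{2D})\ge 0$) and strongly elliptic since $x^{2D}+y^{2D}\ge 2^{-D}(x^2+y^2)^D$ by homogeneity. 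With all hypotheses verified, Theorem~\ref{thm:reg-theorem} supplies a unique $(\sigma,H^{q+2D})$-analytic solution $\phi$ for some $\sigma\le\sigma_0<\min\kappa^{-1}$, continuity of the solution map in the $a^{(j)}$ (hence in $\bm r_0\in C^{4+q}$, as $\bm r_0\mapsto(\kappa,\tau,\ell')\mapsto a^{(j)}$ is continuous), and the stated bound after substituting $\norm{f}_{\sigma_0,H^q}\le C\norm{B_0}_{H^{1+q}}$.

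For the last claim I would feed this $\phi$ into the definition \eqref{eq:BK} of $\Bbf_K$ and track regularity using the multiplication and differentiation properties of $(\sigma,W)$-analytic spaces. The terms $g^{ij}\partial_{q^j}\phi$ are products of $(\Sigma,C^q)$-analytic metric coefficients with first derivatives of $\phi$, hence comfortably $(\sigma',H^q)$-analytic. The binding term is the regularizing contribution $(\rho/\sqrt g)G^{ij}\partial_{q^j}(P\phi)$: applying $P$ costs exactly the $2D$ Sobolev orders that take $\phi$ from $H^{q+2D}$ down to $H^q$; the constraint $\phi_{00}=\phi_{10}=\phi_{11}=0$ makes $\partial_\theta(P\phi)$ vanish to order $\rho^2$, so the factor $G^{\theta\theta}=\rho^{-2}$ does not destroy analyticity; and the remaining radial differentiation loses no Sobolev order but only shrinks the radius, which is precisely why the conclusion is stated for every $\sigma'<\sigma$. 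Divergence-freeness is automatic from the construction \eqref{eq:BK}--\eqref{eq:fictitious-current}.

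The main obstacle is the reduction in the first step together with its regularity accounting: one must carry out the chain-rule expansion of $h_s\Delta\phi$ carefully enough to confirm that every correction term genuinely fits the fixed template \eqref{eq:La} (no stray $\rho$-weights or higher derivatives appear) while simultaneously pinning down the sharp differentiability class of $\bm r_0$ that keeps all coefficients $(\Sigma,C^q)$-analytic. This bookkeeping is what turns the innocuous-looking ``straightforward application of chain rule'' into a statement one can trust, and it is where the $C^{4+q}$ threshold and the $\Sigma\le\min\kappa^{-1}$ radius are actually earned.
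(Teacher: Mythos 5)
Your proposal is correct and takes essentially the same route as the paper's proof: recast \eqref{eq:regularized-laplace} as an instance of \eqref{eq:arbitrary-PDE}, control the coefficients via the geometric series for $h_s^{-1}$ and the regularity of $\kappa,\tau$ inherited from $\bm r_0\in C^{4+q}$ (with $\Sigma$ limited by $\min\kappa^{-1}$), note that $P$ satisfies Hypotheses \ref{hypotheses}, apply Theorem \ref{thm:reg-theorem}, and deduce the $(\sigma',H^q)$-analyticity of $\Bbf_K$ from \eqref{eq:BK} with the $2D$-order Sobolev loss from $P$ and the radius loss from the remaining derivatives. Your write-up is considerably more explicit than the paper's terse appendix argument, but the underlying ideas coincide.
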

\begin{proof}
    See Appendix \ref{subsec:convergence}.
\end{proof}

There are two primary reasons why Theorem \ref{thm:reg-theorem} is useful for computation. 
First, it gives a guarantee that the norm of the inputs controls the size of the output.
For instance, in the context of stellarator optimization, if an appropriate Sobolev penalty is put on $\bm r_0$, $B_0$, and $\phiic$, then one can expect the output to be appropriately bounded.
Second, because the output is $(\sigma,H^{q+2D})$-analytic, it tells us that truncations of the near-axis expansion are approximations of the true solution. 
So, we can be more confident that finite asymptotic series are approximately correct, at least for the regularized problem.

It is worth noting that while Theorem \ref{thm:reg-theorem} tells us there is a solution to the regularized problem, it does not tell us that the solution solves the original problem.
To address this, we can develop an \textit{a posteriori} handle on the error.
\begin{proposition}
\label{prop:a-posteriori}
    Consider the hypotheses of Theorem \ref{thm:reg-theorem} with $q\geq 2$. Additionally, suppose $0<\sigma'<\sigma$, $L=\Delta_\perp - L^{(a)}$ is a second order negative-definite operator on $H^{2}_{0}(\Omega_{\sigma'}^0)$, and $\phi^{\sigma'} = \phi(\sigma',\theta,s)$.
    Then, the solution $\phi$ of the near-axis expansion is the unique solution in $H^{q+2D}(\Omega_{\sigma'}^0)$ of the boundary value problem 
    \begin{equation*}
        P \Delta_\perp\phi + L \phi = f, \qquad \left. \phi\right|_{\rho = \sigma'} = \phi^{\sigma'}.
    \end{equation*}
    Moreover, let $\tilde\phi$ be the solution to 
    \begin{equation*}
        L \tilde \phi = f, \qquad \left. \tilde \phi \right|_{\rho=\sigma'} = \phi^{\sigma'}.
    \end{equation*}
    Then,
    \begin{equation*}
        \norm{\phi - \tilde \phi}_{H^{q}(\Omega_{\sigma'}^0)} \leq C \norm{P \Delta_\perp \phi}_{H^{q-2}(\Omega_{\sigma'}^0)}.
    \end{equation*}
\end{proposition}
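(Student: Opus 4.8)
The plan is to recast the regularized equation in strong form, establish uniqueness by an energy estimate that exploits the sign structure of $P$ and $L$, and then read off the error bound from standard elliptic regularity for the second-order operator $L$.

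\medskip
\noindent\textbf{Strong form and admissible boundary data.} Since $P$ commutes with $\Delta_\perp$ on sufficiently regular functions, equation \eqref{eq:arbitrary-PDE} rearranges to $P\Delta_\perp\phi + L\phi = f$ with $L=\Delta_\perp - L^{(a)}$. By Theorem \ref{thm:reg-theorem}, $\phi$ is $(\sigma,H^{q+2D})$-analytic and solves the near-axis iteration order by order; summing these identities and using that $\Delta_\perp$ maps a $(\sigma,W)$-analytic function to a $(\sigma',W)$-analytic one for every $\sigma'<\sigma$ (the factor $(2n-m)^2$ produced by $\partial_\theta^2$ is dominated by the geometric gain $(\sigma'/\sigma)^m$), I would conclude that $\phi$ is a genuine strong solution of $P\Delta_\perp\phi + L\phi = f$ on each $\Omega_{\sigma'}^0$, with $P\Delta_\perp\phi\in H^{q}(\Omega_{\sigma'}^0)\subset H^{q-2}(\Omega_{\sigma'}^0)$, while $\phi|_{\rho=\sigma'}=\phi^{\sigma'}$ holds by definition. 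That only a single Dirichlet trace is prescribed for the order-$(2D+2)$ operator is reconciled by working in $H^{q+2D}(\Omega_{\sigma'}^0)$: membership in this space already enforces regularity at the coordinate axis $\rho=0$, supplying the ``inner'' conditions.

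\medskip
\noindent\textbf{Uniqueness.} Given two $H^{q+2D}(\Omega_{\sigma'}^0)$ solutions, let $w$ be their difference, so $P\Delta_\perp w + Lw = 0$ and $w|_{\rho=\sigma'}=0$. I would pair with $w$ in $L^2(\Omega_{\sigma'}^0,\rho\,\df\rho\,\df\theta\,\df s)$. Because $P$ acts only in $s$ and $\theta$ it is self-adjoint for this measure, so $\langle P\Delta_\perp w, w\rangle = \langle \Delta_\perp w, Pw\rangle$; a single integration by parts of $\Delta_\perp$ gives $\langle \Delta_\perp w, Pw\rangle = -\int \nabla_\perp w\cdot P\nabla_\perp w\,\rho\,\df\rho\,\df\theta\,\df s$, where the outer boundary term vanishes since $(Pw)|_{\rho=\sigma'}=P(w|_{\rho=\sigma'})=0$ and the $\rho=0$ term is killed by the weight and axis regularity. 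Only one integration by parts is needed, so $\partial_\rho w=0$ on $\rho=\sigma'$ is never required. As $P$ commutes with $\nabla_\perp$ and is semi-positive-definite, $\langle P\Delta_\perp w, w\rangle\leq 0$; combined with $\langle Lw,w\rangle<0$ from the negative-definiteness of $L$, the identity $\langle P\Delta_\perp w,w\rangle + \langle Lw,w\rangle = 0$ forces $w=0$. (The quadratic form of the second-order operator $L$ is $H^1$-continuous, so its hypothesized definiteness extends from $H^2_0$ to the admissible class $w|_{\rho=\sigma'}=0$.)

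\medskip
\noindent\textbf{Error bound and obstacle.} Setting $e=\phi-\tilde\phi$ and subtracting $L\tilde\phi=f$ from $P\Delta_\perp\phi+L\phi=f$ gives $Le = -P\Delta_\perp\phi$ with $e|_{\rho=\sigma'}=0$. In the nonsingular coordinates $\xbf=(x,y,s)$, where $\Delta_\perp$ is the flat Cartesian Laplacian, $L$ is a genuinely elliptic second-order operator with $C^q$ coefficients (inherited from the $(\Sigma,C^q)$-analytic $a^{(j)}$) on the smooth domain $\Omega_{\sigma'}^0$; negative-definiteness yields coercivity, so Lax--Milgram plus elliptic regularity up to the boundary \citep{evans2010} make $L^{-1}$ bounded from $H^{q-2}$ to $H^{q}$. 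Applying this to $g=-P\Delta_\perp\phi$ delivers $\norm{e}_{H^q(\Omega_{\sigma'}^0)}\leq C\norm{P\Delta_\perp\phi}_{H^{q-2}(\Omega_{\sigma'}^0)}$, and the same coercivity guarantees that $\tilde\phi$ exists and is unique. I expect the first step to be the crux: rigorously upgrading the order-by-order construction to an honest strong solution of the three-dimensional boundary value problem, tracking the radius contraction $\sigma\to\sigma'$ that places $P\Delta_\perp\phi$ in $H^{q-2}(\Omega_{\sigma'}^0)$, and justifying that one outer Dirichlet condition plus axis regularity determines $\phi$ uniquely for the high-order operator. The energy estimate and the elliptic bound are then comparatively routine.
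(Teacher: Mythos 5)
Your proof is correct and follows essentially the same route as the paper: uniqueness via the energy identity obtained by pairing the homogeneous equation with the difference solution and using the semi-positivity of $P$ (after one integration by parts, exploiting that $P$ commutes with the perpendicular gradient) together with the negative-definiteness of $L$, and the error bound by subtracting the two boundary value problems and invoking standard elliptic regularity \citep{evans2010}. Your added justification that the near-axis solution is an honest strong solution of the boundary value problem is a detail the paper leaves implicit, but it does not change the substance of the argument.
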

\begin{proof}
    See Appendix \ref{subsec:a-posteriori}.
\end{proof}
In other words, this tells us that given a solution to the regularized problem, we can bound the distance to a non-regularized boundary value problem via the norm of $P \Delta_\perp \phi$. 
This applies directly to the Frenet-Serret case because the Laplacian is negative-definite. 
So, given a solution, this gives us an estimate of the error.
As before, we can summarize the new problem (cf.~Box \eqref{eq:nae-box}):
\begin{empheq}[box={\fboxsep=6pt\fbox}]{equation}
\label{eq:reg-box}
\begin{aligned}
     &\text{input: }&& \text{axis } \rbf_0\in C^{4+q} \text{, on-axis field } B_0\in H^{1+q}, \\
    &&&                (\sigma_0,H^{q+2D})\text{-analytic higher moments } \phiic, \\
    &&&                 P \text{ satisfying Hypotheses \ref{hypotheses},} \\
     &\text{assuming: }&& \ell' > 0, \ \kappa > 0, \ B_0 > 0, \ q,D\geq 1\\
     &\text{solve: }&& \ell' \left(m^2 + \pd{^2}{\theta^2}\right) (1+P) \phi_{m} = - (\nabla \cdot \tilde{\Bbf})_m -(\Delta \phi_{<m})_m, \\
     &\text{output:} && (\sigma,H^q)\text{-analytic potential }\phi,\\
    &&&                 \text{magnetic field } B_K^i = g^{ij} \pd{\phi}{q^j} + \tilde{B}^i + \frac{\rho}{\sqrt{g}} G^{ij} \pd{(P \phi)}{q^j}.
\end{aligned}
\end{empheq}

\section{Numerical Method}
\label{sec:numerical-method}
To numerically solve the regularized near-axis expansion algorithm in \eqref{eq:reg-box}, we use a pseudospectral method. 
Pseudospectral methods use spectral representations of the solution for derivatives, while scalar multiplication and other operations occur on a set of collocation points.
The spectral form of the series is 
\begin{gather}
\label{eq:real-fourier-form}
    X_{\leq\Nrho}(\rho,\theta,s) = \sum_{m=0}^{\Nrho} \rho^m X_m(\theta, s), \qquad X_m(\theta, s) =  \sum_{n=0}^m  X_{mn}(s) \Fcal_{2n-m}(\theta) , \\
\nonumber
    X_{mn}(s) = \sum_{k = -\Ns}^{\Ns} X_{mnk} \Fcal_k(s)
\end{gather}
where $\Nrho$ and $\Ns$ are integers specifying the resolution of the series and
\begin{equation*}
    \Fcal_k(s) = \begin{cases} 
        \cos(k s), & k \geq 0, \\
        \sin(-k s), & k < 0.
    \end{cases}
\end{equation*}
Derivatives of the series are numerically evaluated by
\begin{align*}
    \pd{X_{\leq\Nrho}}{\rho} &= \sum_{m=0}^{\Nrho-1}  (m+1) X_{m}(\theta, s) \rho^{m}, \\
    \pd{X_{\leq\Nrho}}{\theta} &= \sum_{m=1}^{\Nrho} \sum_{n=0}^m -(2n-m) X_{mn} \rho^{m} \Fcal_{-(2n-m)}(\theta),\\
    \pd{X_{\leq\Nrho}}{s} &= \sum_{m=0}^{\Nrho} \sum_{n=0}^m \sum_{k=-\Ns}^{\Ns} -k X_{mnk} \rho^{m} \Fcal_{2n-m}(\theta) \Fcal_{-k}(s).
\end{align*}

For algebraic operations such as series multiplication, composition, and inversion, we discretize each $X_m$ on a grid $X_m(\theta_{mj}, s_\ell)$ where
\begin{equation}
\label{eq:collocation-nodes}
    \theta_{mj} = \edit{\frac{2 \pi j}{2m+1}}{\frac{\pi j}{m+1}}, \qquad s_\ell = \frac{2 \pi \ell}{M_s},
\end{equation}
where $0 \leq j \leq m$, $0 \leq \ell < M_s$, and $M_s \geq 2\Ns+1$ is the number of $s$-collocation points. 
Typically, we choose $M_s = 4\Ns + 3$ to oversample in $s$ by a factor of over $2$.
This choice anti-aliases the numerical method by removing high harmonics generated in the collocation space \citep{boyd_chebyshev_2001}. 
\edit{
The $\theta$-collocation in \eqref{eq:collocation-nodes} is specialized for the analytic form of the near-axis expansion.
In particular, because there are $m$ Fourier modes in $\theta$ at each order, we choose exactly the same number of collocation points as Fourier modes at each order. 
To see why it is necessary to discretize $\theta$ over the half circle instead of the full circle, choose $m=1$ and consider the alternative equispaced collocation nodes $\{\tilde{\theta}_{10},\tilde{\theta}_{11}\}=\{0,\pi\}$.
On these nodes, we see that $\Fcal_{-1}(\tilde{\theta}_{1n}) = \sin(\tilde{\theta}_{1n})=0$ for $n\in\{0,1\}$, so the transformation between collocation nodes and Fourier coefficients would be singular.
}{
We note that the $\theta$-collocation points $\theta_{mj}$ are spaced around the half circle instead of the full circle, owing to the fact that analytic coefficients satisfy the symmetry
\begin{equation}
\label{eq:Xm-symmetry}
    X_m(\theta+\pi,s) = (-1)^m X_m(\theta,s).
\end{equation}
For even $m$, \eqref{eq:Xm-symmetry} tells us that $X_m$ is periodic in $2\theta$, and the resulting transformation is the Discrete Fourier Transform (DFT) in that angular coordinate.
On the other hand, $X_m$ is anti-periodic in $\theta$ for odd $m$, and the collocation on the half-circle can interpreted as a symmetry reduction of the DFT on the full circle.
}

To transform between Fourier and spatial representations at each order $m$, let $\mathsf{X}^{\mathrm{c}}_{m} = [X_m(\theta_{mj},s_{\ell})] \in \Rbb^{(m+1) \times M_s}$ be the matrix of collocation values and $\mathsf{X}^{s}_m = [X_{mnk}] \in \Rbb^{(m+1)\times(2\Ns+1)}$ be the matrix of Fourier coefficients.
Then we define the transition matrices $[(\mathsf{F}_{\theta}^{m,m'})_{jn}] = [\Fcal_{2n-m}(\theta_{m'j})] \in \Rbb^{(m+1)\times(m'+1)}$ and $[(\mathsf{F}_s)_{k\ell}]=[\Fcal_{k}(s_\ell)] \in \Rbb^{(2N_s+1)\times M_s}$. 
The transformation from spectral coefficients to collocation nodes is expressed by
\begin{equation*}
    \mathsf{X}^c_m = (\mathsf{F}_{\theta}^{m,m})^T \mathsf{X}^s_m \mathsf{F}_s.
\end{equation*}
Similarly, the inverse transform happens via
\begin{equation*}
    \mathsf{X}^s_m = (\mathsf{F}_{\theta}^{m,m})^{-T} \mathsf{X}^c_m \mathsf{F}_s^+,
\end{equation*}
where the pseudo-inverse is $\mathsf{F}_s^+ = \mathsf{F}_s^T D_s^{-1}$ and $D_s = \mathsf{F}_s \mathsf{F}_s^T \in \Rbb^{(2N_s+1)\times(2N_s+1)}$ is diagonal with $(D_s)_{jj}=(2N_s+1)$ for $j=0$ and $(D_s)_{jj} = (2N_s+1)/2$ for $-N_s\leq j\leq N_s$, $j\neq 0$. 
This transformation is currently performed via full matrix-matrix multiplication, but it could be accelerated for large systems by the fast Fourier transform.

The final basic operation we use is to raise the order of the $\theta$-collocation. 
To see why this is necessary, consider the simple case of three monomial power series: $X = \rho^{m} X_m$, $Y = \rho^{m'} Y_{m'}$ and $Z = XY = \rho^{m+m'} Z_{m+m'} = \rho^{m+m'} X_m Y_{m'}$.
Then, the multiplication on the collocation nodes as
\begin{equation*}
    Z(\theta_{m+m',j}, s_\ell) = \rho^{m+m'} X(\theta_{m+m',j}, s_\ell) Y(\theta_{m+m',j}, s_j).
\end{equation*}
So, to obtain the correct collocation on $Z_{m+m'}$, we need to change the $\theta$-collocation on $X_{m}$ from $\theta_{mj}$ to $\theta_{m+m',j}$ and similarly for $Y_{m'}$.
To do this, we use the $\theta$-collocation matrices to find
\begin{equation*}
    \mathsf{Z}^c_{m+m'} = \left[(F_{\theta}^{m,m+m'})^T(F_{\theta}^{m,m})^{-T} \mathsf{X}^c_m\right] \odot \left[(F_{\theta}^{m',m+m'})^T(F_{\theta}^{m',m'})^{-T} \mathsf{Y}^c_m\right],
\end{equation*}
where $\odot$ is the Hadamard (element-wise) product and $\mathsf{Y}^c_{m'}$ and $\mathsf{Z}^c_{m+m'}$ are the collocation matrices of $Y_{m'}$ and $Z_{m+m'}$.
With this operation, the operations outlined in Appendix \ref{app:operations} can be performed on the collocated nodes.

We note that choosing the correct amount of modes in $s$ presents the most difficult numerical problem in computing the near-axis expansion. 
As the order increases, high-order residuals are increasingly nonlinear in the lower orders, causing a broadening of the spectrum in $s$.
If the inputs to the expansion do not have a sufficiently narrow bandwidth, this will result in broad higher-order residuals, particularly when finding flux coordinates.
Both the regularization and the anti-aliasing effects of choosing $F_s$ to be rectangular help alleviate the issue of broad bandwidth, but in practice we have found that it remains important to choose smooth inputs, especially for finding flux surfaces.

\section{Examples}
\label{sec:examples}
We now investigate the numerical convergence of the near-axis expansion to high orders.
Our focus is on characterizing the convergence of the input (Fig.~\ref{fig:phiic}), the convergence of the output magnetic field (Fig.~\ref{fig:Bfield}), the convergence of the magnetic surfaces (Figs.~\ref{fig:surfaces}, \ref{fig:surface-conv}), and the role of regularization (Fig.~\ref{fig:Bfield}, \ref{fig:finite-diff}).
Through our two examples --- the rotating ellipse and the precise QA equilibrium of Landreman-Paul \citep{landreman-paul} --- we find that the radius of convergence of every series is closely related to the distance from the magnetic axis to the coils $\sigma_{\mathrm{coil}}$.
This radius appears to limit the convergence of every other series of interest, including the magnetic surfaces that extend beyond this distance.

All computations in this section were performed on a personal laptop. 
The code used to perform the expansions can be found at the \texttt{StellaratorNearAxis.jl} package \citep{sna-jl}.

\subsection{Equilibrium Initialization}
\begin{figure}
    \centering
    \begin{subfigure}[t]{0.49\textwidth}
        \centering
        \includegraphics[width=\linewidth]{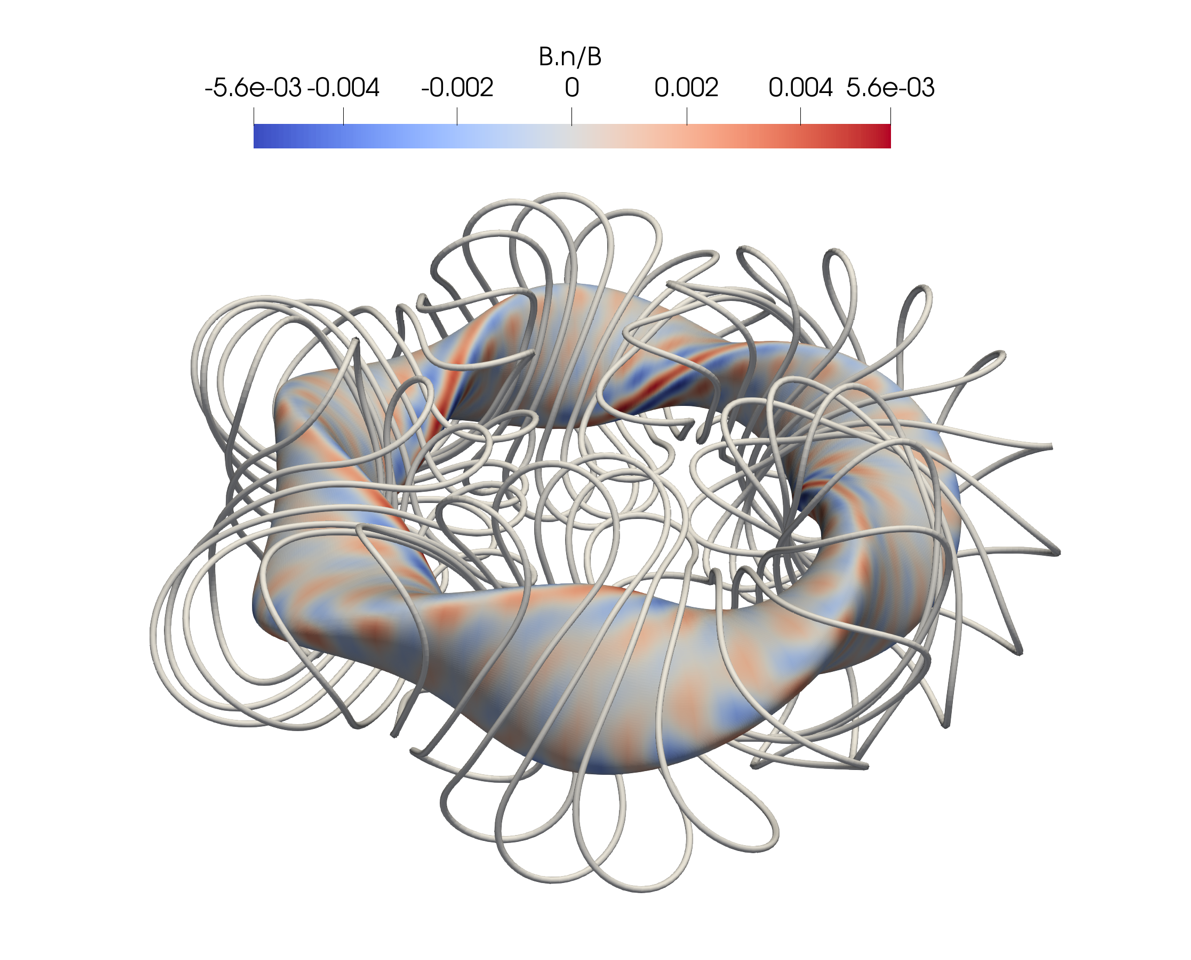}
        \caption{Rotating ellipse}
    \end{subfigure}
    \begin{subfigure}[t]{0.49\textwidth}
        \centering
        \includegraphics[width=\linewidth]{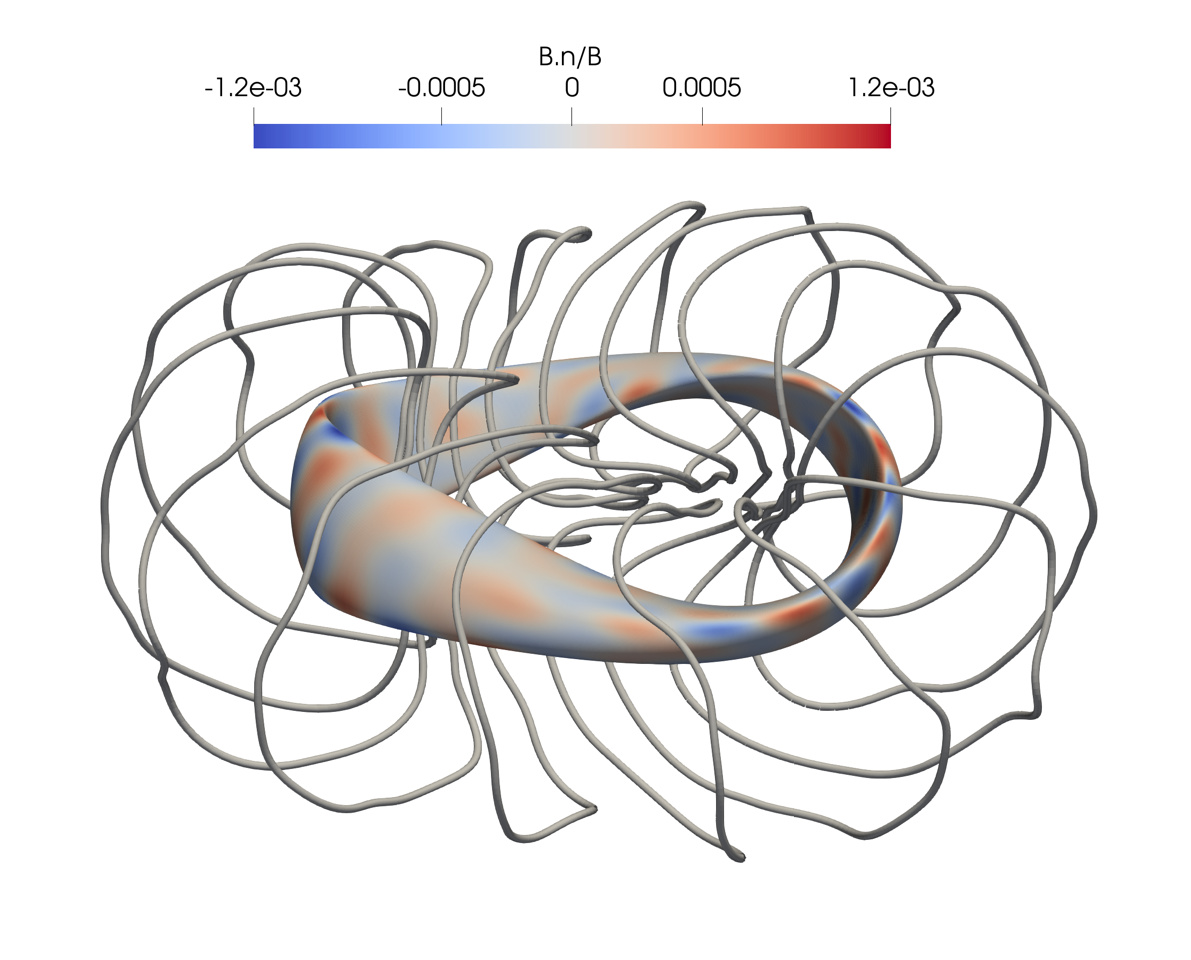}
        \caption{Landreman-Paul}
    \end{subfigure}
    \caption{Coil sets for the rotating ellipse and Landreman-Paul examples. The color indicates the normalized $\Bbf \cdot \bm N$ error on the outer closed flux surface.}
    \label{fig:coils}
\end{figure}

A major task in computing high-order near-axis expansions is choosing the input coefficients $\phiic$ in Box \ref{eq:reg-box}. 
While low-order expansions can often be expressed in physically intuitive variables parameterizing the rotation and stretching of elliptical magnetic surfaces, it is not as intuitive how to determine the high-order coefficients of $\phiic$.
In practice, the best option for finding equilibria would likely be via optimization.
However, for the purposes of demonstration, we initialize our inputs by a more direct method: via magnetic coils (see Fig.~\ref{fig:coils}).

The primary advantage of using coils for equilibrium initialization is accuracy.
The accuracy comes from the fact that the coil field can be expanded analytically about the axis, giving a direct input to the near-axis expansion. 
This circumvents the potentially error-prone problem of interpolating stellarator equilibria.
Coils also provide an accurate ground truth to compare our equilibrium to. 
In the following subsections, we use this to assess the accuracy of the near-axis expansion, both close to the axis and farther away.

The coil optimization method employed here follows the approach described in \citet{wechsung_precise_2022} and \citet{jorge_simplified_2024}
using the code SIMSOPT \citep{simsopt}.
Coils are modeled as single closed 3D filaments of current $\bm \Gamma^{(i)} : \Tbb \to \Rbb^3$. 
Each coil $i$ is modeled as a periodic function in Cartesian coordinates where
\begin{equation*}
    \bm \Gamma^{(j)}(s)=\bm c_{0}^{(j)}+\sum_{\ell=1}^{N_F}\left[\bm c_{\ell}^{(j)}\cos(\ell \theta)+\bm s_{\ell}^{(j)}\sin(\ell \theta)\right],
\end{equation*}
where each $\bm c_{\ell}^{(j)}, \bm s_{\ell}^{(j)} \in \Rbb^3$, yielding a total of $3\times(2N_F +1)$ degrees of freedom per coil. In this work, we used $N_F=12$, with 4 coils per half-field period for Landreman-Paul case and 8 coils per half-field period for the ellipse.
The degrees of freedom for the coil shapes are then
\begin{equation}
    \bm x_{\text{coils}}=[\bm c_{l}^{(j)},\bm s_{l}^{(j)}, I_j],
\end{equation}
with $I_j$ the current that goes through each coil.
We take advantage of stellarator and rotational symmetries to only optimize a set of $N_c$ coils per half field-period.
This leads to a total of $2 \times n_{\text{fp}} \times N_c$ modular coils where $n_{\text{fp}}$ is the number of toroidal field periods with $n_{\text{fp}} = 2$ for Landreman-Paul and $n_{\text{fp}} = 5$ for the rotating ellipse. 
The remaining coils are determined by symmetry.
The magnetic field $\mathbf B_{\text{ext}}$ of each coil is evaluated using the Biot-Savart law
\begin{equation}
\label{eq:Bcoil}
    \Bbf_{\mathrm{coil}}(\rbf) = \frac{\mu_0}{4\pi} \sum_{j=1}^{2 n_{\mathrm{fp}} N_c} \int_{0}^{2\pi} \frac{I_j \pd{\bm \Gamma_j}{s'} \times (\bm \Gamma_j(s') - \rbf)}{\abs{\bm \Gamma_j(s') - \rbf}^3} \dif s',
\end{equation}
where $s'$ parameterizes the coil curve.
Each coil is divided into 150 quadrature points, and the cost functions used to regularize the optimization problem use the minimum distance between two coils, the length of each coil, their curvature, and mean-squared curvature \citep[see][]{wechsung_precise_2022}.

Using the coil magnetic field \eqref{eq:Bcoil}, we find the magnetic axis $\rbf_0$ via a shooting method.
Then, using the near-axis coordinate representation of $\rbf$ in \eqref{eq:general-coord}, we expand the quadrature rule of \eqref{eq:Bcoil} using the operations in App.~\ref{app:operations} to find a near-axis expansion for the magnetic field $\Bbf(\rho,\theta,s)$.
Given the near-axis field, it is straightforward to compute $B_0$ by 
\begin{equation*}
    B_0(s) = \tbf(s) \cdot \Bbf(0,0,s),
\end{equation*}
and $\phi$ is found by a near-axis expansion of the path integral (note that $\phi=0$ on the axis)
\begin{equation*}
    \phi(\rho,\theta,s) = \int_{0}^\rho (\Bbf(\rho',\theta,s) - \tilde{\Bbf}(\rho',\theta,s)) \cdot (\cos(\theta) \nbf(s) + \sin(\theta) \bbf(s)) \dif \rho'.
\end{equation*}
Finally, the coefficients $\phi_{m0}$ and $\phi_{mm}$ of this are used as input for $\phiic$.
The input is computed to the $\Nrho=9$ orders in $\rho$ with $N_{s,\mathrm{RE}}=100$ and $N_{,s\mathrm{LP}}=50$ Fourier modes in $s$ for the rotating ellipse and Landreman-Paul respectively (see~\eqref{eq:real-fourier-form}), where we use the subscript `RE' for the rotating ellipse and `LP' for Landreman-Paul wherever necessary.

\begin{figure}
    \centering
    \includegraphics[width=0.6\linewidth]{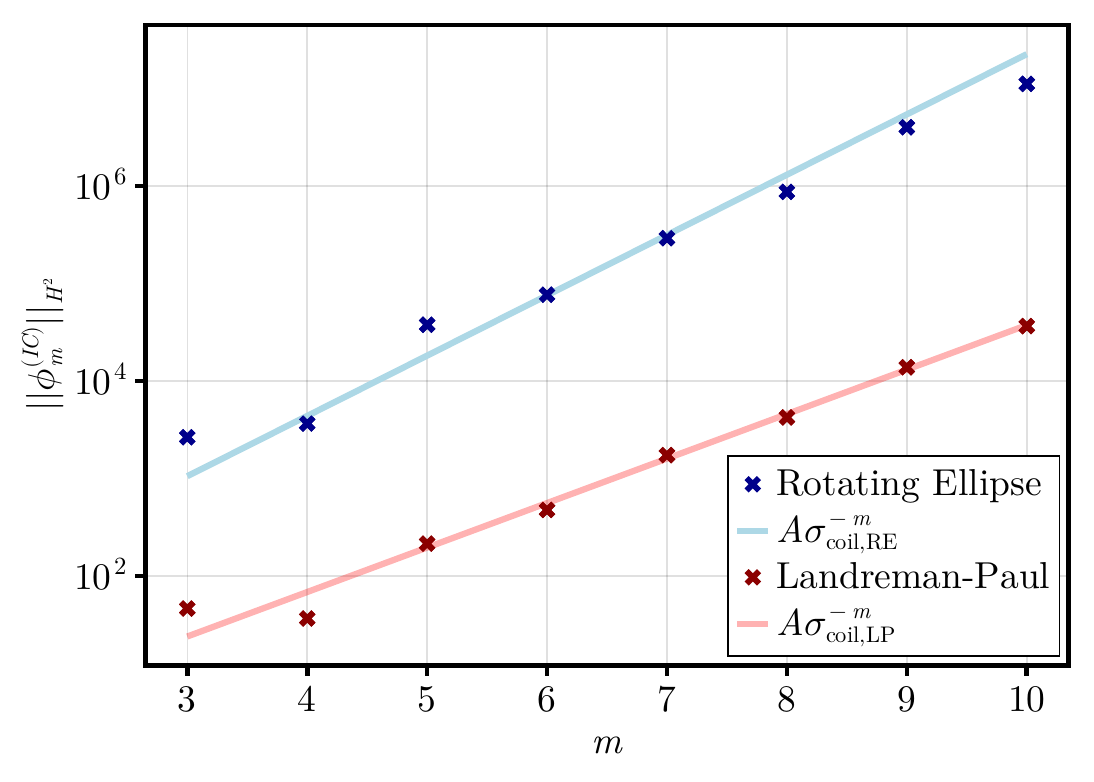}
    \caption{Plot of (markers) the coefficient norm $\lVert\phiic_m\rVert_{H^2}$ versus the order $m$ and (lines) best-fit lines $A \sigma_{\mathrm{coil}}^{-m}$ where $\sigma_{\mathrm{coil}}$ is the axis-to-coil distance.
    }
    \label{fig:phiic}
\end{figure}
To begin our analysis of the examples, we consider the inputs to the near-axis expansion.
Corollary \ref{cor:convergence} suggests that there are two length scales dictated by the inputs.
First is the radius of curvature of the axis.
Letting $\Sigma = \min_s \kappa^{-1}(s)$ be the radius of curvature, we find
\begin{equation*}
    \Sigma_{RE} = 0.987, \qquad \Sigma_{LP} = 0.681.
\end{equation*}
The other length scale of interest is the radius of convergence $\sigma_0$ of $\phiic$ in the $(\sigma_0,H^{q+2D})$-analytic norm.
However, to the orders we compute to, this radius appears to depend on the exponent $q+2D$.
To determine the most informative exponent, we consider the work by \citet{kappel2024}, where it was shown that the normalized gradient of the magnetic field is a strong predictor for plasma-coil separation.
Because the $H^2(\Tbb^2)$ norm measures the size of the second derivative of $\phiic_m$ (and therefore the gradient of the input magnetic field), we conjecture this corresponds to the most practical exponent.

To verify this, we first compute the minimum axis-to-coil distance $\sigma_{\mathrm{coil}}$ for both configurations to be
\begin{equation*}
    \sigma_{\mathrm{coil,RE}}=0.241, \qquad \sigma_{\mathrm{coil,LP}} = 0.350.
\end{equation*}
We note that $\sigma_{\mathrm{coil}}<\Sigma$ for both configurations, indicating that the distance-to-coil is the limiting factor for convergence (cf.~Cor.~\ref{cor:convergence}).
In Fig.~\ref{fig:phiic}, we plot the $H^2$ norms of $\phiic_m$ vs $A \sigma_{\mathrm{coil}}^{-m}$, where the coefficient $A$ is found via a best fit for both configurations.
In both cases, we find there is remarkable agreement, indicating that the $H^2$ radius of convergence of $\phiic$ could be used as a proxy for distance-to-coils.

\subsection{Magnetic Field Convergence}

\begin{figure}
    \centering
    \includegraphics[width=1.0\linewidth]{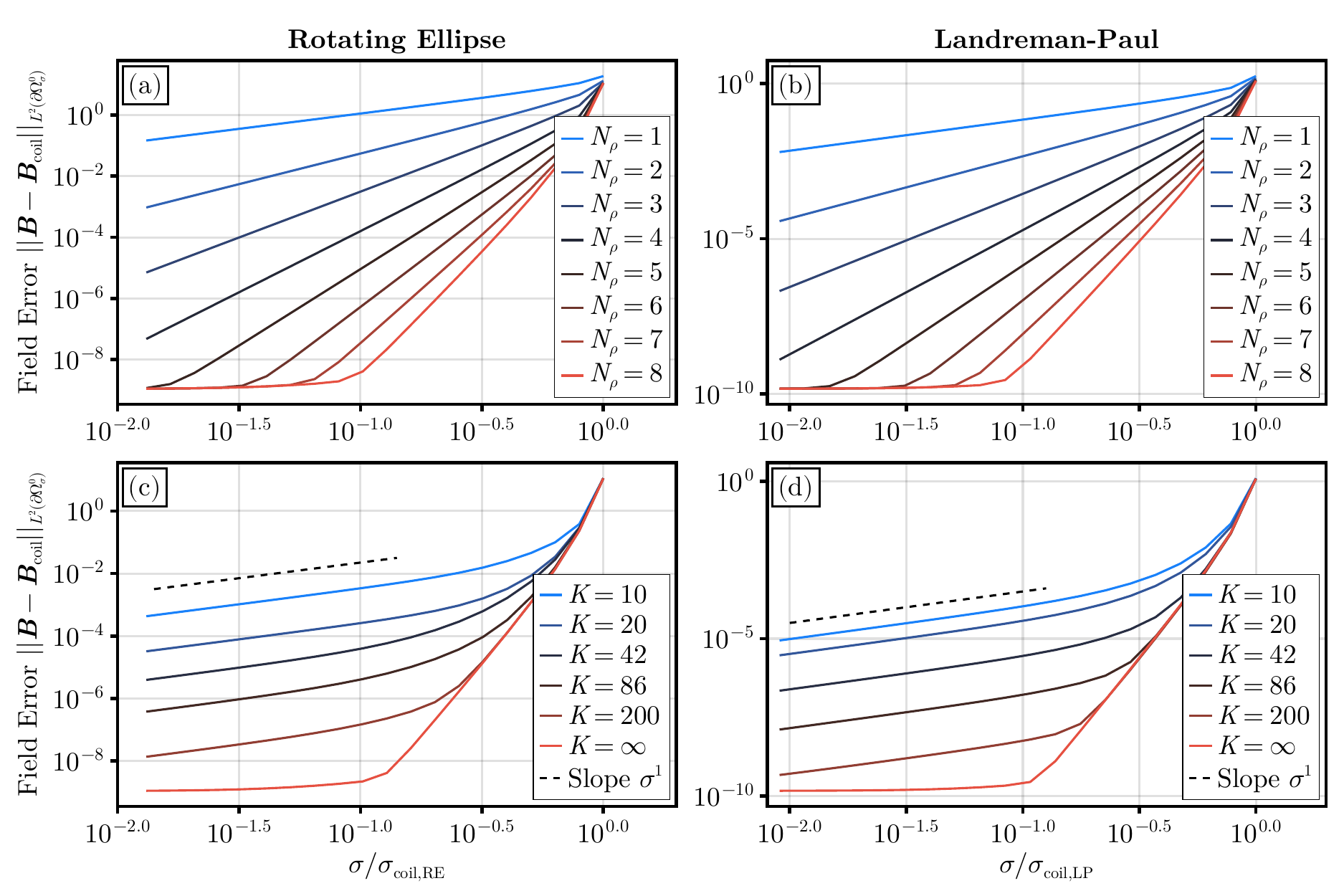}
    \caption{(a-b) The error \eqref{eq:B-error} as a function of the normalized distance from axis $\sigma/\sigma_{\mathrm{coil}}$ for varying orders of approximation $\Nrho$.
    (c-d) The error \eqref{eq:B-error} as a function of $\sigma/\sigma_{\mathrm{coil}}$ for varying values of the regularization parameter $K$ ($K=\infty$ is unregularized). 
    }
    \label{fig:Bfield}
\end{figure}

Next, we compute the near-axis expansion via the procedure in Box \ref{eq:reg-box}.
We perform the expansion both without regularization ($P=0$) and with the regularization operator in \eqref{eq:P-reg}.
For the regularized runs, we use $D=2$ throughout and vary $K$ between $10$ and $200$ to assess how the equilibrium changes between strong and weak regularization respectively.

As a first test of the output convergence, we compare the coil magnetic field against the unregularized expansion.
Because the input is harmonic, we expect that the near-axis expansion will converge from Prop.~\ref{prop:harmonic} (unless floating-point errors overwhelm the solution, which is not observed to this order). 
We verify this by computing the $L^2$ magnetic field error on surfaces about the axis
\begin{equation}
\label{eq:B-error}
    \norm{\Bbf - \Bbf_{\mathrm{coil}}}_{L^2(\partial \Omega_\sigma^0)} = \frac{1}{4\pi^2} \int_0^{2\pi} \int_{0}^{2\pi} \abs{\Bbf(\sigma,\theta,s) - \Bbf_{\mathrm{coil}}(\sigma,\theta,s)}^2 \dif \theta \df s,
\end{equation}
where $\Bbf$ is computed from the near-axis expansion and $\Bbf_{\mathrm{coil}}$ is computed directly from Biot-Savart.

In Fig.~\ref{fig:Bfield} (a-b), we plot the error \eqref{eq:B-error} versus the normalized distance-from-axis $\sigma/\sigma_{\mathrm{coil}}$.
This error is computed for the approximation
\begin{equation*}
    \phi_{\leq \Nrho} \approx \sum_{m=2}^{\Nrho} \phi_m \rho^m, \qquad \Bbf_{\leq \Nrho-1} = \nabla \phi_{\leq \Nrho} + \tilde{\Bbf}_{\leq \Nrho-1}
\end{equation*}
where $\Nrho$ is varied from $1$ to $8$. 
For both configurations, we find that the error of the magnetic field obeys the expected power law
\begin{equation*}
    \norm{\Bbf_{\leq \Nrho-1} - \Bbf_{\mathrm{coil}}}_{L^2(\partial \Omega_\sigma^0)} = \Ocal(\sigma^{-\Nrho}).
\end{equation*}
The error curves for varying $\Nrho$ meet at $\sigma = \sigma_{\mathrm{coil}}$, indicating that the output radius of convergence is limited by the coils.
This tells us that the limit of convergence $\sigma = \sigma_0$ is achievable in Corollary \ref{cor:convergence}.

Turning to the effects of regularization, we fix $\Nrho=9$ and plot the error \eqref{eq:B-error} versus $\sigma/\sigma_{\mathrm{coil}}$ for varying $K$ between $10$ and $200$ in Fig.~\ref{fig:Bfield} (c-d).
We also include the unregularized solution, labeled with $K=\infty$. 
We find that as the regularization becomes stronger ($K$ decreases), the magnetic field loses fidelity near the core.
We attribute to the increasing loss of accuracy of the high-wavenumber $s$ modes, while the low-wavenumber modes maintain accuracy.
Then, far from the axis, the regularized error inflects to begin to agree with the rate of convergence of the unregularized solution.
So, while the solution loses a high-wavenumber fidelity, the low wavenumbers maintain a similar level of accuracy.
Comparing Figs.~\ref{fig:Bfield} (a-b) to (c-d), we see that a regularized high-order expansion can achieve an equivalent error to an unregularized lower-order expansion near the axis while maintaining that fidelity far from the axis.

To address the role of regularization more fully, however, we need to consider how the fidelity of the expansion on less tuned inputs. 
To do this, we perturb $\rbf_0$, $B_0$, and $\phiic$ by the random functions as
\begin{align}
\nonumber
    \delta r_{0,i} &= \epsilon \sum_{\ell = -\Ns}^{\Ns} \frac{X_{i\ell}}{1 + (\ell/K_{\epsilon})^{4+q}} , \\
\nonumber
    \delta B_0 &=  \epsilon \sum_{\ell = -\Ns}^{\Ns}\frac{Y_{\ell}}{1 + (\ell/K_{\epsilon})^{1+q}} , \\
\label{eq:perturbations}
    \delta \phiic_{mn} &= \epsilon \sum_{\ell = -\Ns}^{\Ns} \frac{Z_{mn\ell}}{1 + (m/K_{\epsilon})^{q+2D} + (\ell/K_{\epsilon})^{q+2D}} , && \quad n\in \{0,m\}.
\end{align}
where $X_{i\ell}$, $Y_\ell$, and $Z_{mn\ell}$ are i.i.d.~unit normal random variables, $q = 2$, and $\epsilon = 10^{-6}$ for the rotating ellipse and $\epsilon=10^{-4}$ for Landreman-Paul.
We have chosen the regularity of the perturbation to align with the inputs in Box \eqref{eq:reg-box}.

\begin{figure}
    \centering
    \includegraphics[width=1.0\linewidth]{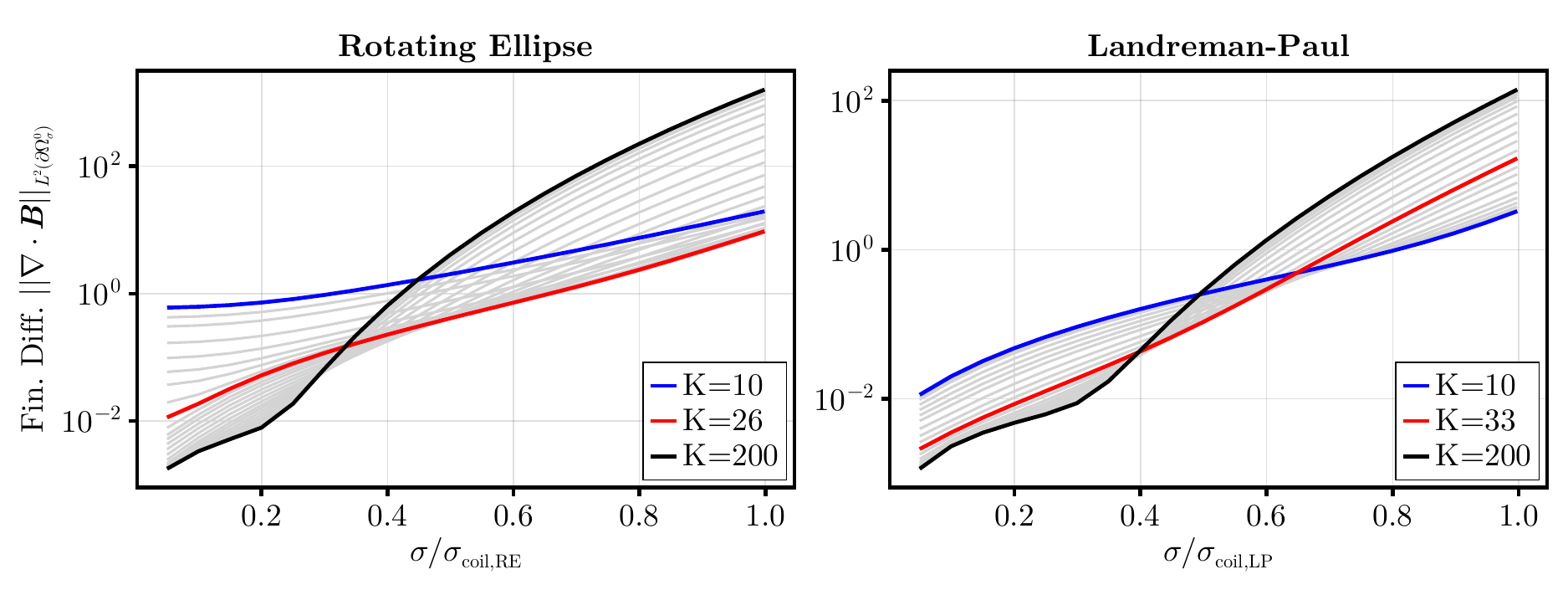}
    \caption{Finite difference residual \eqref{eq:FD-resid} as a function of the normalized distance from axis $\sigma/\sigma_{\mathrm{coil}}$ for the perturbed rotating ellipse and Landreman-Paul inputs (see Eq.~\ref{eq:perturbations}).
    For both plots, three lines are colored and labeled, while the gray lines represent other values of $K$ interpolating between $K=10$ and $K=200$. 
    }
    \label{fig:finite-diff}
\end{figure}

In Fig.~\ref{fig:finite-diff}, we consider the accuracy of the solution to the perturbed problem for $K$ varying between $10$ and $200$ for both examples with $N_\rho = 9$ fixed. 
To measure the accuracy, we no longer have a coil set to compare the solution directly. 
So, we instead measure the residual of Poisson's equation
\begin{equation}
\label{eq:FD-resid}
    \norm{\nabla \cdot \Bbf}_{L^2(\partial \Omega_\sigma^0)} = \frac{1}{4\pi^2} \int_0^{2\pi} \int_{0}^{2\pi} \abs{\nabla \cdot (\nabla \phi + \tilde{\Bbf})}^2 \dif \theta \df s,
\end{equation}
where we evaluate every derivative (including in the metric) via finite differences. 
For both perturbed examples, the best solution near the axis is the lightly regularized $K=200$ solution.
However, beyond a certain radius between $0.3\sigma_{\mathrm{coil}}$ and $0.4 \sigma_{\mathrm{coil}}$, more regularized solutions improve upon the less regularized ones in the finite difference metric. 
For our examples, we find that $K=26$ for the rotating ellipse and $K=33$ for Landreman-Paul are perhaps the best choices in practice.
This figure potentially indicates a more general principle: the further from the axis one wants accuracy of the expansion, the more regularized the expansion likely has to be.

\subsection{Magnetic Coordinate Convergence}
\begin{figure}
    \centering
    \includegraphics[width=1.0\linewidth]{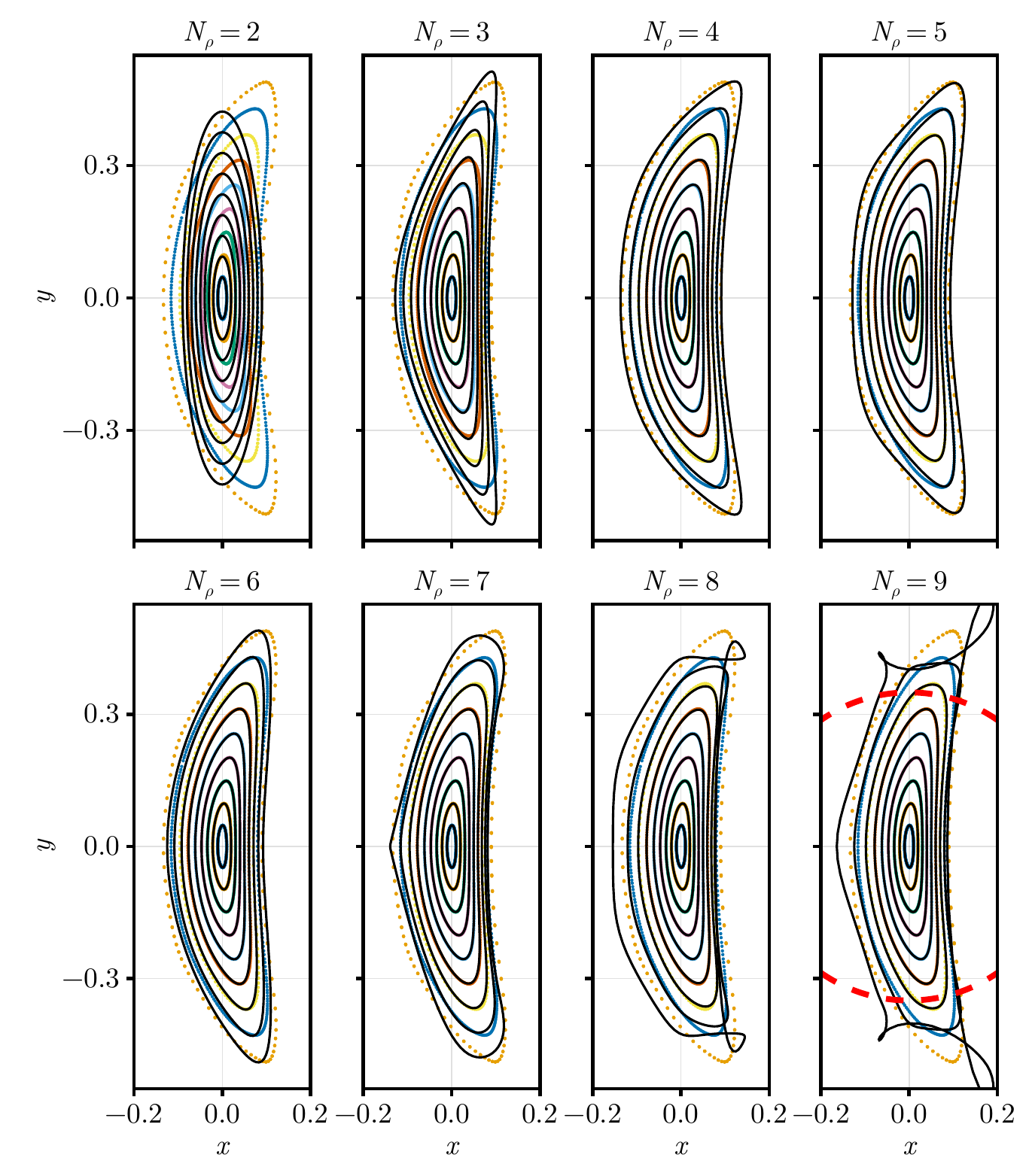}
    \caption{(black) Near-axis approximations of flux surfaces for varying orders of approximation $N_\rho$; (color) a Poincar\'e plot of the true coil magnetic field lines. 
    In the final $N_\rho = 9$ panel, we plot a circle with radius $\sigma_{\mathrm{coil,LP}}$ in red.
    }
    \label{fig:surfaces}
\end{figure}

To compute straight field-line coordinates, we return to the unregularized Landreman-Paul configuration.
Then, using the straight field-line magnetic coordinate equation from Box \ref{eq:fieldline-box}, we compute the approximate coordinates $(\xi,\eta)$ for $N_\rho$ varying between $2$ and $9$, where we note the $N_\rho=2$ approximation of $\phi$ provides the leading-order field-line behavior.
To find flux surfaces, we then invert $(\xi,\eta)$ to find the distance-to-axis coordinates $(x(\xi,\eta,s),y(\xi,\eta,s))$, where magnetic surfaces are parameterized by $\xi^2 + \eta^2 = \psi$.

In Fig.~\ref{fig:surfaces}, we plot in black the computed surfaces on the $s=0$ Poincar\'e section for varying values of $N_\rho$. 
For comparison, we plot the intersections of coil magnetic field lines in the background.
At leading order, we see the surfaces are elliptical, while higher orders account for more shaping in the $\theta$ direction. 
Then, as the order increases beyond $5$, the surfaces surfaces away from the core start diverging.

To investigate this divergence, we plot a red circle of constant radius $\sigma_{\mathrm{coil,LP}}$ in the $N_\rho = 9$ panel.
We see that the circle appears to separate the divergent surfaces from the convergent ones.
We believe this is the likely reason for the divergence, however there are still other possibilities.

\begin{figure}
    \centering
    \includegraphics[width=1.0\linewidth]{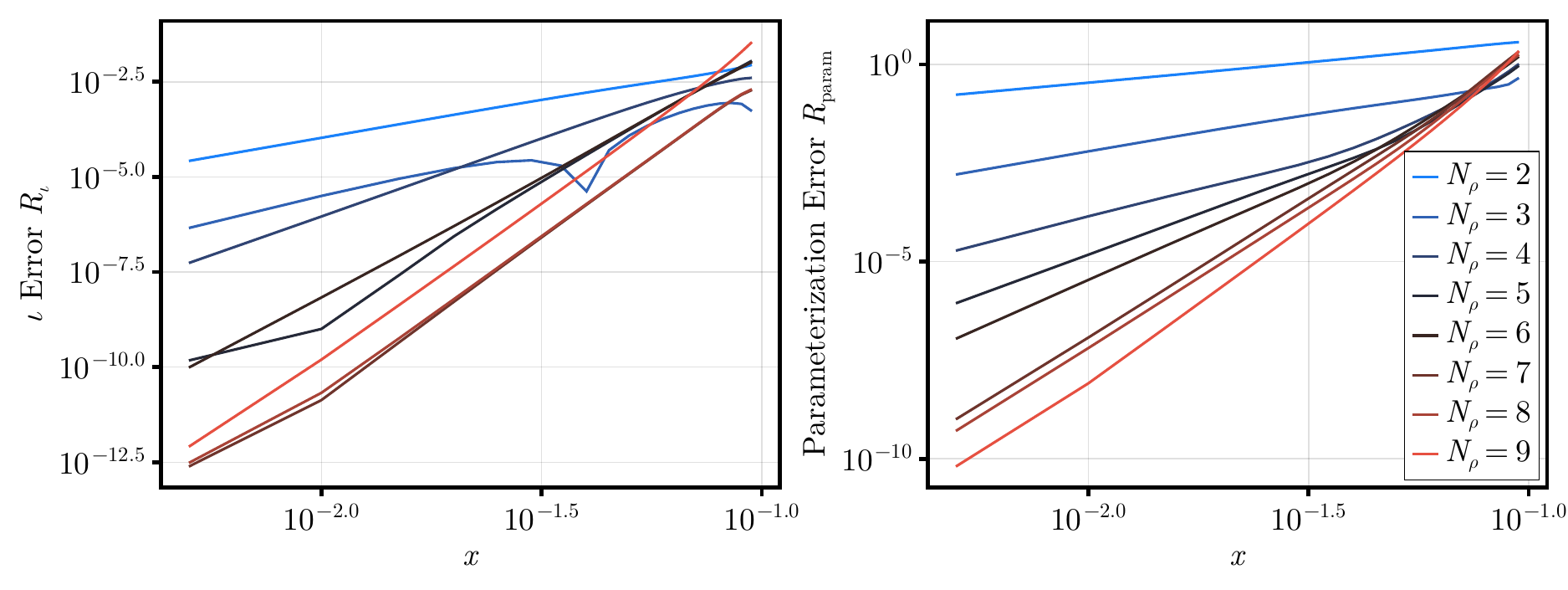}
    \caption{(left) The rotational transform $R_\iota$ and (right) the parameterization error $R_{\mathrm{param}}$ defined in Eq.~\ref{eq:Riota} as a function of the inboard $x$ distance (cf.~Fig.~\ref{fig:surfaces}) for varying orders of approximation $N_\rho$. 
    }
    \label{fig:surface-conv}
\end{figure}

To assess the errors of surfaces closer to axis, we turn to a more quantitative measure.
To do this, we first use the method from \citet{ruth_finding_2024} in the \texttt{SymplecticMapTools.jl} package \citep{smt-jl} to compute invariant circles $(x_{\mathrm{coil}}(\theta), y_{\mathrm{coil}}(\theta))$ and the rotational transform $\iota_{\mathrm{coil}}$ on the cross section from the Poincar\'e plot trajectories.
Then, as a function of the inboard distance $x$ from the axis (see Fig.~\ref{fig:surfaces}), we compute rotational transform and parameterization errors as
\begin{align}
\label{eq:Riota}
    R_{\iota} &= \abs{\iota - \iota_{\mathrm{coil}}}, \\ 
\nonumber
    R_{\mathrm{param}} &= \left[\int_{0}^{2\pi} (x(\psi(x,0,0),\theta,0)-x_{\mathrm{coil}}(\theta))^2 + (y(\psi(x,0,0),\theta,0)-y_{\mathrm{coil}}(\theta))^2 \dif \theta \right]^{1/2}.
\end{align}
In Fig.~\ref{fig:surface-conv}, we plot both errors with varying $N_\rho$.
In both cases, the rotational transform and parameterization converge to high accuracy near the core.
However, they begin to diverge before the the outermost surface, agreeing with the visual divergence in Fig.~\ref{fig:surfaces}. 

\section{Conclusion}
\label{sec:conclusion}
In this paper, we have investigated the convergence of the near-axis expansion in vacuum, both theoretically and numerically.
From the theoretical point of view, we showed in Theorem \ref{thm:illposed} that the near-axis expansion is ill-posed, even in the relatively simple case of vacuum fields.
However, as shown in Theorem \ref{thm:reg-theorem}, we found the near-axis problem can be regularized giving a guarantee of convergence for appropriately smooth input data.
In particular, this tells us that a truncated near-axis expansion is an approximation to the solution of the regularized problem.
Combining this with Proposition \ref{prop:a-posteriori}, we find can estimate the error of the regularized expansion from a true solution.

From the numerical results, we have verified that the near-axis expansion can converge in vacuum.
This includes convergence of surfaces, where we have shown that the rotational transform and surface parameterizations can be approximated near the axis to high accuracy. 
Moreover, we demonstrated that the radius of convergence of the expansion is directly tied to the minimum distance to coils.
Under perturbation, we found that the regularization reduces the residual of Poisson's equation far from the axis.

Our analysis suggests that the following four quantities should be kept in mind for future optimization problems:
\begin{itemize}
    \item The axis, on-axis field, and higher moments should all be sufficiently regular for the expansion to converge (see Box~\ref{eq:reg-box}). These can be enforced, e.g., by Sobolev norms on the inputs of the near-axis expansion.
    \item In particular, the $H^2$-norm radius of convergence of $\phiic$ appears to indicate the distance to coils from the axis (see Fig.~\ref{fig:phiic}). This gives a potential metric for plasma-coil distance.
    \item The axis curvature also limits to the radius of convergence, so this should be small relative to the desired minimum distance to coils.
    \item In the case that the above terms are not sufficient, the error in Proposition \ref{prop:a-posteriori} could be used to monitor the accuracy of the solutions.
\end{itemize}
Using these metrics, a moderate-order near-axis expansion (say, $4\leq N \leq 6$) could be used to explore the space of stellarators more effectively.
This could allow for the use of new near-axis optimization problems.

Looking forward, these results indicate that regularization is likely also required for the near-axis expansion to converge in pressure.
The form of equation \ref{eq:fictitious-current} gives a potential path forward, where the regularization could be expressed as a fictitious current.
Physically, a link between regularization and extended MHD models that provide additional current contributions can be studied.
However, the issue of small denominators for near-rational $\iota_0$ (see Eq.~\eqref{eq:xi-coefficient-formula}) will appear, which will combine with the regularized expansion in a non-trivial way in pressure.
It remains to be seen whether regularization can be used for improved convergence of these surfaces.

\acknowledgments{The authors would like to acknowledge the helpful conversations with Tony Xu, Sean Yang, Gokul Nair, and Joshua Burby in the development of this work.}

\paragraph{\textbf{Funding.}} 
This work was supported by a grant from the Simons Foundation (No. 560651, D B) and by DOE DE-SC0024548.
This material is based upon work supported by the National Science Foundation under Grant No.2409066. Any opinions, findings, and conclusions or recommendations expressed in this material are those of the author(s) and do not necessarily reflect the views of the National Science Foundation.

\paragraph{\textbf{Declaration of Interests.}} The authors report no conflict of intererst.

\paragraph{\textbf{Data availability statement.}} The data that support the finding of this study are openly available in the package StellaratorNearAxis.jl at the GitHub repository \verb|https://github.com/maxeruth/StellaratorNearAxis.jl|.

\appendix

\section{Proofs}
\label{app:proofs}
\subsection{Proof of Proposition \ref{prop:harmonic}}
\label{subsec:prop-harmonic-proof}
Because $\Bbf$ is a vacuum field, its Cartesian components $(B^1,B^2,B^3)$ are also harmonic and therefore real-analytic, meaning at each point on the axis $\rbf_0(s)$ it has a uniformly convergent Taylor series in a ball of size $(\sqrt{2}-1)\sigma$ \citep[][Theorem 1.28]{axler_harmonic_2001}. 
By choosing the coefficients $\phi_{mn}$ to match the Taylor series at each point, we find that the near-axis expansion is uniformly convergent near the axis.
Because $\Bbf$ is harmonic the coefficients must satisfy the near-axis problem \eqref{eq:nae-box}.
Finally, because the solution to the near-axis expansion is unique (Prop.~\ref{prop:formal-solvability}), the proposition is proven.

\subsection{Proof of Proposition \ref{prop:continuity}}
\label{subsec:prop-continuity-proof}
We start with a lemma on derivatives on $(\sigma,W)$-analytic functions.
\begin{lemma}
\label{lemma:xy-deriv}
    Let $(x,y) = (\rho \cos \theta, \rho \sin \theta)$ and $W$ be $\Ccal^q(\Tbb^2)$ or $H^q(\Tbb^2)$ for $q\geq0$. 
    The derivatives $\pd{}{x}$ and $\pd{}{y}$ are bounded operators from $(\sigma,W)$-analytic functions to $(\sigma',W)$-analytic functions for all $0<\sigma'<\sigma$.
\end{lemma}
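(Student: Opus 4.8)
The plan is to compute the action of $\pd{}{x}$ and $\pd{}{y}$ explicitly on the near-axis monomials and then to show that, order by order, differentiation costs only a factor linear in the degree $m$ --- a cost absorbed by the geometric gain from shrinking $\sigma$ to $\sigma'$. The key observation is that the near-axis basis functions are exactly the monomials $z^n\bar z^{m-n}$ in the complex coordinate $z = x + \I y = \rho e^{\I\theta}$: writing the angular index as $2n-m$, one has $\rho^m e^{(2n-m)\I\theta} = z^n \bar z^{m-n}$ with $0\leq n \leq m$. Since $\pd{}{x} = \pd{}{z} + \pd{}{\bar z}$ and $\pd{}{y} = \I(\pd{}{z} - \pd{}{\bar z})$, and $\pd{}{z}(z^n \bar z^{m-n}) = n z^{n-1}\bar z^{m-n}$ (similarly for $\pd{}{\bar z}$), differentiation lowers the $\rho$-degree by exactly one and leaves the $s$-dependence untouched. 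First I would record the resulting recursion,
\begin{equation*}
    (\partial_x f)_{m-1} = \tfrac12 e^{-\I\theta}(m - \I\partial_\theta) f_m + \tfrac12 e^{\I\theta}(m + \I\partial_\theta)f_m,
\end{equation*}
so that $(\partial_x f)_{m-1}$ is a fixed linear operator applied to the single coefficient function $f_m$, with an analogous expression for $\partial_y$.

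The heart of the argument is the per-order estimate $\norm{(\partial_x f)_{m-1}}_W \leq C_W\, m\, \norm{f_m}_W$, valid for $W = C^q$ and $W = H^q$ with $C_W$ depending only on $q$. The point is that although $\partial_\theta$ generally loses a derivative and is therefore unbounded on $W$, each coefficient $f_m(\theta,s) = \sum_{n=0}^m f_{mn}(s)e^{(2n-m)\I\theta}$ is a trigonometric polynomial of degree at most $m$ in $\theta$. Consequently $\partial_\theta$ merely multiplies its $\theta$-Fourier modes by integers of magnitude $\leq m$: on $H^q$ this is immediate from the Fourier-weight characterization of the norm, and on $C^q$ it follows from Bernstein's inequality applied to each $D_\alpha f_m$. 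Multiplication by the fixed smooth functions $e^{\pm\I\theta}$ is bounded on both $C^q$ and $H^q$ with operator norm independent of $m$. Combining these gives the claimed linear-in-$m$ bound.

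Finally I would sum the geometric series. Writing $F = \norm{f}_{\sigma,W}$ so that $\norm{f_m}_W \leq F\sigma^{-m}$, the per-order estimate yields
\begin{equation*}
    \sigma'^{\,m-1}\norm{(\partial_x f)_{m-1}}_W \leq \frac{C_W F}{\sigma}\, m \left(\frac{\sigma'}{\sigma}\right)^{m-1}.
\end{equation*}
Because $0 < \sigma'/\sigma < 1$, the quantity $m(\sigma'/\sigma)^{m-1}$ is bounded uniformly in $m$, so taking the supremum over $m$ shows $\norm{\partial_x f}_{\sigma',W} \leq C \norm{f}_{\sigma,W}$, and the identical computation handles $\partial_y$. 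I expect the main obstacle to be precisely this apparent derivative loss from $\partial_\theta$: the estimate would fail at a fixed $W$ were the coefficients not band-limited, and it is the interplay between band-limitedness (yielding a factor $m$ rather than an uncontrolled derivative) and the strict inequality $\sigma' < \sigma$ (yielding geometric decay that defeats the factor $m$) that makes the operator bounded between the two analytic scales. A minor concluding point is that a $(\sigma',W)$-analytic series may be differentiated termwise on $\rho < \sigma'$, so the series just bounded genuinely represents the classical derivative $\partial_x f$, of the required form \eqref{eq:analytic-form}.
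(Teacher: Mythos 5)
Your proof is correct and follows essentially the same route as the paper's: the paper writes $\pd{}{x} = \cos\theta\,\pd{}{\rho} - \frac{\sin\theta}{\rho}\pd{}{\theta}$ and bounds both $\pd{}{\rho}(\rho^m f_m)$ and $\frac{1}{\rho}\pd{}{\theta}(\rho^m f_m)$ by $m\rho^{m-1}\norm{f_m}_W$ (the latter using exactly the band-limitedness you invoke), then absorbs the factor $m$ via $(\sigma'/\sigma)^m$, just as you do. Your Wirtinger-derivative formulation is the same operator in equivalent notation, and your explicit appeal to Bernstein's inequality for the $C^q$ case merely spells out a step the paper asserts without comment.
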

\begin{proof}
    We will prove this for the $x$ derivative, as the proof for the $y$ derivative is identical.
    First, we observe that $x$ derivatives preserve the analytic structure \eqref{eq:analytic-form}.
    Let $f$ be $(\sigma,W)$-analytic where $W$ is $H^q$ or $\Ccal^q$ for $q \geq 1$. 
    In polar coordinates, we have 
    \begin{equation*}
        \pd{f}{x} = \cos \theta \pd{f}{\rho}- \frac{\sin \theta}{\rho} \pd{f}{\theta}.
    \end{equation*}
    For both $\rho$ and $\theta$ we have 
    \begin{equation*}
        \norm{\pd{}{\rho}(\rho^m f_m)}_{W} \leq m \rho^{m-1} \norm{f_m}_{W}, \qquad \norm{\frac{1}{\rho}\pd{}{\theta}(\rho^m f_m)}_{W} \leq m \rho^{m-1} \norm{f_m}_{W}.
    \end{equation*}
    Multiplying by $\sin \theta$ and $\cos \theta$ is bounded on both $H^q$ and $\Ccal^q$, so
    \begin{align*}
        \norm{\left(\pd{f}{x}\right)_{m}}_W &\leq C (m+1) \norm{f_{m+1}}_W , \\
        &\leq C (m+1) \norm{f}_{\sigma,W} \sigma^{-m-1}, \\
        &\leq \frac{C(m+1)(\sigma')^m}{\sigma^{m+1}} \norm{f}_{\sigma,W} (\sigma')^{-m} 
        \leq C \norm{f}_{\sigma,W} (\sigma')^{-m}.
    \end{align*}
\end{proof}

\begin{lemma}
\label{lemma:s-derivatives}
    Let $q\geq 1$. The derivatives $\pd{}{s}$ and $\pd{}{\theta}$ are bounded operators from 
    \begin{enumerate}
        \item $(\sigma,\Ccal^q)$-analytic functions to $(\sigma,\Ccal^{q-1})$-analytic functions and,
        \item $(\sigma,H^q)$-analytic functions to $(\sigma,H^{q-1})$-analytic functions.
    \end{enumerate}
\end{lemma}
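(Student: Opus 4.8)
The plan is to exploit a structural contrast with Lemma~\ref{lemma:xy-deriv}. Whereas $\pd{}{x}$ and $\pd{}{y}$ lower the order in $\rho$ by one --- and hence reach the $(m+1)$-st coefficient, forcing the restriction $\sigma'<\sigma$ --- both $\pd{}{s}$ and $\pd{}{\theta}$ act entirely within a fixed order $m$. Throughout, let $W$ denote either $\Ccal^q(\Tbb^2)$ or $H^q(\Tbb^2)$ and let $W'$ be the corresponding space with $q$ replaced by $q-1$. First I would verify that these two derivatives preserve the analytic form \eqref{eq:analytic-form}. For $\pd{}{s}$ this is immediate, since $s$-differentiation leaves the angular variable untouched; the $m$-th coefficient of $\pd{f}{s}$ is exactly $\pd{f_m}{s}$, which retains the frequencies $2n-m$. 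For $\pd{}{\theta}$ the only point worth a remark is that $\pd{}{\theta}e^{(2n-m)\I\theta} = (2n-m)\I\,e^{(2n-m)\I\theta}$ keeps the same angular frequency, so $\pd{f}{\theta}$ again has the form \eqref{eq:analytic-form}, now with $m$-th coefficient $\pd{f_m}{\theta}$.

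Having identified the coefficient maps, I would reduce to a coefficientwise bound on $\Tbb^2$ that is uniform in $m$. The content here is the elementary fact that a single differentiation costs exactly one order of regularity in either norm: expanding $\norm{\pd{f_m}{s}}_{W'}$ through the defining sum, every term $D_\alpha \pd{f_m}{s}$ with $\abs{\alpha}\leq q-1$ is a derivative of $f_m$ of total order at most $q$, hence is already among the terms controlled by $\norm{f_m}_{W}$. The same holds verbatim for $\pd{}{\theta}$. This gives a constant $C$, independent of $m$, with $\norm{\pd{f_m}{s}}_{W'} \leq C \norm{f_m}_{W}$ and likewise for the $\theta$-derivative, valid for $W$ equal to $\Ccal^q$ or $H^q$.

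Finally I would assemble the analytic norm directly. Because the order $m$ is unchanged by the derivative, no shrinkage of $\sigma$ is required and we simply bound the supremum term by term:
\begin{equation*}
    \norm{\pd{f}{s}}_{\sigma,W'} = \sup_m \sigma^m \norm{\pd{f_m}{s}}_{W'} \leq C\sup_m \sigma^m \norm{f_m}_{W} = C\norm{f}_{\sigma,W},
\end{equation*}
and identically for $\pd{}{\theta}$, establishing boundedness in both cases (i) and (ii). This lemma is essentially routine; the only subtlety worth flagging is the frequency-preservation check for $\pd{}{\theta}$, which is what guarantees the image genuinely lies in the space of $(\sigma,W')$-analytic functions rather than merely being bounded coefficient by coefficient.
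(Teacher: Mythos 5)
Your proof is correct and follows essentially the same route as the paper, whose entire argument is the one-line observation that $\pd{}{s}$ and $\pd{}{\theta}$ are bounded from $\Ccal^q$ to $\Ccal^{q-1}$ and from $H^q$ to $H^{q-1}$, applied coefficientwise. You have simply made explicit the details the paper leaves implicit (form preservation, the order-$m$-by-order-$m$ bound, and the fact that no shrinkage of $\sigma$ is needed), which are all accurate.
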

\begin{proof}
    Simply notice $\pd{}{s}$ and $\pd{}{\theta}$ are bounded from $\Ccal^q$ to $\Ccal^{q-1}$ and from $H^{q}$ to $H^{q-1}$.
\end{proof}

Combining the two above lemmas, if we choose $\sigma'<\sigma$, $q \geq 0$, and $f$ to be a $(\sigma,\Ccal^q)$-analytic function, then for all $m,n,\ell>0$ such that $m+n+\ell\leq q$, the function
\begin{equation*}
    g = \pd{f}{x^n \partial y^m \partial s^\ell}
\end{equation*}
is $(\sigma', \Ccal^0)$-analytic.
So, it suffices to prove that $g$ is continuous in $\Omega_{\sigma'}^0$.

Let $0<\tilde{\rho} < \sigma'$ and $(\tilde{\theta},\tilde{s})\in\Tbb^2$. 
Then, choose $\rho^*$ such that $\tilde{\rho} < \rho^*<\sigma'$. 
We will show that $g$ is continuous at the point $(\tilde{\rho},\tilde{\theta},\tilde{s})$.
Letting let $\rho_1,\rho_2\leq \rho^*$, we can establish a Lipschitz bound of $g$ in $\rho$:
\begin{align*}
    \abs{g(\rho_1, \theta, s) - g(\rho_2,\theta,s)} &\leq \sum_{m=\beta}^\infty \norm{g_m}_{C} \abs{\rho_1^{m} - \rho_2^{m}}, \\
    &\leq G\frac{\abs{\rho_1-\rho_2}}{\rho^*} \sum_{m=1}^{\infty}m \left(\frac{\rho^*}{\sigma}\right)^{m} \leq L \abs{\rho_1-\rho_2}.
\end{align*}
Now, let $(\rho_m,\theta_m,s_m) \to (\tilde{\rho},\tilde{\theta},\tilde{s})\in \Omega_\sigma^0$ and $\sup(\rho_m) < \rho^* < \sigma$. We have
\begin{equation*}
    \abs{ g(\rho_m,\theta_m,s_m) - g(\tilde{\rho},\tilde{\theta},\tilde{s})} 
     \leq L \abs{\rho_m-\rho} + \abs{ g(\tilde{\rho},\theta_m,s_m)-  g(\tilde{\rho},\tilde{\theta},\tilde{s})}.
\end{equation*}
Because surfaces of $g$ converge in $C$, both terms converges to zero giving our result.

\subsection{Proof of Theorem \ref{thm:illposed}}
\label{subsec:illposed-proof}
To prove this theorem, we will begin with a few facts about operators on $(\sigma,q)$-analytic functions. 
\begin{lemma}
\label{lemma:analytic-multiplication}
    Let $f$ be $(\sigma,H^q)$-analytic and $g$ be $(\Sigma,\Ccal^q)$-analytic where $0 < \sigma \leq \sigma_0 < \Sigma$ and $q\geq 0$. Then $fg$ is $(\sigma,H^q)$-analytic with
    \begin{equation*}
        \norm{fg}_{\sigma,q} \leq C \norm{g}_{\Sigma,\Ccal^q} \norm{f}_{\sigma, H^q}, 
    \end{equation*}
    where $C$ only depends on $q$, $\sigma_0$, and $\Sigma$. 
\end{lemma}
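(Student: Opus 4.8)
The plan is to reduce the analytic-norm estimate to a single pointwise multiplication inequality in the angular variables, and then sum a geometric series whose convergence is guaranteed precisely because $g$ is analytic on a strictly larger disk ($\Sigma > \sigma_0 \geq \sigma$). First I would record the Cauchy-product structure of the coefficients: writing $f$ and $g$ in the form \eqref{eq:analytic-form}, the coefficient of $\rho^m$ in $fg$ is $(fg)_m = \sum_{\ell=0}^m f_{m-\ell}\, g_\ell$. A quick parity-and-range check on the $\theta$-frequencies confirms that each summand, and hence $(fg)_m$, again has frequencies in $\{-m,-m+2,\dots,m\}$, so $fg$ genuinely has the analytic form \eqref{eq:analytic-form}; this is the minor bookkeeping step needed before the norm is even defined.

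The key step is a Moser-type multiplication estimate on $\Tbb^2$: for each $q\geq 0$ there is a constant $C_q$ with $\norm{uv}_{H^q} \leq C_q \norm{v}_{C^q}\norm{u}_{H^q}$. I would prove this by the Leibniz rule, writing $D^\alpha(uv)$ for $\abs{\alpha}\leq q$ as a sum of terms $D^\beta v\, D^{\alpha-\beta}u$ with $\abs{\beta}\leq q$, bounding $D^\beta v$ in $L^\infty$ by $\norm{v}_{C^q}$ and $D^{\alpha-\beta}u$ in $L^2$ by $\norm{u}_{H^q}$, and summing over $\alpha$. Applying this coefficientwise gives
\[
\norm{(fg)_m}_{H^q} \leq \sum_{\ell=0}^m \norm{f_{m-\ell}\, g_\ell}_{H^q} \leq C_q \sum_{\ell=0}^m \norm{g_\ell}_{C^q}\,\norm{f_{m-\ell}}_{H^q}.
\]

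To finish, I would insert the geometric coefficient bounds $\norm{f_{m-\ell}}_{H^q} \leq \norm{f}_{\sigma,H^q}\,\sigma^{-(m-\ell)}$ and $\norm{g_\ell}_{C^q} \leq \norm{g}_{\Sigma,C^q}\,\Sigma^{-\ell}$, which are immediate from the definitions of the analytic norms. Multiplying through by $\sigma^m$ collapses the $(m-\ell)$-dependence and leaves the sum $\sum_{\ell=0}^m (\sigma/\Sigma)^\ell$. Since $\sigma \leq \sigma_0 < \Sigma$, this is dominated by $\sum_{\ell=0}^\infty (\sigma_0/\Sigma)^\ell = (1-\sigma_0/\Sigma)^{-1}$, a constant depending only on $\sigma_0$ and $\Sigma$. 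Taking the supremum over $m$ yields $\norm{fg}_{\sigma,q} \leq C\,\norm{g}_{\Sigma,C^q}\norm{f}_{\sigma,H^q}$ with $C = C_q(1-\sigma_0/\Sigma)^{-1}$. Finiteness of this supremum also forces geometric decay of the coefficients, hence absolute (and therefore norm) convergence of the series for every $\rho<\sigma$, so $fg$ is indeed $(\sigma,H^q)$-analytic.

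I do not expect a serious obstacle here: the multiplication estimate is standard and the rest is summation. The one genuine subtlety — and the reason the hypotheses on $f$ and $g$ are asymmetric — is that the geometric series converges only because the $C^q$-analyticity radius $\Sigma$ of $g$ \emph{strictly} exceeds the analyticity radius of $f$; were the two radii equal, the sum would grow linearly in $m$ and the uniform bound would fail. Keeping the constant $C$ dependent only on $q,\sigma_0,\Sigma$ (and not on $\sigma$ itself) is what makes the estimate usable in the later iteration arguments.
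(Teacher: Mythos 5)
Your proposal is correct and follows essentially the same route as the paper's proof: a Cauchy product of coefficients, the algebra estimate $\norm{uv}_{H^q} \leq C_q \norm{v}_{C^q}\norm{u}_{H^q}$, and domination of $\sum_{\ell=0}^m (\sigma/\Sigma)^\ell$ by the geometric series $(1-\sigma_0/\Sigma)^{-1}$. The only differences are that you spell out two points the paper leaves implicit — the parity/range check that the product retains the analytic form \eqref{eq:analytic-form}, and the Leibniz-rule proof of the multiplication estimate — which makes your write-up slightly more complete but not a different argument.
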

\begin{proof}
    The coefficients of $fg$ are
    \begin{equation*}
        (fg)_m = \sum_{n=0}^m f_n g_{m-n}.
    \end{equation*}
    We can bound the norm of $f_n g_{m-n}$ as
    \begin{equation*}
        \norm{f_n g_{m-n}}_{H^q} \leq C \norm{f_n}_{H^q} \norm{g_{m-n}}_{\Ccal^q} \leq C \norm{f}_{\sigma,H^q} \norm{g}_{\Sigma,\Ccal^q} \sigma^{-n} \Sigma^{-(m-n)},
    \end{equation*}
    where the constant $C$ only depends on $q$. 
    So,
    \begin{align*}
        \norm{(fg)_m}_{H^q} &\leq C \norm{f}_{\sigma,H^q} \norm{g}_{\Sigma,\Ccal^q} \sigma^{-m} \sum_{n = 0}^m \left(\frac{\sigma}{\Sigma}\right)^{-n},\\
        &\leq C \norm{f}_{\sigma,H^q} \norm{g}_{\Sigma,\Ccal^q} \sigma^{-m} \sum_{n = 0}^\infty \left(\frac{\sigma_0}{\Sigma}\right)^{-n},\\
        &\leq C \norm{f}_{\sigma,H^q} \norm{g}_{\Sigma,\Ccal^q} \sigma^{-m} \frac{1}{1-\sigma_0/\Sigma},
    \end{align*}
    giving the result.
\end{proof}

\begin{lemma}
\label{lemma:general-operator-bounded}
    Let $a^\alpha$ be $(\Sigma,\Ccal^{q})$-analytic for all degree 3 multi-indices $\abs{\alpha}\leq m$, $0<\sigma'<\sigma<\Sigma$, and $q\geq 0$.
    The operator 
    \begin{equation*}
        L = \sum_{\abs{\alpha}\leq m} a^\alpha \pd{^{\abs{\alpha}}f}{x^{\alpha_1} \partial y^{\alpha_2} \partial s^{\alpha_3}}
    \end{equation*}
    is bounded from $(\sigma,H^{q+m})$-analytic functions to $(\sigma',H^q)$-analytic functions.
\end{lemma}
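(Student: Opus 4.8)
The plan is to reduce the claim to the three preceding lemmas by treating $L$ as a finite sum of elementary terms. Writing $L = \sum_{\abs{\alpha}\leq m} a^\alpha D^\alpha$ with $D^\alpha = \pd{^{\abs{\alpha}}}{x^{\alpha_1}\partial y^{\alpha_2}\partial s^{\alpha_3}}$, it suffices to bound each summand $f \mapsto a^\alpha D^\alpha f$ from $(\sigma,H^{q+m})$-analytic functions to $(\sigma',H^q)$-analytic functions and then add the (finitely many) resulting bounds. Each summand factors as an iterated differentiation followed by a single multiplication, so I would apply Lemma~\ref{lemma:xy-deriv} to the $\alpha_1$ copies of $\pd{}{x}$ and $\alpha_2$ copies of $\pd{}{y}$, Lemma~\ref{lemma:s-derivatives} to the $\alpha_3$ copies of $\pd{}{s}$, and Lemma~\ref{lemma:analytic-multiplication} to the multiplication by $a^\alpha$.

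The key is to track two separate ``budgets'' as the derivatives are applied. The $x$- and $y$-derivatives preserve the Sobolev order but shrink the radius (Lemma~\ref{lemma:xy-deriv} maps $(\tau,W)$-analytic to $(\tau',W)$-analytic for any $\tau'<\tau$), whereas the $s$-derivatives preserve the radius but cost one Sobolev order apiece (Lemma~\ref{lemma:s-derivatives}). To absorb the radius loss I would fix a finite decreasing chain $\sigma=\tau_0>\tau_1>\dots>\tau_{\alpha_1+\alpha_2}=\sigma'$ and step the radius down by one level at each Cartesian derivative, noting that the trivial embedding $(\tau,W)\hookrightarrow(\tau',W)$ for $\tau'<\tau$ is norm-nonincreasing, so the $s$-derivatives may be taken at any convenient radius. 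Starting from regularity $q+m$ and applying at most $\alpha_3\leq\abs{\alpha}\leq m$ toroidal derivatives, the output of $D^\alpha$ is $(\sigma',H^{q+m-\alpha_3})$-analytic with $q+m-\alpha_3\geq q$; downgrading to $H^q$ if necessary, $D^\alpha f$ is $(\sigma',H^q)$-analytic.

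Finally, since the coefficient $a^\alpha$ is $(\Sigma,\Ccal^q)$-analytic with $\sigma'<\Sigma$, Lemma~\ref{lemma:analytic-multiplication} (applied with its ``$\sigma_0$'' taken to be $\sigma$, so that $\sigma'<\sigma<\Sigma$) shows $a^\alpha D^\alpha f$ is $(\sigma',H^q)$-analytic and obeys $\norm{a^\alpha D^\alpha f}_{\sigma',H^q}\leq C\norm{a^\alpha}_{\Sigma,\Ccal^q}\norm{D^\alpha f}_{\sigma',H^q}$, with $C$ depending only on $q,\sigma,\Sigma$. Summing over the finitely many multi-indices of degree $3$ with $\abs{\alpha}\leq m$ yields boundedness of $L$. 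I expect the only real obstacle to be the bookkeeping in the second step: one must verify that the starting regularity $q+m$ is exactly enough to survive the worst case $\alpha=(0,0,m)$ of purely toroidal differentiation while still leaving the $q$ derivatives needed for the $H^q$-module estimate in Lemma~\ref{lemma:analytic-multiplication}, and that the per-term constants and radius chains can be chosen uniformly across the finite index set.
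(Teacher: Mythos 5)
Your proposal is correct and follows essentially the same route as the paper, whose proof is literally ``Combine Lemmas \ref{lemma:xy-deriv}, \ref{lemma:s-derivatives}, and \ref{lemma:analytic-multiplication}''; your write-up just makes explicit the bookkeeping (the decreasing radius chain for the $x,y$-derivatives, the Sobolev budget $q+m-\alpha_3\geq q$ for the $s$-derivatives, and the choice of $\sigma_0$ in the multiplication lemma) that the one-line proof leaves implicit.
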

\begin{proof}
    Combine Lemmas \ref{lemma:xy-deriv}, \ref{lemma:s-derivatives}, and \ref{lemma:analytic-multiplication}.
\end{proof}

\begin{corollary}
    Let $\bm r_0$ be $\Ccal^{4+q}$ for $q \geq 0$ and $\ell',\kappa>0$.
    There exists a $\Sigma > 0 $ such that the Laplacian, defined in $(x,y,s)$ coordinates via the left-hand-side of \eqref{eq:laplace-coordinate}, is a bounded operator from $(\sigma,H^{q+2})$ to $(\sigma',H^q)$. 
    Similarly, the divergence operator, defined in $(x,y,s)$ coordinates via the right-hand-side of \eqref{eq:laplace-coordinate}, is a bounded operator from $(\sigma,H^{q+1})$ to $(\sigma',H^q)$. 
\end{corollary}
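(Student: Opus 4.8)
The plan is to write each operator, after passing to the non-singular $(x,y,s)$ frame, in the form covered by Lemma~\ref{lemma:general-operator-bounded}, and then check that its coefficients lie in the required differentiable analytic class. For the Laplacian this means exhibiting the left-hand side of \eqref{eq:laplace-coordinate} as a second-order operator $L\phi = \sum_{|\alpha|\le 2} a^\alpha\,\partial^{\alpha_1}_x\partial^{\alpha_2}_y\partial^{\alpha_3}_s\phi$ with each $a^\alpha$ being $(\Sigma,\Ccal^q)$-analytic; Lemma~\ref{lemma:general-operator-bounded} with $m=2$ then yields boundedness from $(\sigma,H^{q+2})$- to $(\sigma',H^q)$-analytic functions for $\sigma'<\sigma<\Sigma$. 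The divergence is handled identically with $m=1$, mapping $(\sigma,H^{q+1})$ to $(\sigma',H^q)$. So everything reduces to two tasks: converting from $(\rho,\theta,s)$ to $(x,y,s)$ and bounding the coefficients.

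For the conversion, I would substitute $\partial_\rho = \cos\theta\,\partial_x + \sin\theta\,\partial_y$ and $\rho^{-1}\partial_\theta = -\sin\theta\,\partial_x+\cos\theta\,\partial_y$ into $\Delta\phi = g^{ij}\partial_{q^i}\partial_{q^j}\phi + (\sqrt g^{-1}\partial_{q^i}(\sqrt g\,g^{ij}))\partial_{q^j}\phi$. The apparently singular pieces $\partial_\rho^2 + \rho^{-1}\partial_\rho + \rho^{-2}\partial_\theta^2$ collapse to the flat operator $\partial_x^2+\partial_y^2$, and the remaining $\rho^{-1}$ factors cancel against the off-diagonal inverse-metric entries; this cancellation is the concrete manifestation of the fact that the analytic structure \eqref{eq:analytic-form} is exactly the structure of functions smooth in Cartesian $(x,y,s)$. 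What remains are genuine $(x,y,s)$-differential operators whose coefficients are polynomials in $\cos\theta,\sin\theta$ (which are $(\Sigma,\Ccal^q)$-analytic with norm independent of $\Sigma$) times rational expressions in $h_s$ and in $\ell',\kappa,\tau$ and their $s$-derivatives.

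The coefficient bounds then follow from the explicit Frenet--Serret metric. Since $h_s = \ell'(1-\kappa\rho\cos\theta)$ is affine in $\rho$, it is $(\Sigma,\Ccal^q)$-analytic provided $\ell',\kappa\in\Ccal^q$; the inverse factors $h_s^{-1},h_s^{-2}$ are geometric series in $\kappa\rho\cos\theta$, and choosing $\Sigma>0$ small enough that $\Sigma$ times the $\Ccal^q$-algebra norm of $\kappa\cos\theta$ is below $1$ (in particular $\Sigma<\min_s\kappa^{-1}$, which is positive because $\kappa>0$ is continuous on $\Tbb$) makes them $(\Sigma,\Ccal^q)$-analytic by the multiplication estimate of Lemma~\ref{lemma:analytic-multiplication} (whose proof applies verbatim to products of two $(\Sigma,\Ccal^q)$-analytic functions). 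The leading second-order coefficients are the inverse-metric entries $g^{ij}$, whose most demanding ingredient is the torsion $\tau\propto\bm r_0'''$, requiring $\tau\in\Ccal^q$; the first-order coefficients come from $\sqrt g^{-1}\partial_{q^i}(\sqrt g\,g^{ij})$ and therefore carry one further $s$-derivative of the metric, so they need $\partial_s\tau\in\Ccal^q$, i.e.\ $\tau\in\Ccal^{q+1}$. Because $\tau$ involves the third derivative of the axis, this is precisely the hypothesis $\bm r_0\in\Ccal^{4+q}$.

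Finally, for the divergence operator $v\mapsto\sqrt g^{-1}\partial_{q^i}(\sqrt g\,v^i)$ the same analysis applies, but the coefficients now involve only $\sqrt g=\rho h_s$ and a single $s$-derivative of it, costing at most three derivatives of $\bm r_0$, so $\Ccal^{4+q}$ is more than enough; Lemma~\ref{lemma:general-operator-bounded} with $m=1$ gives the claimed bound on a $(\sigma,H^{q+1})$-analytic vector field. The main obstacle is the bookkeeping in the middle step: one must verify that the polar singularities genuinely cancel into a bona fide $(x,y,s)$-operator rather than merely assuming it, and keep careful track of the fact that the single $s$-derivative hiding in the first-order coefficients is exactly what raises the axis regularity requirement from three derivatives (enough for the symbol $g^{ij}$) to the full four derivatives stated in the hypothesis.
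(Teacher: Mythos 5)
Your proposal is correct and takes essentially the same route as the paper's proof: both reduce to Lemma \ref{lemma:general-operator-bounded} by exhibiting the operator in $(x,y,s)$ coordinates with $(\Sigma,\Ccal^q)$-analytic coefficients, using the geometric series for $h_s^{-1}$ with $\Sigma \leq \norm{\kappa}_{C^0}^{-1}$ and the same derivative count ($\ell'\in\Ccal^{3+q}$, $\kappa\in\Ccal^{2+q}$, $\tau\in\Ccal^{1+q}$, so the first-order coefficients arising from $\sqrt{g}^{-1}\partial_{q^i}(\sqrt{g}\,g^{ij})$ land in $\Ccal^{q}$). The only cosmetic difference is that the paper writes $g^{-1}$ directly in the $(x,y,s)$ frame, where it is manifestly polynomial in $(x,y)$ divided by $h_s^2$, so the polar-singularity cancellations you verify by hand never need to be checked.
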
 
\begin{proof}
\label{cor:Delta-boundedness}
    In $(x,y,s)$-coordinates, we have $h_s=1-\kappa x$, $\sqrt{g}=\ell'h_s$, and
    \begin{equation*}
        g^{-1} = \frac{1}{h_s^2}\begin{pmatrix}
            h_s^2 + \tau^2y^2  & - \tau^2x y & \frac{\tau y}{\ell'} \\
            -\tau^2 xy & h_s^2 + \tau^2 x^2 & \frac{-\tau x}{\ell'} \\
            \frac{\tau y}{\ell'} & -\frac{\tau x}{\ell'} & \frac{1}{(\ell')^2}
        \end{pmatrix}.
    \end{equation*}
    Because $\bm r_0 \in \Ccal^{4+q}$, $\ell' \in \Ccal^{3+q}$, $\kappa \in \Ccal^{2+q}$ and $\tau \in \Ccal^{1+q}$, so $g^{-1} \in \Ccal^{1+q}$.
    This means the elements of $h_s^2 g^{-1}$ are $(\sigma,\Ccal^{1+q})$-analytic with finite series.
    To include the factor of $h_s^{-2}$, we have
    \begin{equation*}
        \frac{1}{h_s} = \sum_{m=0}^{\infty}(\kappa \cos \theta)^m \rho^m.
    \end{equation*}
    The function $\kappa \cos \theta$ is in $\Ccal^{2+q}$, so $\norm{(\kappa \cos \theta)^m}_{H^{2+q}} \leq C^m \norm{\kappa \cos \theta}_{\Ccal^{2+q}}^m$ for some $C\geq 1$ and $h_s^{-1}$ is $(\Sigma,\Ccal^{2+q})$ for some $0<\Sigma \leq \norm{\kappa}^{-1}_{C^0}$. 
    Finally, after performing the chain rule to bring the operator to the form in Lemma \ref{lemma:general-operator-bounded}, the coefficients are each in $(\Sigma, H^{q})$-analytic, giving the result.
    The same argument applies to the divergence operator.
\end{proof}

The last ingredient needed for the proof of ill-posedness is Cauchy's Estimates for harmonic functions:
\begin{theorem}[Cauchy's Estimates \citep{axler_harmonic_2001}]
\label{thm:CauchysEstimates}
    Let $\alpha = (\alpha_1,\alpha_2, \dots, \alpha_d)$ be a multi-index. Then for some constant $C_\alpha>0$, all harmonic functions $\phi$ bounded by $M$ on the radius-$R$ ball $B(x,R)$ satisfy the inequality
    \begin{equation*}
        \abs{D^\alpha \phi(x)} \leq \frac{C_\alpha M}{R^{\abs{\alpha}}}.
    \end{equation*}
\end{theorem}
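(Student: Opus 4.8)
The plan is to prove the estimate in two stages: first establish the bound for a single first-order derivative using the mean value property together with the divergence theorem, and then bootstrap to an arbitrary multi-index by induction on $\abs{\alpha}$ across a family of nested concentric balls. Throughout, $d$ denotes the ambient dimension, and I rely on two classical facts: (i) every partial derivative of a harmonic function is again harmonic, since differentiation commutes with $\Delta$; and (ii) a harmonic function equals its average over any ball centered at the point in question (the mean value property). These are elementary and self-contained, so the proof does not merely invoke the cited reference as a black box.

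For the first-order estimate, fix a component $\partial_i \phi$, which is harmonic. The mean value property gives $\partial_i \phi(x) = V(B(x,r))^{-1}\int_{B(x,r)} \partial_i \phi \dif V$ for any $0<r<R$. I would then apply the divergence theorem to the vector field $\phi\, \ebf_i$ to rewrite the volume integral of $\partial_i \phi$ as the flux $\int_{\partial B(x,r)} \phi\, n_i \dif S$, where $n_i$ is the $i$-th component of the outward unit normal. Bounding $\abs{\phi}\leq M$ on the sphere and using that the surface-area-to-volume ratio of a radius-$r$ ball in $\Rbb^d$ equals $d/r$, I obtain $\abs{\partial_i \phi(x)} \leq dM/r$; letting $r \to R$ yields the base case $\abs{\partial_i \phi(x)} \leq dM/R$ with constant $C = d$.

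For the inductive step, I assume the bound $\abs{D^\beta \phi(y)} \leq C_{k-1}\, M_y\, R_y^{-(k-1)}$ holds for every multi-index of order $k-1$, every harmonic $\phi$, and every ball, where $M_y$ and $R_y$ denote the sup of $\abs{\phi}$ and the radius of the ball. Given $\abs{\alpha}=k$, write $D^\alpha = \partial_i D^\beta$ with $\abs{\beta}=k-1$ and apply the first-order estimate to the harmonic function $D^\beta \phi$ on the shrunken ball $B(x,R/k)$, which produces the factor $dk/R$ multiplying $\sup_{B(x,R/k)} \abs{D^\beta \phi}$. To control this supremum, I observe that for every $y$ with $\abs{y-x}<R/k$ the ball $B(y,(k-1)R/k)$ is contained in $B(x,R)$, so the inductive hypothesis on this sub-ball bounds $\abs{D^\beta \phi(y)}$ by a constant multiple of $M R^{-(k-1)}$. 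Combining the two factors gives $\abs{D^\alpha \phi(x)} \leq C_k\, M\, R^{-k}$, with $C_k = d\, k^k\, C_{k-1}/(k-1)^{k-1}$ an explicit product of $d$ and powers of $k$; since this depends only on $d$ and $\abs{\alpha}$, the induction closes and furnishes the constant $C_\alpha$ asserted in the theorem.

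The main obstacle, and the reason a naive iteration of the first-order bound fails, is the radius bookkeeping: one cannot control $D^\beta \phi$ throughout $B(x,R)$ from its value at the single point $x$, so I must apply the first-order estimate on the concentric ball $B(x,R/k)$ and then re-apply the inductive hypothesis at each nearby center $y$ using the sub-balls $B(y,(k-1)R/k)$, which are precisely the largest balls about $y$ still contained in $B(x,R)$. Verifying the inclusion $B(y,(k-1)R/k)\subset B(x,R)$ for all $\abs{y-x}<R/k$ is a one-line triangle inequality, and tracking how the constants compound from order $k-1$ to order $k$ is routine; everything else reduces to the mean value property, the divergence theorem, and the elementary surface-to-volume ratio.
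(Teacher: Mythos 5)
The paper does not prove this statement at all: it is imported verbatim from \citet{axler_harmonic_2001}, where it is established by differentiating the Poisson integral representation and estimating derivatives of the Poisson kernel. Your proposal therefore supplies a self-contained proof where the paper has only a citation, and it is a correct one. The base case is sound: $\partial_i \phi$ is harmonic, the mean value property plus the divergence theorem convert $\partial_i\phi(x)$ into the boundary flux of $\phi\,n_i$ over $\partial B(x,r)$, the surface-to-volume ratio $d/r$ gives $\abs{\partial_i\phi(x)}\leq dM/r$, and sending $r\to R^-$ correctly respects the requirement that the closed ball lie inside the region of harmonicity. The inductive step's radius bookkeeping is also right: for $\abs{y-x}<R/k$ the triangle inequality gives $B(y,(k-1)R/k)\subset B(x,R)$, and the two factors compound to $C_k = d\,k^k\,C_{k-1}/(k-1)^{k-1}$ exactly as you claim; since $(k/(k-1))^{k-1}\leq e$, the constant is at most of order $(de)^{\abs{\alpha}}\,\abs{\alpha}!$ and depends only on $d$ and $\abs{\alpha}$, which is all the theorem asserts (implicitly you should take $C_{k-1}$ to be the maximum over the finitely many multi-indices of order $k-1$, which your uniform phrasing already accommodates). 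This is essentially the classical derivative-estimate argument found in \citet[\S 2.2]{evans2010} rather than the Poisson-kernel computation of the cited reference; the trade-off is elementarity versus sharpness of the constant, and since the paper's only use of the theorem, in the proof of Theorem \ref{thm:illposed}, needs merely the existence of some finite $C_\alpha$ with the $R^{-\abs{\alpha}}$ scaling, your weaker constants are entirely adequate.
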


Now, we return to the proof of theorem \ref{thm:illposed}.
It is sufficient to show that $\phi$ is not $(\sigma,H^q)$-analytic for $q = 2$. 
So, consider input data such that $\norm{\phi}_{\sigma,H^2} < \infty$, say constructed via proposition \ref{prop:harmonic}. 
We will focus on perturbations in $B_0$ of the form $\Delta B_0^{(j)} = c_j e^{\I j s}$ where $c_j = j^{-N}$ and $N>q_0+1$ is a positive integer, while $\bm r_0$ and $\phiic$ remain constant. 
Clearly, $\norm{\Delta B_0}_{H^{1+q_0}} \to 0$ as $j\to \infty$.
Let $\phi^{(j)}$ be the formal power series solution of the perturbed problem.

We want to show that $\phi^{(j)}$ cannot converge to a $(\sigma,H^q)$-analytic solution.
In case that the perturbed near-axis expansion for fixed $j$ does not converge in $\Omega_\sigma$, then $\norm{\phi^{(j)}}_{\sigma,H^2} = \infty$ for all $\sigma$ and we are done, as the operator fails to be bounded for a specific function.
Otherwise, suppose the near-axis expansion solution converges on $\Omega_\sigma$ and each $\phi^{(j)}$ is $(\sigma,H^2)$-analytic.
Then using Corollary \ref{cor:Delta-boundedness}, for $\sigma'<\sigma$, $\Delta \phi^{(j)} + \nabla \cdot \tilde{\Bbf}$ is a $(\sigma',L^2)$-analytic function. 
Because $\phi$ satisfies the near-axis expansion, the solution must satisfy $(\Delta\phi^{(j)}+\nabla \cdot \tilde{\Bbf})_m =0$ in $L^2(\Tbb^2)$, further implying that $\Delta\phi^{(j)}+\nabla \cdot \tilde{\Bbf} = 0$ in $L^2(\Omega_{\sigma'}^0)$.
Pulling this back to $\Omega_{\sigma'}$, we are solving the standard Poisson's equation $\Delta \phi^{(j)} = -\nabla \cdot \tilde{\Bbf}$ in $L^2(\Omega_{\sigma'})$. 
If we locally define 
\begin{equation*}
    \psi^{(j)} = \phi^{(j)} + \int_0^s B_0(s')\ell'(s') \dif s', 
\end{equation*}
we find $\Delta \psi^{(j)} = 0$.
That is, $\psi^{(j)}$ is analytic in simply connected subdomains of $\Omega_{\sigma'}$ and the magnetic field is locally the gradient of $\psi^{(j)}$. 

Then, let $B(x,R)$ be a ball around a point on the magnetic axis $\bm r_0(s_0)$.
At $\bm r_0(s_0)$, the order $M>N$ derivative in the tangent direction of $\bm r_0$ of $\psi^{(j)}$ takes the polynomial form
\begin{equation*}
    \pd{^M\psi^{(j)}}{s^M} = \pd{^M B_0}{s^M} + c_j \sum_{\ell = 0}^M a_\ell j^\ell,
\end{equation*} 
where $a_{M} = (\I / \ell'(s_0))^{M} e^{\I j s_0} \neq 0$ and $a_\ell$ for $\ell<M$ contain higher order derivatives of the axis.
As such, there is a $J$ such that $j > J$ implies that
\begin{equation*}
    \abs{\pd{^M\psi^{(j)}}{s^M}} > \frac{\abs{a_N}}{2} j^{M-N}.
\end{equation*}
Then, by theorem \ref{thm:CauchysEstimates}, we have that 
\begin{equation*}
    \max_{B(x,R)} \psi^{(j)} > \frac{R^N a_N}{2 C_N} j^{M-N}.
\end{equation*}
So, as $j \to \infty$, $\psi^{(j)}$ cannot converge to a continuous function.
However, because $\phi$ was assumed $(\sigma,H^2)$-analytic and by Corollary \ref{cor:Hq-continuity} it must be continuous, we have drawn a contradiction.

\subsection{Proof of Theorem \ref{thm:reg-theorem}}
\label{sec:reg-proof}
We will start with two lemmas. The first is on the boundedness of the right-hand-side operator of the PDE \eqref{eq:arbitrary-PDE}:
\begin{lemma}
\label{lemma:La}
    Let $q\geq 0$, $D\geq 1$, $0<\sigma\leq \sigma_0 < \Sigma$, $\phi$ be $(\sigma,H^{q+2D})$-analytic, and $a^{(j)}$ be $(\Sigma,\Ccal^q)$-analytic for $1\leq j \leq 7$. 
    Then, the operator $L^{(a)}$ defined in \eqref{eq:La} preserves the analytic form \eqref{eq:analytic-form} and satisfies the bound
    \begin{equation*}
        \norm{(L^{(a)}\phi)_m}_{H^q} \leq C (m+1) \left(\sum_{j=1}^7 \norm{\ac{j}}_{\Sigma,\Ccal^q} \right) \norm{\phi}_{\sigma,H^{q+2D}} \sigma^{-(m+1)},
    \end{equation*}
    where $C$ depends on $\sigma_0$ but not on $\sigma$. 
\end{lemma}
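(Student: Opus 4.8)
The plan is to bound the $m$th near-axis coefficient of each of the eight terms making up $L^{(a)}\phi$ in \eqref{eq:La} separately, using the per-order estimates already proved in Lemmas \ref{lemma:xy-deriv}, \ref{lemma:s-derivatives}, and \ref{lemma:analytic-multiplication}, and then to sum. That $L^{(a)}\phi$ retains the analytic form \eqref{eq:analytic-form} is immediate and I would dispatch it by inspection: multiplication by a $(\Sigma,\Ccal^q)$-analytic function follows the Cauchy product, $\pd{}{\rho}$ shifts the radial index while preserving the angular structure, and $\pd{}{\theta}$, $\pd{}{s}$ act mode-by-mode, so each monomial $\rho^m e^{(2n-m)\I\theta}$ is sent to an admissible sum. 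The real content is the quantitative bound, and the terms split into two groups according to how the derivatives interact with the order $m$.

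The easy group is the zeroth-order term $\ac{1}\phi$ together with every term carrying only $\theta$- and $s$-derivatives: $\ac{3}\pd{\phi}{\theta}$, $\ac{4}\pd{\phi}{s}$, $\ac{5}\pd{^2\phi}{\theta^2}$, $\ac{6}\pd{^2\phi}{\theta\partial s}$, and $\ac{7}\pd{^2\phi}{s^2}$. Here I would spend the surplus Sobolev regularity rather than let the derivatives produce factors of the order. By Lemma \ref{lemma:s-derivatives}, applying up to two $\theta$/$s$ derivatives sends $(\sigma,H^{q+2D})$-analytic functions to $(\sigma,H^{q+2D-2})$-analytic ones, and since $D\geq1$ the embedding $H^{q+2D-2}\hookrightarrow H^{q}$ makes each differentiated $\phi$ a $(\sigma,H^q)$-analytic function with norm controlled by $\norm{\phi}_{\sigma,H^{q+2D}}$. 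Multiplying by the $(\Sigma,\Ccal^q)$-analytic coefficient and invoking Lemma \ref{lemma:analytic-multiplication} bounds each such $m$th coefficient by $C\norm{\ac{j}}_{\Sigma,\Ccal^q}\norm{\phi}_{\sigma,H^{q+2D}}\sigma^{-m}$, and $\sigma\leq\sigma_0$ gives $\sigma^{-m}\leq\sigma_0(m+1)\sigma^{-(m+1)}$, which absorbs these into the asserted form. It is essential that the second-order $\theta$-terms be handled this way: routing them through per-mode (Bernstein-type) bounds would cost a factor $m^2$ and overshoot the target, so the spare $2D$ derivatives are exactly what is needed.

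The hard group is the two radial terms $\pd{\ac{2}}{\rho}\pd{\phi}{\rho}$ and $\frac{1}{\rho^2}\pd{\ac{2}}{\theta}\pd{\phi}{\theta}=\bigl(\tfrac1\rho\pd{\ac{2}}{\theta}\bigr)\bigl(\tfrac1\rho\pd{\phi}{\theta}\bigr)$. For these, Lemma \ref{lemma:xy-deriv} gives on a single coefficient $\norm{(\pd{\phi}{\rho})_k}_{H^q}\leq(k+1)\norm{\phi}_{\sigma,H^q}\sigma^{-(k+1)}$ and $\norm{(\pd{\ac{2}}{\rho})_j}_{\Ccal^q}\leq(j+1)\norm{\ac{2}}_{\Sigma,\Ccal^q}\Sigma^{-(j+1)}$, with identical bounds for the $\tfrac1\rho\pd{}{\theta}$ combination, so each factor genuinely produces one power of the order. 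Writing the $m$th coefficient as a convolution $\sum_{j+k=m}$ and applying the $\Ccal^q\times H^q\to H^q$ product estimate, I would bound it by
\begin{equation*}
    C\,\norm{\ac{2}}_{\Sigma,\Ccal^q}\,\norm{\phi}_{\sigma,H^q}\,\sigma^{-(m+1)}\sum_{j=0}^{m}(j+1)(m-j+1)\left(\frac{\sigma}{\Sigma}\right)^{j}.
\end{equation*}
The crucial point is that, because $\Sigma>\sigma_0\geq\sigma$, the weight $(\sigma/\Sigma)^j\leq(\sigma_0/\Sigma)^j$ is summable; bounding $(m-j+1)\leq(m+1)$ and pulling it out leaves $(m+1)\sum_{j\geq0}(j+1)(\sigma_0/\Sigma)^j$, a convergent series whose sum depends only on $\sigma_0$ and $\Sigma$. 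This collapses the apparent quadratic growth of the two order-gaining derivatives into the single factor $m+1$, yielding exactly $C(m+1)\norm{\ac{2}}_{\Sigma,\Ccal^q}\norm{\phi}_{\sigma,H^{q+2D}}\sigma^{-(m+1)}$ after replacing $\norm{\phi}_{\sigma,H^q}$ by the larger $\norm{\phi}_{\sigma,H^{q+2D}}$.

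Summing the eight contributions and collecting the coefficient norms into $\sum_{j=1}^{7}\norm{\ac{j}}_{\Sigma,\Ccal^q}$ gives the claim, with $C$ depending on $q$, $\sigma_0$, $\Sigma$, and $D$ but not on $\sigma$. The one genuine obstacle is this convolution estimate for the radial terms: the product of two order-gaining derivatives naively suggests growth like $m^2$, and it is only the strict gap $\sigma_0<\Sigma$ between the radius of the input data and that of the variable coefficients that tames the convolution to the linear factor $m+1$ demanded by the statement.
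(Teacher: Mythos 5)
Your quantitative argument is, in substance, the paper's own proof: the same split into the terms carrying only $\theta$- and $s$-derivatives (handled by spending the surplus $2D$ derivatives via Lemma \ref{lemma:s-derivatives} and the product estimate of Lemma \ref{lemma:analytic-multiplication}) and the radial terms (handled coefficientwise as a convolution, with $(m-n+1)$ bounded by $(m+1)$ and the leftover sum $\sum_{n}(n+1)(\sigma_0/\Sigma)^n$ convergent precisely because $\sigma\leq\sigma_0<\Sigma$), with the same resulting dependence of the constant on $\sigma_0$ and $\Sigma$ but not $\sigma$.

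The one genuine flaw is your ``by inspection'' treatment of form preservation. It is not true that $\pd{}{\rho}$ ``shifts the radial index while preserving the angular structure'' in the sense required by \eqref{eq:analytic-form}: it sends $\rho^{m+1}e^{(2n-m-1)\I\theta}$ to $(m+1)\rho^{m}e^{(2n-m-1)\I\theta}$, so at radial order $m$ it produces poloidal modes of the wrong parity and of magnitude up to $m+1$; the operator $\tfrac{1}{\rho}\pd{}{\theta}$ breaks the form in the same way, and each of your two radial product terms separately can contain modes $e^{\pm(m+2)\I\theta}$ at order $m$, which \eqref{eq:analytic-form} forbids. The analytic form is preserved only by their \emph{sum}, because $\pd{\ac{2}}{\rho}\pd{\phi}{\rho}+\tfrac{1}{\rho^{2}}\pd{\ac{2}}{\theta}\pd{\phi}{\theta}=\pd{\ac{2}}{x}\pd{\phi}{x}+\pd{\ac{2}}{y}\pd{\phi}{y}$, and the Cartesian derivatives $\pd{}{x}$, $\pd{}{y}$ do preserve the form by Lemma \ref{lemma:xy-deriv}; combined with Lemma \ref{lemma:analytic-multiplication} this gives the structural claim, and it is exactly how the paper disposes of it. Repairing your write-up costs nothing further: the norm estimates may still be carried out term by term via the triangle inequality exactly as you (and the paper) do, since the $H^q$ bounds are insensitive to whether the intermediate expressions individually respect the mode restrictions.
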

\begin{proof}
    For $j\neq 2$, Lemmas \ref{lemma:s-derivatives} and \ref{lemma:analytic-multiplication} tell us that
    \begin{equation*}
        L^{(\neq 2)}\phi =  \ac{1}\phi  +  \ac{3} \pd{\phi}{\theta} +  \ac{4} \pd{\phi}{s} +  \ac{5} \pd{^2\phi}{\theta^2} +  \ac{6} \pd{^2\phi}{\theta \partial s} +  \ac{7} \pd{^2 \phi}{s^2}
    \end{equation*}
    satisfies the analytic form and has the bound
    \begin{equation*}
        \norm{ L^{(\neq 2)}\phi}_{\sigma,H^{q}} \leq C \left(\sum_{j\neq 2}\norm{\ac{j}}_{\Sigma,\Ccal^q}\right) \norm{\phi}_{\sigma,H^{q+2D}},
    \end{equation*}
    where $C$ does not depend on $\sigma$. 
    
    For the $j=2$ term, define
    \begin{equation*}
        L^{(2)}\phi = \pd{\ac{2}}{\rho} \pd{\phi}{\rho} + \frac{1}{\rho^2} \pd{\ac{2}}{\theta}\pd{\phi}{\theta}= \pd{\ac{2}}{x} \pd{\phi}{x} + \pd{\ac{2}}{y}\pd{\phi}{y},
    \end{equation*}
    where $x = \rho \cos \theta$ and $y = \rho \sin \theta$.
    By Lemmas \ref{lemma:xy-deriv} and \ref{lemma:analytic-multiplication}, this preserves the analytic form.
    To bound the operator, first note that 
    \begin{equation*}
        \norm{\left(\pd{\phi}{\rho}\right)_m}_{H^q} = (m+1) \norm{\phi_{m+1}}_{H^{q}} \leq (m+1) \sigma^{-(m+1)} \norm{\phi}_{\sigma,H^{q+2D}}.
    \end{equation*}
    A similar bound is satisfied by the $\theta$ derivative:
    \begin{equation*}
        \norm{\left(\pd{\phi}{\theta}\right)_m}_{H^q} \leq m \sigma^{-m} \norm{\phi}_{\sigma,H^{q+2D}}.
    \end{equation*}
    So, we focus on the $\rho$ derivative term of $L^{(2)}$, where the same steps can be used to bound the $\theta$ derivative term.
    We have that
    \begin{align*}
        \norm{\left(\pd{\ac{2}}{\rho} \pd{\phi}{\rho} \right)_m}_{H^q}  &\leq \norm{\sum_{n=0}^m \left(\pd{\ac{2}}{\rho}\right)_{n+1} \left(\pd{\phi}{\rho}\right)_{m-n+1}}_{H^q}, \\
        &\leq C \norm{\ac{2}}_{\Sigma,\Ccal^q} \norm{\phi}_{\sigma,H^{q+2D}} \sum_{n=0}^m (n+1)(m-n+1) \Sigma^{-(n+1)} \sigma^{-(m-n+1)}, \\
        & \leq C (m+1) \norm{\ac{2}}_{\Sigma,\Ccal^q} \norm{\phi}_{\sigma,H^{q+2D}} \sigma^{-(m+1)} \sum_{n=0}^{m+1} (n+1) \left(\frac{\sigma_0}{\Sigma}\right)^n,\\
        & \leq C (m+1) \norm{\ac{2}}_{\Sigma,\Ccal^q} \norm{\phi}_{\sigma,H^{q+2D}} \sigma^{-(m+1)}.
    \end{align*}
    Combining the estimates on $L^{(2)}$ and $L^{(\neq 2)}$, we have our theorem.
\end{proof}

Then, the main step in proving Thm.~\ref{thm:reg-theorem} is to show the inductive step in Lemma \ref{lemma:inductive}.
For this, we depend on the following interior regularity theorem for the regularization:
\begin{theorem}[{\citet[Theorem 11.1]{taylor2011}}]
\label{thm:interior-regularity}
    If $P$ is elliptic of order $2 D$ and $u\in\mathcal{D}'(M)$, $Pu = h \in H^{q}(M)$, then $u\in H_{\mathrm{loc}}^{q+2D}(M)$, and, for each $U \subset\subset V \subset\subset M$, $\sigma<q+2D$, there is an estimate
    \begin{equation*}
        \norm{u}_{H^{q+2D}(U)} \leq C \norm{P u}_{H^{q}(V)} + C \norm{u}_{H^\sigma(V)}.
    \end{equation*}
\end{theorem}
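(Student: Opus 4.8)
This is the standard interior elliptic regularity estimate, so the plan is to follow the classical three-stage argument: a constant-coefficient Fourier estimate, a localization-and-freezing reduction for variable coefficients, and a regularity upgrade by difference quotients (or, equivalently, by a pseudodifferential parametrix). First, I would establish the a priori estimate in the constant-coefficient model. Writing $P = \sum_{\abs{\alpha}\leq 2D} a_\alpha D^\alpha$ with principal symbol $p_{2D}(\xi) = \sum_{\abs{\alpha}=2D} a_\alpha \xi^\alpha$, ellipticity of order $2D$ gives $\abs{p_{2D}(\xi)}\geq c\abs{\xi}^{2D}$, whence $\abs{p(\xi)}\geq (c/2)\abs{\xi}^{2D}$ for $\abs{\xi}$ large. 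Taking Fourier transforms in $Pu = h$ yields $\hat u(\xi) = \hat h(\xi)/p(\xi)$ at high frequency, and the weight comparison $(1+\abs{\xi}^2)^{(q+2D)/2} \leq C(1+\abs{\xi}^2)^{q/2}\abs{p(\xi)} + C(1+\abs{\xi}^2)^{\sigma/2}$ then gives $\norm{u}_{H^{q+2D}} \leq C\norm{Pu}_{H^q} + C\norm{u}_{H^\sigma}$, the second term absorbing the bounded-frequency contribution.

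Second, I would pass to variable coefficients by freezing. Covering a neighborhood of the base point in a coordinate chart by small balls, on each one I write $P = P(x_0,D) + [P - P(x_0,D)]$, freezing the coefficients at the center $x_0$. By continuity of the top-order coefficients, the principal part of the difference has arbitrarily small sup-norm on a sufficiently small ball, so its contribution is absorbed into the left-hand side of the frozen estimate, while the strictly lower-order terms are controlled by interpolation. Introducing a cutoff $\chi\in C^\infty_c(V)$ with $\chi\equiv 1$ on $U$ and applying the frozen estimate to $\chi u$, the identity $P(\chi u) = \chi Pu + [P,\chi]u$ shows the only genuinely new contribution is the commutator $[P,\chi]$, a differential operator of order at most $2D-1$. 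Hence $\norm{[P,\chi]u}_{H^q}\lesssim \norm{u}_{H^{q+2D-1}(V)}$, one order below the target, and an Ehrling-type inequality bounds this by $\epsilon\norm{u}_{H^{q+2D}(V)} + C_\epsilon\norm{u}_{H^\sigma(V)}$; iterating over a nested chain of cutoffs then closes the estimate on $U$.

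Third, I would upgrade this a priori bound to a genuine regularity statement. Since the estimate is first derived under the assumption that $u$ is smooth, I would apply it to mollifications $u_\delta = J_\delta u$ (or, one derivative at a time, to difference quotients), obtaining bounds on $\norm{u_\delta}_{H^{q+2D}(U)}$ that are uniform in $\delta$; weak compactness then forces $u\in H^{q+2D}_{\mathrm{loc}}(M)$ and passes the estimate to the limit. Equivalently, one builds a pseudodifferential parametrix $Q$ of order $-2D$ with $QP = I + R$ and $R$ smoothing, so that $u = Qh - Ru$ exhibits the regularity directly, since $Q$ maps $H^q_{\mathrm{loc}}$ into $H^{q+2D}_{\mathrm{loc}}$ and $R$ is infinitely smoothing.

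The main obstacle is precisely this regularity upgrade: showing that the distributional solution truly lies in $H^{q+2D}_{\mathrm{loc}}$, rather than merely satisfying the estimate formally. For general order $2D$, the difference-quotient route requires iterating the gain of one derivative while tracking the order-$(2D-1)$ commutators against the regularity available at each stage, whereas the parametrix route requires the full symbol calculus, including composition, asymptotic symbol expansions, and Sobolev mapping properties. As this is exactly Theorem 11.1 of \citet{taylor2011}, I would ultimately defer the complete construction to that reference and invoke the result as stated.
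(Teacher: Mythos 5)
The paper does not prove this statement at all: it is imported verbatim as Theorem 11.1 of \citet{taylor2011}, with the citation serving as the proof. Your sketch is the standard textbook argument (constant-coefficient Fourier estimate, freezing and cutoff/commutator absorption, mollifier or parametrix upgrade from a priori estimate to genuine regularity) and, like the paper, you ultimately defer to the same reference, so your proposal is correct and matches the paper's approach.
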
 
Then, our inductive step is:
\begin{lemma}
\label{lemma:inductive}
    Assume the hypotheses of Thm.~\ref{thm:reg-theorem} and let $\sigma \leq \sigma_0 \leq \sigma' < \Sigma$ and $m\geq 0$. 
    Suppose we have computed the finite solution
    \begin{equation*}
        \phi_{<m} = \sum_{n=0}^{m-1}\rho^n \phi_n(\theta,s), \qquad \phi_n =  \sum_{\ell = 0}^{n}  \phi_{n\ell}(s) e^{(2\ell-n)\I \theta},
    \end{equation*}
    where
    \begin{equation*}
        \norm{\phi_n}_{H^{q+2D}} < \Phi \sigma^{-n} \qquad \text{ for all } n<m.
    \end{equation*}
    Then,
    \begin{equation}
    \label{eq:phim-form}
        \phi_m = \sum_{n=0}^m \phi_{mn}(s) e^{(2n-m)\I\theta}
    \end{equation}
    and there exists two constant $C_1, C_2 > 0$ independent of $m$, $\sigma$, $\sigma_0$, $\phiic$, $f$, and $\Phi$ where $C_1$ depends continuously on $a^{(j)}$ and $\sigma'$ and $C_2$ is independent of $a^{(j)}$ such that
    \begin{align*}
        \norm{\phi_m}_{H^{q+2D}} &<  C_{1} \Phi \sigma^{-m+1} + \lVert\phiic_m\rVert_{\sigma,H^{q+2D}} + \frac{C_2}{m+1} \norm{f_m}_{H^q},\\
        &< C_{1} \Phi \sigma^{-m+1} + (\lVert\phiic\rVert_{\sigma_0,H^{q+2D}} + C_2 \norm{f}_{\sigma_0,H^q}) \sigma_0^{-m}.
    \end{align*}
\end{lemma}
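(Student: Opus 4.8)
The plan is to make the single inductive step explicit, identify exactly which coefficients feed into $\phi_m$, and then close the bound by pitting the $\Ocal(1/m)$ smoothing of $\Delta_\perp^+$ against the $\Ocal(m)$ derivative loss of $L^{(a)}$. First I would record the form of the iterate. Since $\Delta_\perp^+$ raises the $\rho$-degree by two (cf.~\eqref{eq:Delta-plus}) while $(1+P)^{-1}$ preserves it, the order-$m$ coefficient is
\begin{equation*}
    \phi_m = \phiic_m + (1+P)^{-1}\Delta_\perp^+\left[f + L^{(a)}\phi\right]_{m-2}.
\end{equation*}
By \eqref{eq:La} and Lemma \ref{lemma:La}, $(L^{(a)}\phi)_{m-2}$ involves $\phi$ only up to order $m-1$, so $\phi_m$ depends solely on $f_{m-2}$ and on $\phi_0,\dots,\phi_{m-1}$; the cases $m=0,1$ are trivial because the residual sits at a nonexistent order and $\phi_m=\phiic_m$, for which the asserted bound holds with the nonnegative first and third terms dropped.

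Next I would establish the form \eqref{eq:phim-form} and the solvability. Lemma \ref{lemma:La} shows $L^{(a)}$ preserves the analytic structure \eqref{eq:analytic-form}; the operator $\Delta_\perp^+$ preserves it as a poloidal Fourier multiplier (multiplying $e^{\I(2n-m)\theta}$ by a scalar), and by part (i) of Hypotheses \ref{hypotheses} the $\theta$-independent operator $P$, and hence $(1+P)^{-1}$, is block-diagonal in the poloidal index, so it too preserves the structure. This yields \eqref{eq:phim-form}. Moreover, the order $m-2$ residual carries only poloidal modes $e^{\I j\theta}$ with $\abs{j}\le m-2$, so the singular modes $j=\pm m$ of $\Delta_\perp$ never appear: the Fredholm condition holds automatically, no division by zero occurs in $\Delta_\perp^+$, and the two free modes $e^{\pm \I m\theta}$ are exactly those supplied by $\phiic_m$.

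The quantitative step then rests on three boundedness facts. First, $(1+P)^{-1}\colon H^q\to H^{q+2D}$ is bounded with an $m$-independent constant $C_P$: the strong ellipticity in Hypotheses \ref{hypotheses} makes $1+P$ elliptic of order $2D$, while the self-adjoint semi positive-definiteness bounds it below by $1$, hence it is invertible, and the compact-manifold version of the elliptic regularity estimate in Theorem \ref{thm:interior-regularity} supplies the gain of $2D$ derivatives uniformly in $m$. Second, on the subspace $\abs{j}\le m-2$ the multiplier $\Delta_\perp^+$ has symbol bounded by $\bigl(m^2-(m-2)^2\bigr)^{-1}=\bigl(4(m-1)\bigr)^{-1}\le \tfrac34(m+1)^{-1}$ for $m\ge 2$, so $\norm{\Delta_\perp^+}_{H^q\to H^q}\le \tfrac34(m+1)^{-1}$. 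Third, Lemma \ref{lemma:La} applied with the inductive hypothesis $\norm{\phi_n}_{H^{q+2D}}<\Phi\sigma^{-n}$ (for $n<m$) in place of $\norm{\phi}_{\sigma,H^{q+2D}}$ gives $\norm{(L^{(a)}\phi)_{m-2}}_{H^q}\le C(m-1)A\,\Phi\,\sigma^{-(m-1)}$ with $A=\sum_{j=1}^7\norm{\ac{j}}_{\Sigma,\Ccal^q}$. Combining these,
\begin{equation*}
    \norm{\phi_m-\phiic_m}_{H^{q+2D}} \le \frac{C_P}{4(m-1)}\norm{f_{m-2}}_{H^q} + \frac{C_P C A}{4}\,\Phi\,\sigma^{-(m-1)},
\end{equation*}
and setting $C_1=C_P C A/4$ (continuous in the $\ac{j}$ through $A$, and depending on $\sigma'$, $\Sigma$, $q$ through the Lemma \ref{lemma:La} constant) and $C_2=3C_P/4$ (independent of the $\ac{j}$) gives the first inequality after adding $\norm{\phiic_m}_{H^{q+2D}}$; the second follows from $\norm{\phiic_m}_{H^{q+2D}}\le\norm{\phiic}_{\sigma_0,H^{q+2D}}\sigma_0^{-m}$ and $\norm{f_{m-2}}_{H^q}\le\norm{f}_{\sigma_0,H^q}\sigma_0^{-m}$ once $f_{m-2}$ is read as the source determining $\phi_m$.

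The main obstacle is the exact cancellation of orders: the $\Ocal(m)$ factor produced by the first-order coefficients of $L^{(a)}$ in Lemma \ref{lemma:La} is annihilated by the $\Ocal(1/m)$ gain of $\Delta_\perp^+$, and it is precisely this balance that lets $C_1$ be chosen independent of $m$, so that the geometric bound $\Phi\sigma^{-m+1}$ reproduces itself and the outer induction in Theorem \ref{thm:reg-theorem} closes. The genuinely delicate point is the first boundedness fact: one must ensure that the regularity upgrade from $H^q$ to the target $H^{q+2D}$ comes entirely from $(1+P)^{-1}$ — since $\Delta_\perp^+$ gains no $(\theta,s)$-regularity — and that $C_P$ is uniform in $m$. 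Some care is also required to confirm that the Lemma \ref{lemma:La} constant can be taken uniform over $\sigma\le\sigma_0$, which is where the auxiliary radius $\sigma'<\Sigma$ enters.
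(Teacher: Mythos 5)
Your proof is correct and follows essentially the same route as the paper's: the same iteration identity $\phi_m = \phiic_m + (1+P)^{-1}\Delta_\perp^+[f+L^{(a)}\phi_{<m}]_{m-2}$, the same triangle-inequality split, the same application of Lemma \ref{lemma:La} to bound $(L^{(a)}\phi_{<m})_{m-2}$ by $C(m-1)\bigl(\sum_j \lVert \ac{j}\rVert_{\Sigma,\Ccal^q}\bigr)\Phi\sigma^{-(m-1)}$, and the same multiplier bound $1/(4(m-1))$ on $\Delta_\perp^+$, so that the $\Ocal(m)$ derivative loss cancels against the $\Ocal(1/m)$ smoothing and $C_1,C_2$ come out $m$-independent.

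The one point where you genuinely depart from the paper is the uniform bound $\norm{(1+P)^{-1}h}_{H^{q+2D}} \leq C_P\norm{h}_{H^q}$. The paper proves it by contradiction: a sequence violating the bound is controlled via Theorem \ref{thm:interior-regularity}, Rellich's compactness theorem extracts a convergent subsequence, and positivity of $1+P$ (trivial kernel) yields the contradiction; this is also precisely where the paper invokes $q\geq 1$. You instead argue directly: semi positive-definiteness gives $\norm{u}_{L^2}\leq\norm{(1+P)u}_{L^2}$ by Cauchy--Schwarz, this lower bound plus self-adjointness gives invertibility, and the interior-regularity estimate $\norm{u}_{H^{q+2D}}\leq C\norm{(1+P)u}_{H^q}+C\norm{u}_{L^2}$ then absorbs the low-order term. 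This is cleaner --- it avoids compactness entirely and does not need $q\geq 1$ at this step --- though, like the paper, it silently uses ellipticity so that symmetry on $H^{2D}(\Tbb^2)$ upgrades to self-adjointness and bounded-below implies surjective. One bookkeeping slip at the end: $\norm{f_{m-2}}_{H^q}\leq\norm{f}_{\sigma_0,H^q}\sigma_0^{-(m-2)}$, not $\sigma_0^{-m}$, so when $\sigma_0>1$ you must absorb the factor $\sigma_0^2\leq(\sigma')^2$ into $C_2$ (which is permitted, since $C_2$ need only be independent of the $\ac{j}$, $m$, $\sigma$, $\sigma_0$, $f$, $\phiic$, and $\Phi$); the paper's constant $C^{(1)}$ hides the same factor, so this is cosmetic rather than a gap.
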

\begin{proof}
Let $L^{(a)}$ be as in \eqref{eq:La}. 
Then, the near-axis iteration is given by
\begin{equation*}
    \phi_m = \phiic_m + ((1+P)^{-1} \Delta_\perp^{+}  (f+L^{(a)}\phi_{<m}))_m, 
\end{equation*}
where $\Delta_\perp^+$ is defined in \eqref{eq:Delta-plus}.
As such, the triangle inequality gives
\begin{equation*}
    \norm{\phi_m}_{H^{q+2D}} \leq \norm{\phiic_m}_{H^{q+2D}} + \norm{(1+P)^{-1} (\Delta_\perp^{+} (f+L^{(a)}\phi_{<m}))_m}_{H^{q+2D}}.
\end{equation*}
For the initial conditions, we have
\begin{equation*}
    \norm{\phiic_m}_{H^{q+2D}} \leq \norm{\phiic_m}_{\sigma_0,H^{q+2D}} \sigma_0^{-m},
\end{equation*}
so we just need to focus on the second term.

Let $g = f + L^{(a)} \phi_{<m}$. By Lemma \ref{lemma:La}, we have that 
\begin{equation*}
    \norm{g_m}_{H^q} \leq \norm{f}_{\sigma_0,H^q}\sigma_0^{-m} + C (m+1) \left(\sum_{j=1}^7 \norm{\ac{j}}_{\Sigma,\Ccal^q} \right) \Phi \sigma^{-(m+1)},
\end{equation*}
where $C$ depends on $\sigma'$ and we have used $\norm{\phi_{<m}}_{\sigma,H^q} \leq \Phi$.
Next, we establish a bound for the inverse polar Laplacian. We have that
\begin{equation*}
    \left(\Delta_\perp^{+}g\right)_m = \sum_{n=1}^{m-1} \frac{1}{m^2 - (2n-m)^2}g_{m-2,n-1} e^{(2n-m)\I \theta},
\end{equation*}
so for $m\geq 2$
\begin{align}
\nonumber
    \norm{\left(\Delta_\perp^{+}g\right)_m}^2_{H^{q}} =& \sum_{n=1}^{m-1} \frac{1}{m^2 - (2n-m)^2}\norm{g_{m-2,n-1} e^{(2n-m)\I \theta}}^2_{H^{q}}, \\
\nonumber
    \leq& \frac{1}{4(m-1)} \sum_{n=1}^{m-1} \norm{g_{m-2,n-1} e^{(2n-m)\I \theta}}^2_{H^{q}}, \\
\nonumber
    =& \frac{1}{4(m-1)}\norm{g_{m-2}}^2_{H^{q}}, \\
\label{eq:g-bound}
    \leq& \frac{C^{(1)}}{m+1} \norm{f}_{\sigma_0,H^q}\sigma_0^{-m} + C^{(2)} \left(\sum_{j=1}^7 \norm{\ac{j}}_{\Sigma,\Ccal^q} \right) 
    \Phi \sigma^{-m+1},
\end{align}
where we used constants so that this is trivially extended to $m \in \{0,1\}$ where $(\Delta_\perp^+ g)_0 = (\Delta_\perp^+ g)_1 = 0$.

Now, we would like to bound the inverse operator $(1+P)^{-1}$. 
Specifically, we need it to be the case that
\begin{equation}
\label{eq:P-regularity}
    \norm{u}_{H^{q+2D}} < C \norm{(1+P) u}_{H^{q}}.
\end{equation}
We will prove this is true by the standard argument.
For the sake of contradiction, suppose that there exists a sequence $u^{(n)} \in H^{q + 2 D}$ such that $\norm{u^{(n)}}_{L^2} = 1$ and $\norm{(1+P) u^{(n)}}_{H^{q}} \to 0$. 
By Theorem \ref{thm:interior-regularity}, this tells us that $\norm{u^{(n)}}_{H^{q+2D}}\leq 1 + \epsilon$ for some $\epsilon >0$. 
By Rellich's theorem \citep[Proposition 3.4]{taylor2011}, $H^{q + 2D}$ is compactly embedded in $H^{q+2D-1}$, so there exists a subsequence such that $u_{n} \to u$ in $H^{q+2D-1}$.
This implies that $(1+P) u^{(n)} \to (1+P) u = 0$ in $H^{q-1}$, where we are using the assumption that $q\geq 1$. 
However, because $1+P$ is positive, it does not have a kernel, so the bound \eqref{eq:P-regularity} must hold.

Equation \eqref{eq:P-regularity} tells us that $1+P$ is one-to-one from $H^{q+2D}$ to $H^q$. 
Moreover, because $1+P$ is positive and self-adjoint, it must be surjective, so it is invertible and we have the inequality
\begin{equation*}
    \norm{(1+P)^{-1}\left(\Delta_\perp^{+}g\right)_m}^2_{H^{q+2D}} \leq \frac{C_1}{m+1} \norm{f}_{\sigma_0,H^q}\sigma_0^{-m} + C_2 \left(\sum_{j=1}^7 \norm{\ac{j}}_{\Sigma,\Ccal^q} \right) 
    \Phi \sigma^{-m+1},
\end{equation*}
proving the lemma.
\end{proof}

We are now ready to prove Theorem \ref{thm:reg-theorem}.
For continuity (boundedness) of the solution with respect to $\phiic$ and $f$, we choose a $\sigma$ such that $\sigma C_1 < 1$ and $\sigma \leq \sigma_0$. 
Then, let $0 < \gamma < 1 - C_1 \sigma$. We choose $\Phi$ such that 
\begin{equation*}
    C_1 \sigma \Phi + \norm{\phiic}_{\sigma,H^{q+2D}} + C_2 \norm{f}_{\sigma_0,H^q} = (1-\gamma) \Phi.
\end{equation*}
Then, we perform induction. 
At $m=0$, $\phi_{<m} = 0$, so we have trivially satisfied the initial case. 
For the inductive step, because $\sigma_0^{-m}\leq \sigma^{-m}$, we have 
\begin{equation*}
    \norm{\phi_m}_{H^{q+2D}} \leq (1-\gamma)\Phi \sigma^{-m},
\end{equation*}
implying 
\begin{equation*}
    \norm{\phi}_{\sigma, H^{q+2D}} \leq \Phi = \frac{1-\gamma}{1-\gamma - C_1 \sigma}\left(\norm{\phiic}_{\sigma,H^{q+2D}} + C_2 \norm{f}_{\sigma_0,H^q} \right).
\end{equation*}

For continuity with respect to the coefficients $A = (\ac{1}, \dots, \ac{7})$, consider fixing $\sigma$ and $\gamma$ as before. 
Then, because $C_1$ is continuous with respect to $A$, there is a small enough perturbation so that both $\sigma$ and $\gamma$ continue to satisfy $0 < \gamma < 1-C_1 \sigma$ and $C_1 \sigma < 1$. 
So, there is a neighborhood of $U$ of $A$ such that for all $A+\delta A \in U$ and some value of $C$, we have $\norm{\phi}_{\sigma,H^{q+2D}} \leq C (\lVert \phiic \rVert_{\sigma_0,H^{q+2D}} + \norm{f}_{\sigma_0,H^{q}})$. 
Because $C_1$ does not depend on $\sigma_0$, it is also the case that there is a neighborhood of $U_0$ of $A$ such that $\norm{\phi}_{\sigma,H^{q+2D}}\leq C_0 (\lVert \phiic \rVert_{\sigma,H^{q+2D}} + \norm{f}_{\sigma_0,H^{q}})$ for all $A+\delta A \in U_0$. 

Now, consider the full PDE operator
\begin{equation*}
    L =  \Delta_\perp (1+P) - L^{(a)},
\end{equation*}
and let $L+\delta L$ be the operator associated with substituting $\ac{j}$ with $\ac{j}+\delta \ac{j}\in U \cap U_0$, i.e.
\begin{multline*}
    \delta L\phi = \delta L^{(a)}\phi = \delta \ac{1}\phi + \pd{(\delta \ac{2})}{\rho} \pd{\phi}{\rho} + \frac{1}{\rho^2} \pd{(\delta \ac{2})}{\theta}\pd{\phi}{\theta} + \\ \delta \ac{3} \pd{\phi}{\theta}+ \delta \ac{4} \pd{\phi}{s} + \delta \ac{5} \pd{^2\phi}{\theta^2} + \delta \ac{6} \pd{^2\phi}{\theta \partial s} + \delta \ac{7} \pd{^2 \phi}{s^2}.
\end{multline*}
With fixed initial conditions $\phiic$ and $f$, let the solution to the original PDE be $\phi$ (i.e.~$L \phi = f$) and the solution to the perturbed PDE be $\phi + \delta \phi$ (i.e.~$(L+\delta L)(\phi + \delta \phi) = f$). 
Subtracting the two PDE formulas gives
\begin{equation}
\label{eq:perturbed-ODe}
    (L+\delta L)\delta \phi = -\delta L \phi,
\end{equation}
where $\delta \phi$ satisfies $\delta \phiic = 0$.
Using Lemma \ref{lemma:La}, we have that 
\begin{equation*}
    \norm{(\delta L \phi)_m}_{H^q} \leq C (m+1) \left(\sum_{j=1}^7 \norm{\delta \ac{j}}_{\Sigma,\Ccal^q} \right) \norm{\phi}_{\sigma,H^{q+2D}} \sigma^{-m}.
\end{equation*}
Then, we find that 
\begin{equation*}
    C_1 \Phi \sigma^{-m+1} + \frac{C_2}{m+1}\norm{f_m}{H^q} \leq \left(C_1 \sigma + C \left(\sum_{j=1}^7 \norm{\delta \ac{j}}_{\Sigma,\Ccal^q} \right) \right) \sigma^{-m}.
\end{equation*}
We can take $\delta A$ small enough so the right term is less than $1$, allowing us to proceed inductively as before, giving continuity in the coefficients. 

\subsection{Proof of Corollary \ref{cor:convergence}}
\label{subsec:convergence}
For the coefficients of the PDE, we must only notice that
\begin{equation*}
    \frac{1}{1-\kappa \rho \cos \theta} = \sum_{n=1}^{\infty}\rho^n \left(\kappa \cos \theta\right)^n.
\end{equation*}
Because $\bm r_0 \in \Ccal^{4+q}(\Tbb)$, $\kappa \in \Ccal^{3+q}(\Tbb)$.
This immediately tells us that 
\begin{equation*}
    \norm{(\kappa \cos \theta)^n}_{\Ccal^{3+q}} \leq \left(C \norm{\kappa \cos \theta}_{\Ccal^{3+q}}\right)^n
\end{equation*}
for some constant $C$, showing this converges. (In fact, the sum converges for all $\rho < \kappa$).

We note that $P$ satisfies the Hypotheses \ref{hypotheses} by construction, so we can apply Theorem \ref{thm:reg-theorem}.
The regularity of $\bm B_K$ is the obtained from \ref{eq:BK}, where the reduction in regularity comes from the order of $P$, combined with Lemmas \ref{lemma:xy-deriv} and \ref{lemma:analytic-multiplication}.

\subsection{Proof of Propsition \ref{prop:a-posteriori}}
\label{subsec:a-posteriori}
For the statement about uniqueness, suppose $\phi' \in H^{q+2D}(\Omega_{\sigma'}^0)$ with $\phi'-\phi=\delta \phi \neq 0$ satisfies the boundary value problem.
Then, $\delta \phi$ satisfies 
\begin{equation*}
    -(P \Delta_\perp + L)\delta \phi = 0, \qquad \left. \delta \phi\right|_{\rho=\sigma'} = 0.
\end{equation*}
Then, after some algebra, we have
\begin{equation}
\label{eq:positivity}
    \ip{\pd{\delta \phi}{\rho}}{P \pd{\delta \phi}{\rho}} + \ip{\frac{1}{\rho} \pd{\delta \phi}{\theta}}{P \frac{1}{\rho} \pd{\delta \phi}{\rho}} - \ip{\delta \phi}{L \delta \phi} = 0,
\end{equation}
where the inner product is the $L^2$ inner product
\begin{equation*}
    \ip{f}{g} = \int_0^{2\pi} \int_{0}^{2\pi} \int_{0}^{\sigma'} f(\rho,\theta,s) g(\rho,\theta,s) \, \rho\df\rho\df\theta\df s.
\end{equation*}
By our assumptions on $P$ and $L$, each term in \eqref{eq:positivity} is positive.
So, it then must be the case that $\ip{\delta \phi}{L\delta \phi} = 0$. 
However, this is only possible when $\delta \phi = 0$, so the solution is unique.

For the error estimate, fix $\phi$ and subtract the two boundary value problems to find that
\begin{equation*}
    L(\phi - \tilde \phi) = - P \Delta_\perp \phi, \qquad \left.(\phi - \tilde \phi)\right|_{\rho = \sigma'} = 0.
\end{equation*}
Because $L$ is negative, there is a unique solution $\phi-\tilde \phi$ to this problem. 
Then, by standard regularity theory \citep[Theorem 6.3.5]{evans2010}, we have the desired bound.

\section{Asymptotic Expansions of Basic operations}
\label{app:operations}
In order to build the near-axis code, we need some facts from formal expansions. We let $A$, $B$, and $C$ be smooth formal power series of the generic form
\begin{equation*}
    A(\rho,\theta,s) = \sum_{n = 0}^\infty A_n(\theta, s) \rho^n,
\end{equation*}
and let $\alpha \in \Rbb$. 
Here, we explain how we numerically perform the following basic operations:
\begin{enumerate}
    \item Multiplication (\S\ref{subsec:multiplication}): $C=AB$, 
    \item Multiplicative Inversion (\S\ref{subsec:inversion}): $B=A^{-1}$,
    \item Differentiation with respect to $\rho$ (\S\ref{subsec:rho-derivatives}): $B = \dd{A}{\rho}$,
    \item Exponentiation (\S\ref{subsec:exponentiation}): $B = e^{\alpha A}$,
    \item Power (\S\ref{subsec:power}): $B = A^{\alpha}$, 
    \item Composition (\S\ref{subsec:composition}): $A(\tilde \rho, \tilde \theta, \phi)$ where $\begin{pmatrix} B & C\end{pmatrix}^T = \begin{pmatrix} \tilde \rho \cos \tilde \theta & \tilde \rho \sin \tilde \theta\end{pmatrix}^T$.
    \item Series Inversion (\S\ref{subsec:composition}): find the inverse coordinate transformation $C^{(i)}$ of the transformation $B{(i)}$, i.e.
    \begin{equation*} 
        \begin{pmatrix}
            C^{(1)}(B^{(1)}(x,y,s),B^{(2)}(x,y,s),s) \\ C^{(2)}(B^{(1)}(x,y,s),B^{(2)}(x,y,s),s)
        \end{pmatrix} = \begin{pmatrix}
            x \\ y
        \end{pmatrix}.
    \end{equation*}
\end{enumerate}
We find that these operations build upon each other, with multiplication, $\rho$-differentiation, and series composition being the main building blocks of other, more complicated algorithms.

\subsection{Multiplication}
\label{subsec:multiplication}
The most basic problem is that of (matrix) multiplication. Let $C$ be the solution to
\begin{equation*}
    C = AB.
\end{equation*}
Via simple matching of orders, we find that
\begin{equation}
\label{eq:ps_multiplication}
    C_n(\theta, s) = \sum_{m = 0}^n A_m(\theta, s) B_{n-m}(\theta, s).
\end{equation}

\subsection{Inversion}
\label{subsec:inversion}
Now, instead consider the problem of finding the (matrix) inverse
\begin{equation*}
    B = A^{-1}.
\end{equation*}
It is easier to write this in terms of the problem
\begin{equation*}
    AB = I.
\end{equation*}
So, using \eqref{eq:ps_multiplication}, we have
\begin{equation*}
    \sum_{m = 0}^n A_m B_{n-m} = 
    \begin{cases}
        I, & n = 0 ,\\
        0, & n > 0.
    \end{cases}
\end{equation*}
Assuming $A_0$ is invertible, an iterative method for finding $B_n$ is
\begin{equation}
\label{eq:ps_inverse}
    B_n = \begin{cases}
        A_0^{-1}, & n = 0, \\
        - B_0 \left( \sum_{m=1}^n A_m B_{n-m} \right), & n > 0.
    \end{cases}
\end{equation}

\subsection{\texorpdfstring{$\rho$}{ρ} Derivatives}
\label{subsec:rho-derivatives}
Let
\begin{equation*}
    B = \dd{A}{\rho}. 
\end{equation*}
Then, we have
\begin{equation*}
    B = \sum_{n = 0}^{\infty} n A_n \rho^{n-1},
\end{equation*}
or
\begin{equation}
\label{eq:ps_epsilon_derivative}
    B_n = (n+1) A_{n+1}.
\end{equation}

\subsection{Exponentiation}
\label{subsec:exponentiation}
A more complicated series operation is scalar exponentiation \citep[see][]{knuth_art_1997}. We would like to find
\begin{equation*}
    B = e^{\alpha A}.
\end{equation*}
Taking a derivative with respect to $\rho$, we find
\begin{align*}
    \dd{B}{\rho} &= \dd{e^{\alpha A}}{A} \dd{A}{\rho}, \\
    &= \alpha e^{\alpha A} \dd{A}{\rho}, \\
    &= \alpha B \dd{A}{\rho}.
\end{align*}
Now, we can write out the multiplication using \eqref{eq:ps_multiplication} and \eqref{eq:ps_epsilon_derivative}, giving
\begin{equation*}
    (n+1) B_{n+1} = \alpha \sum_{m = 0}^{n}(m+1)A_{m+1}B_{n-m},
\end{equation*}
where $B_0 = e^{\alpha A_0}$. From this, we have
\begin{equation}
\label{eq:ps_exponentiation}
    B_n = \begin{cases}
        e^{\alpha A_0}, & n = 0, \\ 
        \frac{\alpha}{n} \sum_{m=0}^{n-1}(m+1)A_{m+1}B_{n-1-m}, & n > 0.
    \end{cases}
\end{equation}

Note that through exponentiation and inversion, we can obtain any trigonometric or hyperbolic trigonometric function. For instance, we have 
\begin{equation*}
    \sin A = \Im e^{\I A},
\end{equation*}
and 
\begin{equation*}
    \sech A = 2 \left( e^A + e^{-A}\right)^{-1}.
\end{equation*}

As a more detailed example, consider the problem of simultaneously computing $\sin(A)$ and $\cos(A)$. Using $\alpha = i \omega$ in \eqref{eq:ps_exponentiation}, we find
\begin{equation*}
    \cos(\omega A)_n + \I \sin(\omega A)_n = \begin{cases}
        \cos(\omega A_0) + \I \sin(\omega A_0), & n = 0, \\ \\ 
        - \sum_{m=0}^{n-1}\frac{\omega(m+1)}{n} A_{m+1}\sin(A)_{n-1-m} & \\
        \ \ \ \ + \, \I \sum_{m=0}^{n-1}\frac{\omega(m+1)}{n}A_{m+1}\cos(A)_{n-1-m}, & n > 0.
    \end{cases}
\end{equation*}
From this, we can work with only real-valued series via the formulas
\begin{align*}
    \cos(A)_n &= \begin{cases}
        \cos(A_0), & n = 0, \\
        - \sum_{m=0}^{n-1}\frac{\omega(m+1)}{n} A_{m+1}\sin(A)_{n-1-m}, & n > 0,
    \end{cases} \\
    \sin(A)_n &= \begin{cases}
        \sin(A_0), & n = 0, \\
        \sum_{m=0}^{n-1}\frac{\omega(m+1)}{n}A_{m+1}\cos(A)_{n-1-m}, & n > 0.
    \end{cases}
\end{align*}

\subsection{Power}
\label{subsec:power}
Now, we take a power $\alpha \neq 1$ of a series:
\begin{equation*}
    B = A^{\alpha}.
\end{equation*}
Assuming $A_0 \neq 0$, we have $B_0 = A_0^\alpha$ at leading order. 
At higher orders, we take the derivative with respect to $\rho$ to find
\begin{equation*}
    \dd{B}{\rho} = \alpha A^{\alpha-1} \dd{A}{\rho}.
\end{equation*}
Multiplying both sides against $A$, we find
\begin{equation*}
    A \dd{B}{\rho} - \alpha B \dd{A}{\rho} = 0.
\end{equation*}
Term by term, we have
\begin{equation*}
    \sum_{j = 0}^{n-1} (j+1) A_{n-1-j} B_{j+1} - \alpha (j+1) B_{n-1-j} A_{j+1} = 0.
\end{equation*}
This gives 
\begin{equation*}
    B_{n} = \frac{1}{nA_0}\left(n \alpha A_n B_0 + \sum_{j=0}^{n-2} (j+1) (-A_{n-1-j} B_{j+1} + \alpha B_{n-1-j} A_{j+1}) \right).
\end{equation*}
By substituting $j \to n-j-2$, we reorder the right sum to find
\begin{equation*}
    B_{n} = \frac{1}{nA_0}\left(n \alpha A_{n}B_0 + \sum_{j=0}^{n-2} (\alpha(n-j-1) - (j+1)) B_{j+1} A_{n-j-1} \right).
\end{equation*}

\subsection{Composition}
\label{subsec:composition}
Series composition is an important operation for changing coordinates.
Consider that $A = A(x,y,s)$, where $x=R \cos \Theta$ and $y = R \sin \Theta$.
We would like to compose this with the functions $B(\rho,\theta)$ and $C(\rho,\theta)$ as $A(B,C,s)$, where $B$ and $C$ replace the Cartesian cross-section coordinates $x$ and $y$. 
We note that this is preferable to composing with the polar coordinates $R$ and $\Theta$, as there are valid non-analytic transformations in these coordinates that keep $A$ analytic. 
For this transformation, we assume that $B_0 = C_0 = 0$, i.e.~there is no constant offset. 
We present the details for numerical Fourier series composition \eqref{eq:real-fourier-form}, but equivalent expressions could be used for other forms.

The main observation we make for series composition is that if we can compute the basis functions
\begin{equation}
\label{eq:Phi-basis}
    \Phi_{mn}(\rho,\theta,s) = R(\rho,\theta,s)^m \Fcal_{2n-m}(\Theta(\rho,\theta,s)),
\end{equation}
then the composition is simply
\begin{equation*}
    A(B,C,s) = \sum_{m=0}^{\infty} \sum_{n=0}^m A_{mn}(s) \Phi_{mn}(\rho,\theta,s),
\end{equation*}
where the multiplication in $s$ can be performed in spatial coordinates.
So, the majority of the work is to find $\Phi_{mn}$, with the added perk that once the basis is found, further series compositions are faster.

To find the basis, we first notice that
\begin{equation*}
    \Phi_{00} = 1, \qquad \Phi_{10} = R\cos\Theta = C, \qquad \Phi_{11} = R\sin\Theta = B.
\end{equation*} 
Then, further functions can be found by using angle-sum identities in the cosine case ($m/2 \leq n \leq m+1$)
\begin{align*}
    \Phi_{m+1,n} &= R^{m+1}\cos((2n-m-1)\Theta),\\
        &= R^{m+1} \cos((2(n-1)-m)\Theta)\cos(\Theta) - R^{m+1} \sin((2(n-1)-m)\Theta) \sin(\Theta),\\
        &= \Phi_{m,n-1} \Phi_{11} - \Phi_{m,m-n+1} \Phi_{10},
\end{align*}
and the sine case ($0\leq n<m/2$)
\begin{align*}
    \Phi_{m+1,n} &= R^{m+1}\sin((m + 1 - 2n)\Theta),\\
    &= R^{m+1}\sin((m-2n)\Theta)\cos(\Theta) + \cos((m-2n)\Theta)\sin(\Theta), \\
    &= \Phi_{mn}\Phi_{11} + \Phi_{m,m-n}\Phi_{10}.
\end{align*}

\subsection{Series Inversion}
\label{subsec:series-inversion}
Consider that we know the (Cartesian) flux coordinates $(B^{(1)}(x,y,s),B^{(2)}(x,y,s)) = (X,Y) = (R \cos \Theta, R \sin \Theta)$ and we want to know how to represent a function $A(\rho,\theta,s)$ in terms of $R$ and $\Theta$.
For this, we would use the series composition step presented in the previous section, but we need to obtain $\rho$ and $\theta$ in terms of $R$ and $\Theta$, i.e. we need $(C^{(1)}(X,Y,s), C^{(2)}(X,Y,s)) = (x,y) = (\rho \cos\theta, \rho \sin\theta)$, where 
\begin{equation}
\label{eq:series-inversion}
    \begin{pmatrix}
        C^{(1)}(B^{(1)},B^{(2)},s) \\ C^{(2)}(B^{(1)},B^{(2)},s)
    \end{pmatrix} = \begin{pmatrix}
        x \\ y
    \end{pmatrix}.
\end{equation}
This is a step that is necessary if we have a series represented in direct coordinates, and we want to represent it in indirect (flux) coordinates.

We will solve this equation iteratively, assuming that we are using the real Fourier form \eqref{eq:real-fourier-form}. At leading order (assuming $C^{(i)}_0 = B^{(i)}_0 = 0$), we have
\begin{equation*}
    \begin{pmatrix}
        C^{(1)}_{11} & C^{(1)}_{10} \\ C^{(2)}_{11} & C^{(2)}_{10}
    \end{pmatrix}
    \begin{pmatrix}
        B^{(1)}_{11} & B^{(1)}_{10} \\ B^{(2)}_{11} & B^{(2)}_{10}
    \end{pmatrix} \begin{pmatrix}
        x \\ y
    \end{pmatrix} = \begin{pmatrix}
        x \\ y
    \end{pmatrix},
\end{equation*}
or
\begin{equation*}
    \begin{pmatrix}
        C^{(1)}_{11} & C^{(1)}_{10} \\ C^{(2)}_{11} & C^{(2)}_{10}
    \end{pmatrix} = \begin{pmatrix}
        B^{(1)}_{11} & B^{(1)}_{10} \\ B^{(2)}_{11} & B^{(2)}_{10}
    \end{pmatrix}^{-1}.
\end{equation*}

For the next orders, we note that we can compute $C^{(i)}_{<m+1}(B^{(1)}, B^{(2)}, s)$ using the composition formula, where we have chosen $C_{m}$ so this is identity up to order $m+1$.
Substituting this into \eqref{eq:series-inversion}, we have
\begin{equation*}
    \begin{pmatrix}
        C^{(1)}_{m+1}(\rho B^{(1)}_1, \rho B^{(2)}_1, s) \\ C^{(2)}_{m+1}(\rho B^{(1)}_1, \rho B^{(2)}_1, s)
    \end{pmatrix} = - \begin{pmatrix}
        (C^{(1)}_{<m+1}(B^{(1)}, B^{(2)}, s))_{m+1} \\ (C^{(2)}_{<m+1}(B^{(1)}, B^{(2)}, s))_{m+1}
    \end{pmatrix}.
\end{equation*}
If we build a basis $\Phi_{mn}$ from $B^{(i)}_1$ (see \eqref{eq:Phi-basis} and the following procedure) where
\begin{equation*}
    \Phi_{mn} = \sum_{k=0}^{\infty} \sum_{\ell=0}^k \Phi_{mnkl} \rho^{k} \Fcal_{k-2\ell}(\theta),
\end{equation*}
the update can be expanded in index-notation as
\begin{equation*}
    \sum_{\ell=0}^{m+1} C^{(i)}_{m+1,\ell}\Phi_{m+1,\ell,m+1,n} = -(C^{(1)}_{<m+1}(B^{(1)}, B^{(2)}, s))_{m+1,n}.
\end{equation*}
Then, by inverting the matrix $\Psi_{\ell n} = \Phi_{m+1,\ell,m+1,n}$ onto the right-hand-side, we have the update for $C^{(i)}_{m+1}$.

\bibliographystyle{jpp}

\bibliography{references}

\end{document}